\documentclass[aps, prx, reprint, superscriptaddress, notitlepage]{revtex4-2}
\usepackage{wrapfig}

\pdfoutput=1

\usepackage{header}
\begin{document}

\title{Dissipative continuation for ground-state preparation at chemical transition states}

\author{Thomas W. Watts}
\email{thomas.watts@student.uts.edu.au}
\affiliation{HRL Laboratories, LLC, Malibu, CA, USA}
\affiliation{Centre for Quantum Software and Information, %
School of Computer Science, Faculty of Engineering \& Information Technology, University of Technology Sydney, NSW 2007, Australia}

\author{Soumya Sarkar}
\email{soumya.sarkar@student.uts.edu.au}
\affiliation{Centre for Quantum Software and Information, %
School of Computer Science, Faculty of Engineering \& Information Technology, University of Technology Sydney, NSW 2007, Australia}

\author{Daniel Collins}
\affiliation{Centre for Quantum Software and Information, %
School of Computer Science, Faculty of Engineering \& Information Technology, University of Technology Sydney, NSW 2007, Australia}

\author{Nam Nguyen}
\affiliation{Applied Mathematics, Boeing Research \& Technology, Huntington Beach, USA}

\author{Luke Quezada}
\affiliation{Media Arts + Practice Division, University of Southern California, Los Angeles, CA 90089, USA}

\author{Michael J. Bremner}
\affiliation{Centre for Quantum Software and Information, %
School of Computer Science, Faculty of Engineering \& Information Technology, University of Technology Sydney, NSW 2007, Australia}

\author{Samuel J. Elman}
\email{samuel.elman@uts.edu.au}
\affiliation{Centre for Quantum Software and Information, %
School of Computer Science, Faculty of Engineering \& Information Technology, University of Technology Sydney, NSW 2007, Australia}

\begin{abstract}
    Simulating chemical reactions exhibits a pronounced unevenness in computational difficulty: while equilibrium reactant and product geometries are often tractable, transition-state (TS) geometries frequently display strong multi-reference character that challenges both classical solvers and coherent quantum state-preparation methods. We introduce a dissipative continuation protocol for preparing electronic ground states near TS geometries within a hybrid classical--quantum workflow. In the intended setting, classical electronic-structure methods supply an approximate TS geometry, a computationally motivated continuation path, and a locally compatible active-space representation along that path. Starting from a warm start at a tractable geometry on the same aligned path, the quantum routine transports the state toward the TS using orbital-gauge-aligned Hamiltonians and engineered dissipative cooling primitives that repeatedly contract population into the instantaneous low-energy sector.
    We prove that, for continuation paths satisfying a Lipschitz smoothness condition and a localized Eigenstate Thermalization Hypothesis (ETH)-motivated downward-drift condition within the relevant energy window, the ground state at the target geometry can be prepared to total energy error $\epsilon_E$ with total ideal cooling-step complexity $\widetilde{O}(C_{\mathrm{DK}}^2 N_o^2 / \epsilon_E)$. Equivalently, for an intensive target error $\epsilon_{\rm int}\coloneqq \epsilon_E/N_o$, the ideal cooling-step complexity is $\widetilde{O}(C_{\mathrm{DK}}^2 N_o / \epsilon_{\rm int})$. Here $C_{\mathrm{DK}}$ quantifies ground-state rotation along the aligned path. The corresponding logical gate count is obtained by multiplying this primitive count by the cost of implementing one dissipative step, which is polynomial in the block-encoding size of the Lindbladian under standard Lindblad-simulation algorithms. This identifies a structured regime in which dissipative continuation provides a conditional route to ground-state preparation at strongly correlated TS geometries.
\end{abstract}
\maketitle

\section{Introduction}

\begin{figure*}[ht]
    \centering
    \includegraphics[width=.99\textwidth]{figures/OS_final_feb_2_tryptich.pdf}
    \caption{\textbf{(a)} The transition digraph associated with the  Lindbladian-induced Markov process $P$ depicting the longest canonical path between an excited state and the ground state; this path length provides a diagnostic of downhill connectivity for the displayed kernel, rather than by itself proving rapid mixing. The underlying transition graph is a numerical example of H$_4$ molecule at its multi-reference square geometry (see \cref{fig:h4_pes,fig:h4_numerical}); the transition graph has the same number of nodes as the FCI dimension $D = \binom{2N_o}{N_e} = 70$ for a $\text{CAS}(4,4)$ active space in the STO-3G orbital basis. The values above each arrow are the transition probabilities from $|E_j\rangle \to |E_k\rangle$, $P_{j \to k}$ along the longest path; in this case the longest path is $|E_{D-1}\rangle \to |E_{0}\rangle$ with length $\ell = 4$ whose transition probabilities are bounded below by $P_{j \to k} \geq 0.01$. \textbf{(b)} An example of a Ru-based carbon-capture catalyst reaction VIII $\to$ VIII–IX (transition state) $\to$ IX of the multi-step reaction mechanism for CO$_2$ reduction~\cite{vonBurg2021}; the 2D potential energy surface (PES) is representative of a more complex, multi-dimensional mechanism underlying the reaction. \textbf{(c)} The logical quantum circuit for implementing the \textit{purely dissipative} map $e^{t \mathcal{L}}$ that sends the approximate ground state $\rho_i$ and geometry $\mathbf{R}(s_i)$ to the next ground state along the reaction path $\rho_{i+1}$ where circuit $U_H$ denotes the block encoding of a chemical Hamiltonian which is used to implement the logical circuit described in~\cref{sec:logical_circuit}.}
    \label{fig:main_fig}
\end{figure*}

Accurate electronic ground states along a reaction path are essential for predicting chemical reactivity and thermodynamic observables. On a typical potential energy surface (PES), reactants and products correspond to metastable basins of attraction: collections of nuclear geometries that relax under energy minimization to distinct local minima representing chemically identifiable species, and are often well approximated by tensor-network solvers and other classical methods. In contrast, transition-state (TS) geometries lie near the dividing surface between these basins, where bonds are formed and broken and electron--electron correlations become strong. In this regime, classical solvers are often unreliable or prohibitively expensive \cite{vonBurg2021,thomas2015, kanungo2023}. As a result, reaction paths typically exhibit a highly uneven computational profile, with most geometries being classically tractable while the TS region forms a computational bottleneck.

This bottleneck is consequential. Chemical accuracy ($\sim 1~\mathrm{kcal/mol} \approx 1.6~\mathrm{mHa}$) sets the bar for the performance---and thereby the quantum resources---required to determine ground-state energies along a continuation path. In chemical kinetics, the \textit{rate constant} $k(T)$ is the temperature-dependent proportionality factor that relates reactant concentrations to the reaction rate; it provides a direct, quantitative link between microscopic energetics and macroscopic observables such as reaction timescales, selectivity, and temperature dependence. Because $k(T)$ is exponentially sensitive to the activation barrier, quantitative kinetics requires barrier heights at approximately chemical accuracy. In particular, within an Arrhenius or transition-state approximation,
\begin{equation}
    k(T) \propto \exp\left(-\frac{\Delta E^\ddagger}{k_B T}\right),
\end{equation}
where $\Delta E^\ddagger = E_{\mathrm{TS}}-E_{\mathrm{R}}$ is the activation barrier between the reactant minimum and the transition state. At room temperature, $k_B T \approx 0.6~\mathrm{kcal/mol}$, so an error of $1~\mathrm{kcal/mol}$ in $\Delta E^\ddagger$ changes the predicted rate by a factor of $\exp(1/0.6)\approx 5$. This motivates the usual ``chemical accuracy'' target: relative energies---especially barrier heights---should be accurate to $\sim 1~\mathrm{kcal/mol}\approx 1.6~\mathrm{mHa}$.

Quantum computers offer the potential to provide high-accuracy approximations to ground states in strongly correlated regimes through algorithms such as quantum phase estimation (QPE)~\cite{Kitaev1995AbelianStabilizer, ge2019faster, mande2023tight, AspuruGuzik2005MolecularEnergies, Whitfield2011ElectronicStructure, OMalley2016PRXQPE, Kassal2009PropertiesGeometry, Kassal2008ChemicalDynamics, Lanyon2010NatChemH2} and quantum imaginary time evolution (QITE)~\cite{Motta2020-qite}. Recent progress on engineered open-system protocols has also enabled thermal and ground-state preparation for chemical Hamiltonians~\cite{ding2024single,ding2025End}. In particular, \citet{ding2024single} provides detailed analysis of cooling subroutines that we build on here by applying them to transition-state chemistry and modifying the family of filter functions used. Across these approaches, however, performance is constrained by the availability of a useful initial state, or ``warm start,'' with nontrivial overlap with the target ground state~\cite{Cade2023, Gharibian2023, Waite2025Physical, Chen2025}.

A standard fault-tolerant quantum chemistry workflow determines approximate geometries and active spaces using classical electronic-structure methods, and then applies QPE or related routines to a classically computed initial state to refine electronic energies at selected geometries~\cite{Reiher_2017}. Near strongly correlated TS geometries, however, a classically prepared ansatz optimized at a fixed geometry may have very small overlap with the true ground state. Direct QPE-based preparation can then incur an expected overhead scaling as $1/\sqrt{p_0}$ when amplitude amplification is used, or $1/p_0$ under plain repetition, where $p_0=|\langle\phi|E_0\rangle|^2$ is the squared overlap between the input state $|\phi\rangle$ and the target ground state $|E_0\rangle$. This motivates path-following strategies that maintain overlap by transporting a warm start through locally smooth Hamiltonian changes, rather than preparing the TS ground state independently from scratch.

In many organic reaction workflows, DFT and related approximate methods are effective tools for locating stationary points and obtaining useful reaction mechanisms, and suitably chosen functionals can also give accurate geometries for many transition-metal complexes~\cite{Gruden2017DFTBOrganicBarriers,Liu2025DFTKinetics,Vermeeren2022PericyclicDFT,Buhl2006TMGeometries,ZhaoTruhlar2008M06Suite}. However, this classical preprocessing step is not assumed to be infallible. Transition-metal systems, spin-state crossings, and strongly correlated active spaces remain sensitive to functional choice, orbital optimization, and electron correlation~\cite{Nipanutiyan2026SpinCrossingDFT,Tzeli2024Fe2S2Correlation}. The theorem below therefore assumes a chemically validated continuation path and active-space representation as input, rather than claiming to solve transition-state search or active-space selection.

In this work, we present \textit{dissipative continuation}, a quantum state-preparation primitive for strongly correlated TS geometries within a hybrid classical--quantum reaction workflow. The classical part of the workflow supplies an approximate TS geometry, a chemically meaningful continuation path, and a locally compatible active-space representation along that path; the quantum part uses this information to transport a warm start from a tractable geometry toward the TS by applying dissipative cooling primitives at successive geometries. The continuation path need not be unique: it may be chosen as a segment of an intrinsic reaction coordinate (IRC) or another minimum-energy path (MEP), a converged string-method or nudged-elastic-band (NEB) approximation to an MEP, a one-sided downhill path from a validated transition state toward the reactant or product basin, or, more generally, any chemically justified smooth path through the relevant region of configuration space. The flexibility lies in choosing a continuation path and orbital gauge that make the state-preparation problem well conditioned.

The intended workflow is therefore:
(i) use classical electronic-structure methods to identify a TS candidate or chemically meaningful continuation path;
(ii) construct a TS-compatible active-space envelope or sequence of corresponding active spaces along that path;
(iii) align the residual orbital gauge, for example by Procrustes rotations, while preserving the selected active-space content;
(iv) choose a tractable geometry on the aligned path where a warm start can be prepared; and
(v) run dissipative continuation forward along the same aligned representation toward the TS.
Step (v) is the quantum state-preparation algorithm analyzed in this work; the remaining steps constitute classical preprocessing and validation inherited from standard computational-chemistry workflows. In particular, we do not assume that an active space chosen at the reactant geometry remains valid and unchanged through the TS: the compatible representation is part of the classically supplied input to the continuation procedure.

Our main result quantifies the resulting advantage. When the Hamiltonians along the selected continuation path admit a locally compatible active-space representation and the ground-state projector varies smoothly in that representation, TS ground states can be prepared with polynomial overhead relative to the cost of preparing a warm start at a tractable geometry on the same path. The relevant smoothness is quantified by a Davis--Kahan path constant $C_\text{DK}$, which measures how rapidly the ground-state subspace rotates along the aligned Hamiltonian path. For total energy error $\epsilon_E$, the ideal dissipative cooling-step count scales as $\widetilde{O}(C_\text{DK}^2 N_o^2/\epsilon_E)$. Equivalently, for intensive energy error $\epsilon_{\rm int}\coloneqq \epsilon_E/N_o$, this is $\widetilde{O}(C_\text{DK}^2 N_o/\epsilon_{\rm int})$, and under the coarse estimate $C_{\mathrm{DK}}=O(N_o/\Delta_{\min})$ it becomes $\widetilde{O}(N_o^3/(\Delta_{\min}^2\epsilon_{\rm int}))$ ideal cooling steps, before multiplying by the per-step Lindbladian implementation cost. This avoids the pointwise-preparation bottleneck in regimes where $p_0$ becomes exponentially small in the multireference TS window, with a cost polynomial in system size, yielding a conditional advantage whenever a smooth compatible path exists.

Although our protocol transports a state along a continuation path, it should not be viewed as a dissipative analogue of adiabatic continuation. Adiabatic approaches rely on coherent evolution and require suppressing diabatic transitions globally along the path, leading to worst-case runtime bounds governed by small spectral gaps. In contrast, our method performs a sequence of local dissipative stabilization steps that repeatedly re-concentrate population into the instantaneous low-energy manifold. The role of the continuation path is therefore not to define a slow physical evolution, but to ensure geometric smoothness so that each cooling stage begins from a warm start with bounded excess energy. Algorithmic efficiency arises from mixing-time properties of the engineered Lindbladian rather than from adiabatic gap protection.

There are two distinct sources of error in this workflow: error in the electronic energy at a specified geometry, and error in the location of the TS geometry itself. The present algorithm addresses the first; we do not claim to replace classical transition-state search, reaction-path validation, or active-space selection. A complete quantum-assisted TS refinement algorithm---which would combine our state preparation with quantum-computed nuclear forces for iterative geometry updates~\cite{OBrien2019EnergyDerivatives,OBrien2022MolecularForces}---is beyond the present scope; our theorem concerns state preparation and energy refinement for a specified reaction channel.

Within this scope, the approach targets the uneven computational profile of chemistry, and is most beneficial when (a) some region of the continuation path admits reliable classical or hybrid preparation, (b) the TS window is strongly multireference so that pointwise overlaps can become extremely small, (c) the selected path admits a locally compatible active-space representation with gradually rotating low-energy projectors, and (d) the dissipative primitive mixes rapidly in the relevant energy window. In this overlap-limited regime, dissipative continuation replaces independent pointwise preparation attempts with a sequence of locally stabilized warm-start updates, yielding polynomial overhead beyond warm-start preparation under the assumptions stated below~\cite{Lee_2023}. Conversely, if a globally good guiding state exists at all geometries, if no compatible active-space representation can be constructed along the selected continuation path, or if the path crosses an abrupt electronic rearrangement whose sharp wavefunction changes violate the smoothness and rapid-mixing conditions, then pointwise methods, coherent transport, a larger active-space envelope, or a different continuation path may be preferable.

Our approach is most directly related to the recent warm-start strategy of \citet{puig2026warmstartscoldstates}, which also tracks a discretized Hamiltonian deformation along a path through configuration space, using the solution at each Hamiltonian to initialize the next. The key distinction is mechanistic: their path tracking applies \emph{variational} ground-state preparation at each step and argues that such tracking improves trainability when the path avoids gap closings, whereas we apply an engineered \emph{dissipative} cooling primitive whose efficiency derives from Lindbladian mixing times rather than from variational trainability or adiabatic gap protection. More broadly, our work complements recent progress on dissipative, thermal, open-system, and hybrid strategies for chemical state preparation along continuation coordinates, including constructions that improve warm starts, enable controlled access to excited manifolds, and implement Lindbladian or beyond-unitary dynamics for quantum chemistry~\cite{ding2025End,AminLoveTruncik2008ThermallyAssisted,ZhouWu2026EngineeredDissipation,Zhan2026RapidDissipative,Lin2025DissipativeReview,LiLin2025ExcitedStates,Marthaler2026BeyondUnitary}. In this emerging picture, equilibrium-quality classical or hybrid preparation supplies high-fidelity states at tractable geometries, while path-based dissipative continuation provides a mechanism to carry those warm starts through the TS bottleneck where classical guiding states are least reliable.

The remainder of the paper is organized as follows. In~\Cref{sec:reaction-path-theorem}, we identify the regime in which dissipative continuation requires $O(\mathrm{poly}(N_o))$ quantum resources, and show how continuation-path conditioning can reduce the Davis--Kahan smoothness cost without replacing classical transition-state search. In~\cref{sec:resource-estimates} we provide an implementation of our algorithm, {we present a set of numerical models to validate the algorithm and provide evidence for the ubiquity of the ETH-like regime}, and present resource estimates associated with running our algorithm on real-world chemical problems for fault-tolerant quantum computers. In~\cref{sec:discussion}, we discuss how our algorithm fits into the ongoing pursuit of developing practical fault tolerant quantum algorithms for studying chemical reactions. In Appendices~\ref{sec:section1} and~\ref{sec:thm2}, we provide proofs of theorems introduced herein, in addition to the methodology behind our resource estimation for practical quantum circuit implementations of our algorithm in Appendices~\ref{sec:resource_estimation_method} and~\ref{sec:logical_circuit}. Additionally, we establish our algorithm within the broader landscape of ground-state preparation methods through a detailed comparative analysis in Sections~\ref{sec:das_vs_diss} and~\ref{sec:pr_dc_vs_dissip}.

\section{Dissipative Continuation}
\label{sec:reaction-path-theorem}

In this section, we present \textit{dissipative continuation}, and the conditions under which it may be used to prepare the ground state of a chemical Hamiltonian with a target geometry.
The efficiency proof has two parts: a smoothness bound that controls how the ground state changes as the geometry is varied along the path, and a Markov-process bound that controls how many cooling steps are needed at each discretized geometry.

\subsection{Definitions and Assumptions}
We begin by introducing the key definitions we require to analyze the complexity of the dissipative continuation algorithm.
Our analysis is formulated for a one-dimensional continuation path $\mathbf{R}:[0,1]\to \mathbb{R}^{3N_{\mathrm{nuc}}}$ through nuclear-configuration space. The continuation path is supplied by the surrounding classical workflow and may be a segment of an IRC or another MEP, including a discrete MEP approximation obtained using the string method or NEB. We do not require that $s=0$ and $s=1$ correspond to reactant and product minima. Instead, we choose the parameterization so that $s=s^*$ denotes the target TS geometry, while $s=0$ denotes a tractable geometry on the same aligned continuation path from which a high-quality warm start is available. The transition-state geometry $\mathbf{R}_{\mathrm{TS}}\coloneqq \mathbf{R}(s^*)$ is supplied or refined by classical electronic-structure methods and corresponds, in the idealized setting, to a first-order saddle point of the PES (\cref{fig:main_fig}b).

\begin{definition}[Continuation-path Hamiltonian]
    Let ${\mathbf{R}:[0,1]\to \mathbb{R}^{3N_{\mathrm{nuc}}}}$ be a continuation path and define ${H(s) := H_{\mathrm{el}}(\mathbf R(s))}$.
    Let $\{\ket{E_j(s)},E_j(s)\}_j$ denote the energy-ordered eigenstates and eigenvalues of $H(s)$, with unique ground state $\ket{E_0(s)}$, and spectral gap at $s$ ${\Delta(s) =  E_1(s)-E_0(s)>0}$.
    We denote by $\Delta_{\min}$ the minimum spectral gap over the selected continuation path.
\end{definition}

Although the map $\mathbf{R}\mapsto H_{\rm el}(\mathbf{R})$ is smooth in the nuclear coordinates, its second-quantized representation can exhibit apparent discontinuities if orbitals are chosen independently at each geometry \cite{BensbergReiher2023CorrespondingActiveSpaces}. We separate two issues. First, a chemically appropriate active-space envelope, or a sequence of corresponding active spaces, must be selected along the continuation path. Automated and corresponding active-space selection methods along continuation coordinates provide a practical means of constructing such compatible active spaces~\cite{SteinReiher2016AutoCAS,BensbergReiher2023CorrespondingActiveSpaces}. Second, after the relevant active spaces have been embedded in a common representation, arbitrary orbital-gauge discontinuities are removed by the Procrustes alignment procedure described by~\cref{prop:gs-overlap-principal-angles}. The smoothness assumptions below are imposed on the aligned family of embedded Hamiltonians, rather than on an active space chosen at the reactant geometry and held fixed along the continuation path. We therefore treat active-space compatibility as an explicit input condition, not as an automatic consequence of orbital alignment. If this condition fails---for example, because the relevant active subspaces cannot be matched continuously along the path or because a compatible active-space envelope becomes prohibitively large---our efficient preparation guarantee need not hold.

For this aligned, locally compatible family of embedded Hamiltonians, we impose the following Lipschitz continuity condition:
\begin{definition}[Lipschitz continuation path]
\label{def:regular-path}
    Let
    \[
H(s)=\sum_{\ell=1}^{M} h_\ell(s),
\]
be an aligned, locally compatible family of embedded Hamiltonians with a fixed and consistently indexed decomposition into $M$ fragments. A continuation path is said to be \textit{Lipschitz continuous} if, in the aligned and locally compatible representation, for all $\ell\in [M]$ and all $s,s'\in[0,1]$,
    \begin{equation}
        \norm{h_{\ell}(s)-h_{\ell}(s')}\le L_h|s-s'|,
        \label{eq:lipschitz-fragments-main}
    \end{equation}
    where $L_h>0$ is the associated Lipschitz constant.
\end{definition}

In particular, the smoothness condition enforces the following bound ${\| H(s) - H(s') \| \le M L_h |s-s'|}$. To relate this operator smoothness to the motion of the ground state along the continuation path, we additionally assume the Hamiltonian fragments are piecewise differentiable in the chosen local representation, i.e., each fragment $h_\ell(s)$ is piecewise $C^1$. Then 
\begin{equation}
    \label{eq:fragment_bound}
    \| \partial_s H(s)   \| \le \sum_{\ell=1}^{M} \| \partial_s h_\ell (s) \|.
\end{equation}
To quantify the smoothness of a continuation-path Hamiltonian in the aligned representation, we introduce the Davis--Kahan constant:
\begin{definition}[Davis--Kahan constant]
    \label{def:dk_constant}
       Let $\{H(s)\}_{s\in[0,1]}$ be the aligned, locally compatible Hamiltonians associated with a Lipschitz continuation path as in~\cref{def:regular-path}. The associated geometric Davis--Kahan constant is defined as
        \begin{equation}
        C_{\mathrm{DK}}\coloneqq\int_0^1\frac{\|\partial_s H(s)\|}{\Delta(s)} \, \mathrm{d}s.
            \label{eq:C_DK-def-appendix}
        \end{equation}
\end{definition}

The Davis--Kahan constant $C_\text{DK}$ is analogous to the ``adiabatic condition'' appearing in adiabatic ground-state preparation~\cite{Boixo2009Eigenpath,Berry2025Large}, although here it is used as a geometric rather than dynamical quantity. The local ratio $\|\partial_sH(s)\|/\Delta(s)$ controls how rapidly the instantaneous ground-state projector $P_0(s) = \ketbra{E_0(s)}$ can vary with $s$, and $C_{\mathrm{DK}}$ bounds its cumulative rotation along the continuation path. Consequently, if $C_{\mathrm{DK}}$ is small, the ground state does not rotate too rapidly along the path. In particular,
\begin{equation}
C_{\mathrm{DK}}
\leq
\frac{1}{\Delta_{\min}}
\sum_{\ell=1}^{M}
\int_0^1
\|\partial_s h_\ell(s)\|
\,\mathrm{d}s
\leq
\frac{ML_h}{\Delta_{\min}}.
\end{equation}
Thus, when $M=O(N_o)$ and $L_h=O(1)$, one obtains the coarse scaling $C_{\mathrm{DK}}=O(N_o/\Delta_{\min})$ used below.

Given the freedom to choose the continuation path, it can be optimized to reduce the algorithmic runtime. To make this optimization precise, we require a geometric measure of how sensitively the electronic Hamiltonian responds to infinitesimal changes in the nuclear coordinates. This sensitivity is captured by the operator-Lipschitz constant.

\begin{definition}[Operator-Lipschitz constant]
\label{def:op_lipschitz}
    Let $\mathbf{R}:[0,1]\to\mathbb{R}^{3N}$ be a $C^1$ continuation path with Fr\'echet derivative $D_{\mathbf{R}}H_{\mathrm{el}}(\mathbf{R}) : \mathbb{R}^{3N}\to \mathsf{Herm}$. Define the pointwise operator norm of the Fr\'echet derivative
    \[ L_H(\mathbf{R}) := \sup_{\|v\|=1}\bigl\|D_{\mathbf{R}}H_{\mathrm{el}}(\mathbf{R})[v]\bigr\|,\]
    which is bounded by an operator-Lipschitz constant $L_{\max}$,
    \[
        L_{\max} := \sup_{s\in[0,1]} L_H(\mathbf{R}(s)).
    \]
\end{definition}

Let $\rho$ be a density operator. In the instantaneous energy eigenbasis ${\ket{ E_i(s)}_i}$, it induces a population distribution
\[
    \mu(i)=\bra{E_i(s)}\rho\ket{E_i(s)}.
\]
Fix an energy-layer width $\Delta E_{\text{layer}}>0$, and define the layer index of eigenstate $j$ by
\begin{equation}
    L(j)\coloneqq \left\lceil\frac{E_j-E_0}{\Delta E_{\text{layer}}}\right\rceil,
    \label{eq:layer_index}
\end{equation}
which partitions the spectrum into layers of equal energy thickness. We say that $\mu$ is $\Delta E_{\text{layer}}$-concentrated if there exist $L_\mu\in\mathbb{N}$ and $\eta\in(0,1)$ such that
\begin{equation}
    \Pr_{j\sim\mu}(L(j)>L_\mu)\le\eta.
\end{equation}
With these definitions, the population dynamics induced by a dissipative primitive can be reduced to a classical Markov chain $(J_t)_{t\ge 0}$ on the eigenstate labels.
Following~\cite{ding2024single, LevinPeresWilmer2009}, the transition kernel is a row-stochastic matrix $P$ mapping the pre-primitive distribution $\mu$ to the post-primitive distribution $\mu'$, i.e., $\mu'=\mu P$. For a formal definition, see~\cref{app:preliminaries}. The matrix $P$ can also be viewed as an edge-weighted directed graph (see~\cref{fig:main_fig}a).

We assume the ground state corresponds to the target state $0$, and define the hitting time of state $0$ (ground state) by 
\begin{equation}
    \label{eq:hitting_time}
    \tau_0 = \inf \{ t \in \mathbb{N}: J_t = 0 \}.
\end{equation}
This hitting time controls the number of dissipative steps required to reach the ground state up to a target failure probability $\varepsilon$. 
\begin{definition}[Dissipative Continuation Mixing time]
Formally, for $ 0 < \varepsilon < 1$, the (warm-start) mixing time is 
\begin{equation}
\label{eq:tmix}
    t_{\mathrm{mix}}(\varepsilon,\mu)
    \coloneqq
    \inf\Big\{t\in\mathbb{N}_0:\ \Pr_{\mu}(\tau_0>t)\le \varepsilon\Big\},
\end{equation}
\end{definition}

Equivalently, $t_{\mathrm{mix}}$ is the smallest number of dissipative steps sufficient to reach the ground state with probability at least $1 - \varepsilon$ when initialized from the warm-start distribution $\mu$ induced by $\rho$. Therefore, once we can obtain a bound on $t_{\mathrm{mix}}(\varepsilon,\mu)$ under the assumption above, we can provide a bound on the total number of dissipative steps required by the algorithm. Combined with the cost of implementing each step, this yields an overall runtime bound, which we will show in~\cref{subsec:efficientGSP} is polynomial in the problem size (i.e. $N_o$). In the next section we compare our approach to popular alternatives and provide logical resource estimates for implementing our dissipative state-preparation primitive.

\subsection{Efficient Ground State Preparation at Transition State Geometries}
\label{subsec:efficientGSP}

In this section, we demonstrate how dissipative continuation enables efficient quantum state preparation within the hybrid workflow described above. The classical part of the workflow supplies a target TS geometry, a continuation path, and a locally compatible active-space representation along that path. The quantum part begins at a tractable geometry on the same aligned path, where a high-quality warm start can be prepared classically or by a hybrid method, and transports this state toward the TS. At the transition-state geometry, $H(s^*)$ may be too strongly correlated for classical methods to provide a high-overlap state directly. Thus, our goal is to design a quantum algorithm that prepares a state $\rho$ such that
\[
    \bigl|\mathrm{tr}(H(s^*)\rho) - E_0(s^*)\bigr| \le \epsilon_E
\]
where $E_0(s^*)$ is the ground-state energy of $H(s^*)$ in the chosen locally compatible representation.

\begin{theorem}[Continuation-path ground-state preparation; informal]
    \label{thm:reaction-path-runtime_informal}
    Given a Lipschitz-continuous sequence of locally compatible, orbital-gauge-aligned Hamiltonians $H_{\rm el}(\mathbf R(s))$ with $N_o$ active spatial orbitals, a target TS geometry $s^*$, and a warm start at a tractable geometry on the same continuation path, the ground state at $s^*$ can be prepared with total energy error $\le \epsilon_E$ using
    \[
       \widetilde{O}\left( \frac{C_\text{DK}^2 N_o^2}{\epsilon_E} \right)
    \]
    ideal dissipative cooling steps. Equivalently, for intensive energy error $\epsilon_{\rm int}\coloneqq \epsilon_E/N_o$, the ideal step count is
    \[
       \widetilde{O}\left( \frac{C_\text{DK}^2 N_o}{\epsilon_{\rm int}} \right).
    \]
    Under the coarse estimate $C_{\mathrm{DK}}=O(N_o/\Delta_{\min})$, the intensive-error version becomes $\widetilde{O}(N_o^3/(\Delta_{\min}^2\epsilon_{\rm int}))$ ideal cooling steps, before multiplying by the per-step Lindbladian implementation cost.
\end{theorem}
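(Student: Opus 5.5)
The plan is to discretize the continuation path into $K$ segments $0=s_0<s_1<\dots<s_K=s^*$. At each node we run a bounded number of dissipative cooling steps. The per-node cost is controlled by the warm-start mixing time $t_{\mathrm{mix}}(\varepsilon,\mu)$, and the number of nodes by $C_{\mathrm{DK}}$. Concretely, we choose the nodes adaptively so that each segment carries an equal share of the Davis--Kahan integral:
\[
\int_{s_i}^{s_{i+1}}\frac{\|\partial_sH(s)\|}{\Delta(s)}\,\mathrm{d}s=\frac{C_{\mathrm{DK}}}{K}=:\delta .
\]
By the Davis--Kahan $\sin\Theta$ theorem, applied infinitesimally and integrated (with piecewise $C^1$ fragments, via \cref{eq:fragment_bound}), we get
\[
\|P_0(s_{i+1})-P_0(s_i)\|\le c\,\delta .
\]
Hence, if $\rho_i$ is $\epsilon_i$-close to $P_0(s_i)$ in trace distance, the leakage out of the ground state at $s_{i+1}$ is at most $\epsilon_i+O(\delta)$. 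Excess probability translates to excess energy at most $\|H\|=O(N_o)$, up to the spectral width of the active space. Alternatively, using the layer structure, the population outside the ground state is $O(\delta^2)$ in fidelity terms and sits in layers reachable from $E_0$.

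Next I bound the cooling cost per node. Starting from $\mu_{i+1}$, the eigenbasis distribution of $\rho_i$ at $H(s_{i+1})$, the ground-state population is at least $1-O(\delta^2)-\epsilon_i$. The remaining mass is $\Delta E_{\mathrm{layer}}$-concentrated in low layers, because its energy excess is small. I then invoke the ETH-motivated downward-drift assumption: from any state in layer $L$, the Markov kernel $P$ decreases the expected layer index by a constant per step. A standard drift/hitting-time argument (Foster--Lyapunov with the optional stopping bound, as in \cite{LevinPeresWilmer2009}) then gives $\mathbb E[\tau_0]=O(L_\mu)$ from the excited mass. Using Markov's inequality, or geometric amplification over blocks, the bound becomes
\[
t_{\mathrm{mix}}(\varepsilon,\mu)=O\!\bigl(L_\mu\log(1/\varepsilon)\bigr),
\]
which is polylogarithmic per node. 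I expect this step to be the main obstacle. One must verify that the excited population created by a small $\delta$ really lies within the window where the drift condition holds; that is why the concentration parameter $\eta$ and $L_\mu$ must be tied to $\delta$ and $N_o$. One must also handle the fact that the ground-state population is not perfectly absorbing under the primitive: exact-fixed-point and detailed-balance properties of the filtered Lindbladian of \cite{ding2024single} would be used here.

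Finally, I compose the errors. The per-node failure $\varepsilon$ accumulates additively, so the final trace distance is at most $K\varepsilon+O(\delta)$. The final energy error is therefore
\[
|\mathrm{tr}(H(s^*)\rho)-E_0(s^*)|\le O(N_o)(K\varepsilon+\delta^2\ \text{or}\ \delta).
\]
Requiring this to be at most $\epsilon_E$ leads to the choice of $K$. The stated rate $C_{\mathrm{DK}}^2N_o^2/\epsilon_E$ arises as follows. Take the per-segment rotation $\delta$ with $\delta^2\sim \epsilon_E/N_o$ (overlap loss is quadratic in angle), which gives $K=C_{\mathrm{DK}}/\delta\sim C_{\mathrm{DK}}\sqrt{N_o/\epsilon_E}$. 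Then account for the per-node cooling time, which scales with the excess-energy window, $L_\mu\sim N_o/\Delta E_{\mathrm{layer}}$ times $C_{\mathrm{DK}}$-dependent factors. I would track constants so that the total $K\cdot t_{\mathrm{mix}}$ is bounded by $\widetilde O(C_{\mathrm{DK}}^2N_o^2/\epsilon_E)$; this is a safe upper bound, and a cruder but simpler route chooses $\delta\sim\epsilon_E/(C_{\mathrm{DK}} N_o^2)$ with $O(1)$ layer depth. The log factors come from $\varepsilon=\epsilon_E/(KN_o)$. Substituting $\epsilon_E=N_o\epsilon_{\mathrm{int}}$ and $C_{\mathrm{DK}}\le ML_h/\Delta_{\min}=O(N_o/\Delta_{\min})$ gives the equivalent forms.
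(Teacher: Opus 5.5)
Your error bookkeeping differs from the paper's, and your final accounting does not actually produce the claimed rate. The paper charges the Davis--Kahan leakage cumulatively; it does not let each cooling stage erase it. With the equal-mass mesh, $\varepsilon_{dR}=(C_{\mathrm{DK}}/N_H)^2$ is added at every node, which gives $\delta_{N_H}\le N_H(\eta+\varepsilon_{\mathrm{mix}})+C_{\mathrm{DK}}^2/N_H$. Multiplying by $c_EN_o$ forces $N_H=\Theta(c_EN_oC_{\mathrm{DK}}^2/\epsilon_E)$ and $\eta,\varepsilon_{\mathrm{mix}}=\Theta(\epsilon_E/(c_EN_oN_H))$. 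Each node then costs $t_i=O((L_\mu+\ln(1/\varepsilon_{\mathrm{mix}}))/p_{\min})$ with $L_\mu=O(N_o)$, via a Chernoff bound on a Bernoulli coupling under the monotone drift condition. That Zeno-type accumulation is where the factor $C_{\mathrm{DK}}^2/\epsilon_E$ comes from. With your choice $\delta^2\sim\epsilon_E/N_o$ you get $K\sim C_{\mathrm{DK}}\sqrt{N_o/\epsilon_E}$ and a total of $\widetilde O(C_{\mathrm{DK}}N_o^{3/2}\epsilon_E^{-1/2})$. The ``$C_{\mathrm{DK}}$-dependent factors'' you attach to $L_\mu$ have no source. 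The ``$O(1)$ layer depth'' variant is unsupported, because a tiny rotation makes the excited weight small, not low-lying. Also, $O(L_\mu\ln(1/\varepsilon))$ is not polylogarithmic once $L_\mu=\Theta(N_o)$.

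The substantive gap is the concentration step on which your non-accumulation claim rests. For cooling to remove the fresh leakage at node $i+1$, the excited part of $\mu_{i+1}$ must lie in the window where drift holds, except for some mass $\eta$. That out-of-window mass is not cooled and carries forward, so your final budget must be $O(N_o)\,K(\eta+\varepsilon)$, not $O(N_o)(K\varepsilon+\delta^2)$. The phrase ``because its energy excess is small'' does not deliver this. The mean-excess bound $\langle E_0(s_i)|H(s_{i+1})|E_0(s_i)\rangle-E_0(s_{i+1})\le 2\|H(s_{i+1})-H(s_i)\|$, combined with Markov's inequality, only yields an $\eta$-quantile layer cutoff of order $\bar E/(\eta\,\Delta E_{\mathrm{layer}})$, where $\bar E$ is that mean excess. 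For the required $\eta\sim\epsilon_E/(N_oK)$ this is no longer $O(N_o)$, and the drift hypothesis would then be needed on a much larger window. The paper does not derive concentration either. It states $\Pr_{\mu_i}(E_{J_0}-E_0\le c_EN_o)\ge 1-\eta$ with inverse-polynomial $\eta$ as an explicit hypothesis, and notes that Markov's inequality is insufficient. Your restart-over-blocks argument also needs the paper's monotonicity condition (the layer index never increases), not just negative expected drift; otherwise the chain can climb out of the window where drift is assumed. If you make concentration an explicit hypothesis for every $\mu_i$ and include $K\eta$ in the budget, your reset argument is valid and gives a count at most the stated one whenever $K\ge 1$. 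Alternatively, charge $\varepsilon_{dR}$ cumulatively as the paper does to recover its $N_H$ exactly.
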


\cref{thm:reaction-path-runtime_informal} is reached by combining the number of continuation points required by the Davis--Kahan smoothness bound with the number of cooling rounds required by the local downward-drift bound. The quantity $\widetilde{O}(C_{\mathrm{DK}}^2N_o^2/\epsilon_E)$ should be read as an ideal cooling-step complexity for total energy error, or equivalently $\widetilde{O}(C_{\mathrm{DK}}^2N_o/\epsilon_{\rm int})$ for intensive energy error. The corresponding logical gate count is obtained by multiplying this primitive count by the cost of implementing each dissipative step; this cost is analyzed in~\cref{sec:resource-estimates} using standard Lindblad-simulation subroutines. The key structural feature we exploit is that along a physically meaningful continuation coordinate the Hamiltonian varies smoothly and the ground-state projector does not rotate arbitrarily fast.

At the algorithmic level, the classical workflow supplies a continuation path, an approximate TS geometry, and a locally compatible active-space representation along the path. This representation can be constructed around the TS region and aligned backward toward a tractable geometry. The quantum routine then proceeds forward along the same aligned path: we discretize $s\in[0,s^*]$ into a sequence of geometries and, at each step, apply a dissipative cooling primitive that amplifies the ground-state population at the new geometry, using the ground-state approximation from the previous step as input (see~\cref{fig:main_fig}b). The overall complexity is governed primarily by how rapidly the ground-state subspace varies along the aligned path, by the dissipative mixing time, and by the target energy accuracy $\epsilon_E$.

Concretely, we first assume that a chemical system of interest is accompanied by a classically generated reaction channel or TS candidate, obtained using established tools such as nudged elastic band, string methods, synchronous-transit-guided quasi-Newton methods, constrained optimization, or related transition-state search procedures~\cite{Williams2016,VaucherReiher2018MEP}. The continuation path used by our algorithm need not itself be a newly discovered physical MEP; it is an input path, or a smooth deformation of an input path, that must remain chemically validated and pass through the target TS region. The design freedom relevant to the quantum algorithm is to choose a path and representation that reduce abrupt projector rotations and improve the Davis--Kahan conditioning of state transport.

Once a continuation path and compatible active-space representation are selected, we discretize the path into $N_H$ points $\{s_i\}_{i=0}^{N_H-1}$, where $s_{N_H-1}=s^*$ is the target TS geometry and $s_0$ is a tractable geometry on the same aligned path. At each point, the classical preprocessing step builds the electronic Hamiltonian $H_\text{el}(\mathbf{R}(s_i))$ in the aligned active-space representation. A classical or hybrid method, such as DMRG in a tractable region, is then used to generate an effective warm start at $s_0$.

Geometrically, the instantaneous ground state $\ket{E_0(s)}$ traces out a smooth curve on the state manifold as $s$ varies along the aligned continuation path toward the target TS geometry. For a Lipschitz-continuous continuation path, adjacent Hamiltonians differ only slightly, so the associated low-energy projectors do not rotate abruptly between $s_i$ and $s_{i+1}$. As a result, the ground state at $s_i$ has nontrivial overlap with the low-energy subspace at $s_{i+1}$ and therefore serves as an effective warm start. Consequently, the dissipative dynamics at each step need only remove a small residual excited-state population rather than cool from a generic high-energy state. A complete description of our algorithm is provided in \cref{sec:pseudocode}.

\subsubsection*{Comparison to other Ground State Preparation Techniques}
Ground-state preparation at transition-state geometries is known to be a computationally challenging problem for which a myriad of quantum algorithms have been designed. Notable examples include quantum phase estimation~\cite{Kitaev1995AbelianStabilizer} (QPE), digital adiabatic simulation~\cite{Berry2025Large} (DAS), phase randomization~\cite{Boixo2009Eigenpath} (PR) and dynamic cooling~\cite{motlagh2024_cooling} (DC).~\cref{tab:comp_tab} presents a comparison between the dominant asymptotic state-preparation overheads of these algorithms and dissipative continuation in terms of their shared parameters: the number of orbitals $N_o$, the maximum Hamiltonian norm along the continuation path  $\| H\| \coloneqq \max_s \|H(s)\|$, and the minimum spectral gap $\Delta_{\min} \coloneqq \min_s \Delta(s)$. In terms of these parameters, dissipative continuation requires the ideal dissipative cooling-step count
\begin{equation}
\label{equ:gate_complexity}
    N_{\mathrm{cool}}^\mathrm{diss}
    = \widetilde{O}\left( \frac{C_{\mathrm{DK}}^2 N_o^2}{\epsilon_E} \right)
    = \widetilde{O}\left( \frac{C_{\mathrm{DK}}^2 N_o}{\epsilon_{\rm int}} \right),
    \qquad
    \epsilon_{\rm int}\coloneqq\frac{\epsilon_E}{N_o}.
\end{equation}
Using the coarse estimate $C_{\mathrm{DK}}=O(N_o/\Delta_{\min})$, the intensive-error version gives $N_{\mathrm{cool}}^\mathrm{diss}=\widetilde{O}(N_o^3/(\Delta_{\min}^2\epsilon_{\rm int}))$. We compare this ideal primitive-count scaling to each of the other ground state preparation algorithms below. The logical gate count is obtained by multiplying by the per-step Lindbladian implementation cost, as discussed in~\cref{sec:resource-estimates}. A more detailed discussion of the comparisons made here between these ground state preparation algorithms can be found in~\cref{sec:das_vs_diss} and \cref{sec:pr_dc_vs_dissip}.

\begin{table}[t]
    \centering
    \renewcommand{\arraystretch}{1.15}
    \setlength{\tabcolsep}{8pt}
    \begin{tabular}{ll}
        \toprule
        \textbf{Algorithm} & \textbf{Dominant scaling} \\
        \midrule
        \makecell[l]{QPE~\cite{Kitaev1995AbelianStabilizer}} &
        \makecell[l]{$
        \widetilde{O}\!\left(
        \frac{\|H\|}{\min\{\Delta_{\min},\epsilon_E\}} \cdot \frac{1}{\sqrt{p_0}} \cdot\mathrm{poly}(N_o)
        \right)$} \\ [6pt]
        DAS~\cite{Berry2025Large} &
        $\widetilde{O}\!\left(
        \frac{\|H\|^{3}}{\Delta_{\min}^{5}} \cdot\frac{1}{\epsilon_E} \cdot\mathrm{poly}(N_o)
        \right)$ \\ [6pt]
        PR~\cite{Boixo2009Eigenpath} &
        $\widetilde{O}\!\left(
        \frac{\|H\|^2}{\Delta_{\min}^{3}} \cdot\frac{1}{\epsilon_E} \cdot\mathrm{poly}(N_o)
        \right)$ \\ [6pt]
        DC~\cite{motlagh2024_cooling} &
        $\widetilde{O}\!\left(
        \frac{\|H\|}{\Delta_{\min}^{3}} \cdot\frac{1}{\epsilon_E} \cdot\mathrm{poly}(N_o)
        \right)$ \\ [6pt]
        This work &
        \makecell[l]{$
        \widetilde{O}\!\left(
        \dfrac{N_o^{4}}{\Delta_{\min}^{2}\epsilon_E}
        \right)$ ideal steps; \\ $\times$ per-step Lindbladian cost} \\
        \bottomrule
    \end{tabular}
    \caption{Dominant state-preparation scaling comparison, suppressing polylogarithmic factors via $\widetilde{O}$. Here $\|H\|$ denotes the maximum operator norm over the continuation path, and $\Delta_{\min}$ is the minimum spectral gap along the path. Rows for QPE, DAS, PR, and DC are stated for target total energy precision $\epsilon_E$. Our row displays the ideal dissipative cooling-step count in the same total-error convention; equivalently, it is obtained from the intensive bound $\widetilde{O}(C_{\mathrm{DK}}^2N_o/\epsilon_{\rm int})$ by setting $\epsilon_{\rm int}\coloneqq\epsilon_E/N_o$. $N_o$ is a problem-size parameter, e.g. the number of spatial orbitals, and $\mathrm{poly}(N_o)$ denotes method-dependent polynomial overheads such as basis changes, state-preparation subroutines, term-selection costs, Hamiltonian simulation overheads, or Lindbladian block-encoding costs. For QPE, $p_0 \coloneqq |\langle E_0(s^*)|\phi\rangle|^2$ is the initial squared overlap of the prepared state $|\phi\rangle$ with the target ground state $|E_0(s^*)\rangle$ at the hardest point $s^*$; the displayed $1/\sqrt{p_0}$ assumes amplitude amplification, while plain repetition gives $1/p_0$. For our method, the row reports the ideal dissipative cooling-step count after converting to total precision and using $C_{\mathrm{DK}}=O(N_o/\Delta_{\min})$. The logical gate count additionally includes the per-step Lindbladian implementation cost discussed in~\cref{sec:resource-estimates}.}
    \label{tab:comp_tab}
\end{table}

\paragraph*{Quantum Phase Estimation}
QPE projects an input state onto approximate energy eigenstates of a Hamiltonian by coherently correlating each eigencomponent with its eigenphase and collapsing onto the measured outcome. To present a meaningful comparison to QPE, we present the algorithmic cost for applying QPE directly at the target transition-state geometry to prepare the ground state. Typically $\Delta_{\min} \sim \Delta(s^*)$, therefore we may use $\Delta_{\min}$ to directly compare QPE with dissipative continuation. The practical limitation of QPE, however, is the need for an efficiently preparable, semi-classical guiding state $\ket{\phi}$~\cite{Waite2025Physical,Cade2023,Gharibian2023}. The fidelity of the guiding state with the true ground state,
\begin{equation}
    p_0 \coloneqq|\langle E_0(s^*) |\phi\rangle|^2 \in (0,1],
    \label{eq:qpe_overlap_pi}
\end{equation}
informs the gate complexity, therefore the comparable QPE gate complexity is
\begin{equation}
    n_G^{\mathrm{QPE}} = \widetilde{O}\left(
    \frac{\|H\|}{\min\{\Delta_{\min}, \epsilon_E\}}\cdot \frac{1}{\sqrt{p_0}} \cdot \mathrm{poly}(N_o) \right),
    \label{eq:qpe_pointwise_complexity}
\end{equation}
where the factor of $\min\{\Delta_{\min}, \epsilon_E\}$ sets the required energy resolution.

Dissipative continuation is intended for settings where constructing the guiding state $|\phi\rangle$ with large overlap with the ground-state is unavailable or unreliable. This situation commonly arises in strongly multi-reference regions~\cite{Wardzala2024,Szalay2012,Jiang2012,Sharma2021}. In these regimes the ground state is no longer well described by a single determinant or low-entanglement ansatz: several near-degenerate configurations contribute with comparable weight and natural-orbital occupations deviate significantly from idempotency as the nuclear geometry approaches a symmetry-enhanced or bond-rearrangement configuration. As a result, even chemically motivated ansätze prepared at TS geometries may have extremely poor overlap, making it challenging in practice to guarantee provably non-negligible overlap without extensive active-space optimization that may be exponentially expensive.
In such cases, QPE's cost may be dominated by the overlap-amplification factor in~\cref{eq:qpe_pointwise_complexity}.

Instead, we assume access to a warm start at an easy geometry (Hamiltonian) and transport it along a continuous evolution path by repeatedly applying a cooling primitive at intermediate geometries. This shifts the state-preparation requirement from finding a good guiding state $| \phi \rangle$ for $H(s^*)$ to starting from a good state and following the path. 
In practice, an approach combining our dissipative continuation approach with QPE could be employed to boost the overlap with $\ket{E_0(s^*)}$ at the target geometry before applying QPE for high-precision energy estimation. 

\paragraph*{Digital Adiabatic Simulation}
DAS prepares the ground state by varying a discretized Hamiltonian slowly enough to ensure that the state of system remains within the instantaneous low-energy subspace throughout the evolution.
To make a comparison to DAS, we consider the algorithmic complexity of using DAS to coherently evolve the state $\ket{E_0(s_0)}$ to $\ket{E_0(s^*)}$ by digital adiabatic simulation along the same continuation path from $H(s_i)$ to $H(s_{i+1})$ until we reach $\ket{E_0(s^*)}$.
Using the recent improvements on large-timestep DAS~\cite{Berry2025Large}, we can write the complexity of the improved algorithm as
\begin{equation}
    n_G^{\mathrm{DAS}} = \widetilde{O}\left( \frac{\|H\|^3}{\Delta_{\min}^5} \cdot \frac{1}{\epsilon_E} \cdot \text{poly}(N_o) \right).
\end{equation}
This comparison should be read as a conservative path-following estimate obtained by applying the large-step DAS scaling uniformly along the continuation mesh. A sharper comparison may be possible by using a total-evolution-time bound that tracks the full smooth interpolation directly, rather than multiplying a segment-level worst-case estimate by the number of continuation points.

Under this conservative segment-wise accounting, dissipative continuation has a milder apparent dependence on the Hamiltonian norm and minimum gap than DAS. A key distinction between DAS and dissipative continuation is how errors scale and accumulate along the path.
Digital adiabatic simulation is a coherent transport method: to keep the state within the instantaneous low-energy subspace, the schedule must suppress diabatic transitions, and existing worst-case bounds therefore exhibit a strong dependence on the minimum spectral gap along the path (e.g., the $\Delta_{\min}^{-5}$ dependence in~\cite{Berry2025Large}).
By contrast, dissipative continuation implements a \textit{contractive} channel that relaxes toward the instantaneous low-temperature state (and, in the zero-temperature limit, toward the ground-state manifold).
In the regime we target---where a warm start is available at an easy geometry and the path is sufficiently smooth---state-preparation reduces to repeatedly re-cooling at intermediate geometries.
Under the uniform downward drift / mixing assumptions used in our analysis, the number of cooling steps is controlled by a mixing-time parameter (captured in our constant $C_{\mathrm{DK}}$ and related hitting-time bounds) rather than by an adiabatic gap condition.
Accordingly, our advantage over DAS should be understood as replacing worst-case adiabatic gap dependence with a mixing-time dependence that can be substantially milder in ETH-like, locally thermalizing regions; we do not claim an unconditional improvement in $\|H\|/\Delta_{\min}$ in the worst case.

\paragraph*{Phase Randomization}
In PR, at each point $H(s)$, we apply the instantaneous Hamiltonian for a \textit{random} evolution time. Averaging over random times induces dephasing in the eigenbasis of $H(s_j)$, which approximates a projective measurement onto the instantaneous ground state (a quantum-Zeno-type stabilization), thereby allowing ground-state transport along the discretized path.
Under the standard Hamiltonian-simulation cost model (evolution for time $t$ costs $\widetilde{O}(\mathrm{poly}(N_o))$ gates), the phase-randomization method has total gate complexity,
\begin{equation}
    n_G^{\mathrm{PR}} =\widetilde{O}\left(
        \frac{\|H\|^2}{\Delta_{\min}^3} \cdot \frac{1}{\epsilon_E} \cdot \mathrm{poly}(N_o)
    \right),
\end{equation}
The complexity of the dissipative continuation matches that of the PR algorithm in minimum gap size $\Delta_{\min}$. The complexities differ in terms of the degree of the polynomial scaling with $N_o$ as well as an additional factor of $\|H\|$ arising from PR requiring a finer path discretization than dissipative continuation.

\paragraph*{Dynamic Cooling} DC achieves ground-state energy reduction via a purely unitary, closed-system procedure rather than explicit coupling to an external bath or ancilla bath degrees of freedom. In this approach, one alternates between a spectral transformation of the Hamiltonian implemented using quantum signal processing and evolution under a weak perturbation chosen to induce transitions that preferentially move populations toward lower energies in order to prepare the ground-state energy to within a target resolution $\epsilon_E$.
In this case, the asymptotic complexity of using dynamic cooling for transporting a ground state along a continuation path is
\begin{equation}
    \label{eq:dc_runtime}
    n_G^\text{DC} = \widetilde{O}\!\left(
    \frac{\|H\|}{\Delta_{\min}^{3}} \cdot\frac{1}{\epsilon_E} \cdot\mathrm{poly}(N_o)
\right).
\end{equation}

DC matches the displayed scaling of dissipative continuation with respect to $\|H\|$ and $\Delta_{\min}$ after suppressing method-dependent polynomial factors in $N_o$. The key difference is hidden in this polynomial overhead. In the dynamic-cooling accounting, the number of calls to the primitive implementing $e^{\pm iH}$ is governed by an effective dimension parameter $d$; for the active-space electronic-structure continuation regime considered here, the relevant extensive energy window gives $d=O(N_o)$. This yields $\widetilde{O}(N_o^{3/2})$ Hamiltonian-evolution calls per geometry, before the additional polynomial cost of implementing each Hamiltonian simulation primitive. Thus DC remains a natural comparator, but its system-size overhead is not the same as the ideal dissipative-step count reported for our method.

\subsection{Proof Outline and Complexity Analysis}
The proof of~\Cref{thm:reaction-path-runtime_informal} rests on three lemmas. First,~\cref{lem:dk_constant} shows that if we discretize the continuation path finely enough, then successive ground states along the grid have non-trivial overlap. Equivalently, the cumulative rotation of the ground state along the path determines how many grid points $N_H$ are needed.  Thus, the smoothness of the continuation path implies that a polynomial small number of reaction geometries are required.

Next,~\Cref{lem:eth-informal} gives a tail bound on the ground-state hitting time $\tau_0$ for the Markov chain with kernel $P$, assuming the ETH-motivated uniform downward drift condition (see~\Cref{sec:eth-shell-mixing} for a formal definition). The uniform downward drift condition reflects the ETH expectation that a local jump operator couples a typical excited eigenstate to a dense set of lower-energy eigenstates with broadly distributed matrix elements.  In such settings, convergence does not rely on the absence of local minima but rather on local smoothness and conditioning that induce a monotonic descent direction~\cite{jacot2018neural}. Although the number of accessible lower-energy states grows rapidly with energy, the typical matrix elements to each state decrease accordingly, so that the summed probability of downward energy transitions remains essentially constant.

\Cref{lem:linear_cooling_informal} then shows that for a sufficiently warm start and Lipschitz-smooth path, the required number of cooling steps scales linearly with $N_o$. Combined with the discretization bound~\Cref{lem:dk_constant} and the implementation scaling from~\cref{sec:resource-estimates}, these results yield the overall runtime scaling in~\cref{equ:gate_complexity}.

\begin{lemma}[Bounding Successive Ground-State Overlap; informal]
    \label{lem:dk_constant}
    Let $H(s)$ be a continuation path Hamiltonian with a non-degenerate ground state $|E_0(s)\rangle$. Then, for any ${0 \le s_a < s_b \le 1}$,
    \begin{equation}
    \label{eq:cdk}
        1 - \bigl|\braket{E_0(s_a)}{E_0(s_b)}\bigr|^2
        \leq
        \left(
        \int_{s_a}^{s_b}
        \frac{
        \|\partial_s H(s)\|
        }{
        \Delta(s)
        } \mathrm{d}s
        \right)^2.
    \end{equation}
\end{lemma}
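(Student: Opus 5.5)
The plan is to control the rotation of the ground-state projector $P_0(s)=\ketbra{E_0(s)}$ along the path and then convert that into a fidelity bound. The key identity is that for rank-one projectors onto unit vectors,
\[
1-\bigl|\braket{E_0(s_a)}{E_0(s_b)}\bigr|^2=\tfrac12\norm{P_0(s_a)-P_0(s_b)}_2^2=\norm{P_0(s_a)-P_0(s_b)}^2,
\]
where the last equality uses that the difference of two rank-one projectors has eigenvalues $\pm\sin\theta$ and zero, with $\theta$ the principal angle between the two lines. It therefore suffices to prove
\[
\norm{P_0(s_b)-P_0(s_a)}\le\int_{s_a}^{s_b}\frac{\norm{\partial_sH(s)}}{\Delta(s)}\,\mathrm{d}s
\]
and then square.

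First, I would work on each subinterval where the fragments $h_\ell$ are $C^1$. Because the ground state is non-degenerate with gap $\Delta(s)>0$, the projector $P_0(s)$ is $C^1$ there, by Kato's analytic perturbation theory or directly from the Riesz projector $P_0(s)=\frac{1}{2\pi i}\oint_\Gamma (z-H(s))^{-1}\,\mathrm{d}z$ with $\Gamma$ a small contour around $E_0(s)$. Differentiating gives
\[
\dot P_0 = R(s)\,\dot H\, P_0 + P_0\,\dot H\, R(s),
\qquad
R(s)=\sum_{j\ge1}\frac{\ketbra{E_j(s)}}{E_0(s)-E_j(s)}=-Q_0\,(H-E_0)^{-1}Q_0,
\]
where $Q_0=1-P_0$. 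Since $\norm{R(s)}\le 1/\Delta(s)$, we get $\norm{R\dot H P_0}\le\norm{\dot H}/\Delta$. The two terms map into orthogonal blocks: one is $Q_0(\cdot)P_0$ and the other is $P_0(\cdot)Q_0$. So $\dot P_0$ is off-diagonal Hermitian, and its norm equals $\norm{Q_0\dot P_0P_0}$ rather than twice that. This yields the pointwise bound $\norm{\dot P_0(s)}\le\norm{\partial_sH(s)}/\Delta(s)$. This step is the differential form of the Davis--Kahan $\sin\Theta$ theorem.

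Next, I would integrate. By the fundamental theorem of calculus on each $C^1$ piece, and continuity of $P_0$ across the finitely many breakpoints (which follows from Lipschitz continuity of $H$ together with the gap, via the Riesz formula), we obtain
\[
\norm{P_0(s_b)-P_0(s_a)}\le\int_{s_a}^{s_b}\norm{\dot P_0}\,\mathrm{d}s\le\int_{s_a}^{s_b}\frac{\norm{\partial_sH}}{\Delta}\,\mathrm{d}s.
\]
Squaring and using the identity above gives~\cref{eq:cdk}. If the integral exceeds $1$ the bound is trivial, so no care is needed in that regime.

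The main obstacle is the sharp constant. A naive Davis--Kahan estimate, or bounding both terms of $\dot P_0$ separately, loses a factor of $2$, or of $\pi/2$ if one uses the angle $\theta$ with $\sin\theta\le\theta$. I would handle this through the off-diagonal block structure argument above. Should it fail, the fallback is to track the angle directly: take a smooth, parallel-transported gauge $\ket{\psi(s)}$ with $\braket{\psi}{\dot\psi}=0$. Then $\norm{\dot\psi}=\norm{R\dot H\psi}\le\norm{\dot H}/\Delta$, the Fubini--Study distance satisfies $\theta(s_a,s_b)\le\int\norm{\dot\psi}$, and $\sin^2\theta\le\theta^2$ finishes the argument. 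This route gives the stated bound exactly and avoids any constant loss.
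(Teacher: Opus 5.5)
Your proof is correct, and your fallback is the paper's own argument: parallel-transport gauge, $\|\partial_s|E_0(s)\rangle\|\le\|\partial_sH(s)\|/\Delta(s)$ from the differentiated eigenvalue equation, Fubini--Study distance bounded by curve length, then $\sin^2\theta\le\theta^2$. Your primary route is a gauge-free variant of the same estimate. The reduced-resolvent bound enters as $\|R\,\partial_sH\,P_0\|\le\|\partial_sH\|/\Delta$, and the off-diagonal block structure of $\dot P_0$ removes the spurious factor $2$. The exact chord identity $\|P_0(s_a)-P_0(s_b)\|=\sin\theta$ then turns the integrated operator-norm bound directly into the squared inequality. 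This gives you three things. You need no differentiable choice of eigenvector. The piecewise-$C^1$ breakpoints are handled cleanly through continuity of the Riesz projector, which the paper does not address. The argument also carries over unchanged to higher-rank spectral projectors in Davis--Kahan $\sin\Theta$ form. The paper's route, in turn, bounds the angle itself, $\theta\le\int_{s_a}^{s_b}\|\partial_sH\|/\Delta\,\mathrm{d}s$, which is slightly stronger than your $\sin\theta\le\int_{s_a}^{s_b}\|\partial_sH\|/\Delta\,\mathrm{d}s$. That stronger form is what the formal lemma's ``equivalently, the Fubini--Study distance\ldots'' clause asserts. To recover it from your setup, note that $\|\dot P_0\|=\|Q_0\,\partial_s|E_0\rangle\|$ is exactly the Fubini--Study speed and integrate arc length rather than using the chord. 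One small correction: $\sin\theta\le\theta$ runs in the harmless direction here, so the angle-based route loses no factor of $\pi/2$, as your own fallback paragraph in fact concludes.
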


\Cref{lem:dk_constant} formalizes the idea that the ground state cannot rotate arbitrarily fast along the continuation coordinates. In particular, the quantity $\|\partial_s H(s) \| / \Delta(s)$ controls the local rate at which the ground-state can rotate. To see where~\cref{eq:cdk} comes from, fix a smooth phase choice  (parallel transport) so that $\langle E_0(s)|\partial_s E_0(s)\rangle=0$. Differentiate the eigenvalue equation $H(s)\ket{E_0(s)} = E_0(s)\ket{E_0(s)}$ with respect to $s$ and project onto the excited subspace. Using the gap $\Delta(s) = E_1(s) - E_0(s)$ gives the standard bound 
\begin{equation}
    \label{eq:local_rotation_bound}
    \| \partial_s |E_0(s) \rangle \|  \leq  \frac{\| \partial_s H(s) \|}{\Delta(s)}.
\end{equation}
Geometrically, the Fubini--Study distance between the eigenstates at $s_a$ and $s_b$ is at most the path length traced by $\ket{E_0(s)}$ on $[s_a, s_b]$. Hence, 
\begin{equation}
    1-|\langle E_0(s_a)|E_0(s_b)\rangle|^2
    \leq \left( \int_{s_a}^{s_b}\|\partial_s |E_0(s) \rangle \|ds \right)^2.
    \label{eq:infidelity}
\end{equation}
Combining~\cref{eq:infidelity} with~\cref{eq:local_rotation_bound} yields~\cref{eq:cdk}. In~\Cref{sec:dk_appendix}, we provide a proof to a more general statement, of which~\cref{eq:cdk} is a special case.

\Cref{lem:dk_constant} can then be used to choose a discretization of the reaction coordinate. Let $\{s_i\}_{i=0}^{N_H-1}$ be an ordered grid with $s_0=0$ and $s_{N_H-1}=s^*$. If this grid is fine enough that the integral in~\cref{eq:cdk} over each interval $[s_{i-1}, s_i]$ is $O(\epsilon_{\rm int}/C_{\mathrm{DK}})$, then the accumulated overlap error over the mesh is at most of order $\epsilon_{\rm int}$. Thus the number of grid points is $N_H = \widetilde{O}(C_{\mathrm{DK}}^2/\epsilon_{\rm int})$, equivalently $N_H = \widetilde{O}(C_{\mathrm{DK}}^2N_o/\epsilon_E)$ for total energy error (see~\cref{sec:proof_of_thm1} for derivation).

Once a grid is fixed~\footnote{A grid is fixed by calculating $N_H$ and distributing according to the magnitude of gradient.}, it remains to bound the number of cooling steps required at each point $s_i$. Let $\rho$ denote an initial input state with an eigenbasis distribution $\mu$.
One application of the cooling primitive at $s_i$ updates populations according to a Markov kernel $P_i$, in the sense that the population distribution is updated to
\[
    \mu(k)\gets\sum_{j=0}^{D-1} \mu(j) (P_i)_{j \to k},
\]
where we view $\mu$ as a row vector. The number of $P_i$ applications required at configuration $s_i$ is determined by the hitting time $\tau_0$ defined in~\cref{eq:hitting_time}, equivalently the warm-start mixing time $t_{\mathrm{mix}}(\varepsilon,\mu)$ in~\cref{eq:tmix}. To bound this cost, we need an assumption on the cooling kernel $P_i$ that prevents the population from getting stuck in the excited states.

In the transition-state regime---where the spectral structure is most crowded and naive cooling can stall---we assume that the row-normalized transition kernel $P_i$ satisfies a \textit{uniform downward drift} condition. Concretely, fix a layer decrement $\Delta E_{\mathrm{layer}}>0$ and define, for each eigenstate index $j$ in the relevant microcanonical window, the ``successful downhill'' set
\[
    S_j \coloneqq\{k:\ E_k \le E_j-\Delta E_{\mathrm{layer}}\}.
\]
The drift condition asserts the existence of a constant $p_{\min}>0$ such that, uniformly over the indices $j$ in this window,
\begin{equation}
    \label{eq:uniform-drift-text}
    \sum_{k\in S_j} P_{j\to k} \ge p_{\min}.
\end{equation}
Equivalently, conditioned on the current eigenstate label $j$, each cooling step has probability at least $p_{\min}$ to decrease the energy by $\ge \Delta E_{\mathrm{layer}}$; this rules out exponentially rare ``useful'' downhill moves (strong bottlenecks) within the window.

At first glance,~\cref{eq:uniform-drift-text} seems like a very strong condition, since it is a \textit{uniform} (worst-case-over-$j$) lower bound rather than a typical-case statement. Two features of our setting make it plausible and, importantly, sufficient for our analysis. First, we do not invoke~\cref{eq:uniform-drift-text} from an arbitrary initialization: the previous lemma provides a warm start at step $i$ with state mass already concentrated in the low-energy sector relevant to the subsequent cooling stage. Thus the drift assumption is only required \textit{locally}---throughout the narrow energy window actually visited during the transition-state portion of the path---rather than globally across the full spectrum.

Second, in over-parameterized optimization and related dynamics, a provably close warm start often places the iterate in a regime where the effective dynamics becomes well-conditioned and admits monotone progress guarantees; see~\cite{jacot2018neural,nguyen2025Guarantees,Chen2025} and references therein for a full discussion of how closeness to a target region upgrades qualitative mixing/escape behavior into quantitative, uniform progress bounds. The downward-drift assumption mirrors the warm-start gradient-descent regime of neural tangent kernel (NTK) theory, where linearized dynamics induce a contractive flow toward lower loss without requiring the absence of local minima~\cite{jacot2018neural,Chen2025}. We explore this connection between our setting and NTK rigorously in~\cref{app:ntk_drift}.

This assumption should be interpreted as a local sufficient condition rather than a generic claim about all molecular Hamiltonians. It is motivated by ETH-type behavior in chaotic interacting many-body systems and by recent evidence that low-symmetry molecular electronic spectra can display Wigner--Dyson/GOE statistics once symmetry constraints are resolved~\cite{TaoGalitski2026AbInitioRMT}. In the present algorithm, the condition is required only in the energy window reached from the transported warm start near a selected geometry. If numerical diagnostics reveal bottlenecks, symmetry-protected sectors, or poor downhill connectivity in this window, the cooling-time theorem should not be expected to predict efficient preparation without modifying the jump set, representation, or continuation path.
For this reason, we use ETH only as physical motivation for why such transition graphs may arise in chaotic molecular regimes; the formal assumption used in the theorem is the explicit downward-drift condition in~\cref{eq:uniform-drift-text}.

\begin{lemma}[Local downward-drift cooling bound; informal]
\label{lem:eth-informal}
Let $(J_t)_{t\geq 0}$ be a Markov chain with transition kernel $P$ and initial distribution
\[
\mu(j)\coloneqq \Pr(J_0=j).
\]
Suppose that, for some $L_\mu\in\mathbb{N}_0$ and $\eta\in(0,1)$,
\[
\Pr_\mu\bigl(L(J_0)\leq L_\mu\bigr)\geq 1-\eta.
\]
Assume that $P$ satisfies the uniform monotone downward-drift condition: while $1\leq L(J_t)\leq L_\mu$, the layer index cannot increase, and each step decreases it by at least one with conditional probability at least $p_{\min}>0$. Then, for every $\varepsilon\in(0,1)$,
\begin{equation}
\Pr_\mu\left(
\tau_0>
\frac{2L_\mu+8\ln(1/\varepsilon)}
{p_{\min}}
\right)
\leq \eta+\varepsilon.
\end{equation}
Equivalently, with probability at least $1-\eta-\varepsilon$,
\begin{equation}
\tau_0
\leq
\frac{2L_\mu+8\ln(1/\varepsilon)}
{p_{\min}}.
\end{equation}
\end{lemma}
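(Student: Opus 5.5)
Conditioning on the initial layer is the first step. Let $G=\{L(J_0)\le L_\mu\}$, so $\Pr_\mu(G^c)\le\eta$ by hypothesis, and a union bound reduces the claim to showing $\Pr_\mu(\tau_0>T,\,G)\le\varepsilon$ with $T\coloneqq (2L_\mu+8\ln(1/\varepsilon))/p_{\min}$. It suffices to prove that for every starting label $j$ with $L(j)\le L_\mu$ we have $\Pr_j(\tau_0>T)\le\varepsilon$. This uses the Markov property and averages over $\mu$ restricted to $G$. Layer $0$ contains only the ground state, since $L(k)=0$ iff $E_k=E_0$ and the ground state is unique. Hence $\tau_0$ equals the first time the layer process $L(J_t)$ reaches $0$.

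Next, on $G$, the monotonicity part of the drift hypothesis keeps the chain in the window $\{1\le L\le L_\mu\}$ until absorption at layer $0$. The layer index can never increase from the window, so it stays $\le L_\mu$, and it is a nonnegative integer. Define the success indicators $\xi_t\coloneqq\mathbf 1\{L(J_{t+1})\le L(J_t)-1\}$ for $t<\tau_0$. By the uniform drift condition, $\Pr(\xi_t=1\mid\mathcal F_t)\ge p_{\min}$ on $\{t<\tau_0\}$. Because the layer index does not increase, $L(J_t)\le L_\mu-\sum_{s<t}\xi_s$. Therefore
\[
\{\tau_0>T\}\subseteq\Bigl\{\textstyle\sum_{s<T}\xi_s< L_\mu\Bigr\},
\]
after extending $\xi_s$ past $\tau_0$ by independent Bernoulli$(p_{\min})$ coins, which does not affect the event.

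The main step is a stochastic-domination lower tail bound. Conditional success probabilities are at least $p_{\min}$, so $\sum_{s<T}\xi_s$ stochastically dominates $\mathrm{Bin}(T,p_{\min})$. One can see this either by a coupling with uniforms $U_s$, setting $\xi_s\ge\mathbf 1\{U_s\le p_{\min}\}$, or through the supermartingale $\exp(-\lambda\sum\xi_s+\lambda' t)$. A multiplicative Chernoff bound with $m=Tp_{\min}=2L_\mu+8\ln(1/\varepsilon)$ then gives
\[
\Pr\bigl(\mathrm{Bin}(T,p_{\min})<L_\mu\bigr)\le\exp\!\bigl(-\delta^2 m/2\bigr),\qquad \delta=1-L_\mu/m\ge 1/2,
\]
since $L_\mu\le m/2$. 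The exponent is at least $m/8\ge\ln(1/\varepsilon)$, so the bound is $\le\varepsilon$. This is where the constants $2$ and $8$ come from. Rounding $T$ to an integer only helps.

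I expect the main obstacle to be rigor at the boundary of the window. I must verify that the coupling remains valid while the chain is absorbed or sitting at layer $0$, and treat the ground state as absorbing for the hitting time, which is fine because only $\tau_0$ matters. I must also check that a decrease by at least one layer is exactly what the hypothesis supplies, when it is phrased through the sets $S_j$ with energy decrement $\Delta E_{\rm layer}$. A decrement of $\Delta E_{\rm layer}$ in energy lowers the ceiling index by at least one, so this step should go through.
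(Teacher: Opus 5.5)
Your proposal is correct and follows essentially the same argument the paper gives for the formal version (Lemma~\ref{lem:warmstart-cooling-eth-perfect}): a union bound on the overflow event $\{L(J_0)>L_\mu\}$, the monotone-layer inequality driven by success indicators, the inverse-transform coupling to i.i.d.\ Bernoulli$(p_{\min})$ benchmarks, and a Chernoff lower-tail bound at relative deviation at least $1/2$, which produces the constants $2$ and $8$. The one point to correct is your remark that rounding only helps: since $\tau_0$ is integer-valued, $\{\tau_0>T\}=\{\tau_0>\lfloor T\rfloor\}$, so you must round down and may lose up to one trial, but applying the Chernoff bound at the integer threshold $L_\mu-1$ (which is exactly what $\sum_s\xi_s<L_\mu$ means) absorbs that loss.
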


At a fixed grid point $s_i$, we apply~\cref{lem:eth-informal} to the kernel $P = P_i$ governing one application of the cooling stage at $s_i$. In this setting, the warm-start mixing time $t_{\mathrm{mix}}(\varepsilon, \mu)$ in~\cref{eq:tmix} is the number of kernel applications required to make $\Pr_\mu(\tau_0 > t) \le \varepsilon$. 
\cref{lem:eth-informal} bounds this cost by a term linear in the warm-start layer cutoff $L_\mu$ plus a term logarithmic in $\varepsilon^{-1}$. To connect $L_\mu$ to the chemistry along the path, we use the fact that a warm start at $s_i$ comes from the previous point $s_{i-1}$. If the continuation path is Lipschitz continuous, then one can show that
\[
    \langle E_0(s_{i-1}) | H(s_i) |E_0(s_{i-1})\rangle - E_0(s_i) \le c_E N_o 
\]
for a constant $c_E>0$. This bound shows that the warm start has only $O(N_o)$ excess energy with respect to $H(s_i)$. For constant $\eta$, Markov's inequality gives
\[
    \Pr_{\mu} \left( E_{J_0}(s_i) - E_0(s_i) > \frac{c_E}{\eta}N_o \right) \le \eta,
\]
and therefore concentration in the first $L_\mu=O(N_o/\eta)$ layers. In the formal theorem below, however, the accumulated failure budget takes $\eta$ to be inverse-polynomial in the number of continuation points. Markov's inequality alone would then give a larger layer cutoff. We therefore state the theorem conditionally in terms of the required low-energy-window concentration, or equivalently one may replace this Markov step by a stronger tail/variance assumption for the transported warm starts. The next lemma summarizes the resulting linear cooling-time bound when the extensive-window concentration condition holds.

\begin{lemma}[Linear Cooling Time; informal]
    \label{lem:linear_cooling_informal}
    Adopt the setting of~\cref{lem:eth-informal}. Suppose the warm start is concentrated over an extensive energy window above the ground state, i.e., there exists $c_E>0$ such that  \[\Pr_{\mu}\!\big(E_{J_0}-E_0 \le c_E N_o\big) \geq  1-\eta,\] then for any $0 < \varepsilon < 1$, the cooling time required at each fixed $s_i$ is 
    \begin{equation*}
        t = O\left(N_o + \ln\frac{1}{\varepsilon}\right).
    \end{equation*}
\end{lemma}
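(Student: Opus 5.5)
The plan is to reduce the statement directly to \cref{lem:eth-informal} by translating the energy-window concentration hypothesis into a layer cutoff. Fix the layer width $\Delta E_{\text{layer}}>0$ as a constant independent of $N_o$; this is the same width for which the downward-drift constant $p_{\min}$ in \cref{eq:uniform-drift-text} is assumed. By the definition of the layer index in \cref{eq:layer_index}, $E_j-E_0\le c_E N_o$ implies
\[
L(j)\le \left\lceil \frac{c_E N_o}{\Delta E_{\text{layer}}}\right\rceil \eqqcolon L_\mu .
\]
Hence the hypothesis $\Pr_\mu(E_{J_0}-E_0\le c_E N_o)\ge 1-\eta$ becomes $\Pr_\mu(L(J_0)\le L_\mu)\ge 1-\eta$. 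This is exactly the concentration input of \cref{lem:eth-informal}, with $L_\mu = O(N_o)$.

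Next, we invoke \cref{lem:eth-informal} with this $L_\mu$. The drift assumption is needed only on layers $1\le L\le L_\mu$, that is, on the extensive window of width about $c_E N_o$ above $E_0$. This is precisely the localized window in which we assume the monotone downward-drift condition. The lemma yields
\[
\Pr_\mu\!\left(\tau_0 > \frac{2L_\mu+8\ln(1/\varepsilon)}{p_{\min}}\right)\le \eta+\varepsilon .
\]
Substituting $L_\mu\le c_E N_o/\Delta E_{\text{layer}}+1$ and treating $p_{\min}$, $c_E$ and $\Delta E_{\text{layer}}$ as $N_o$-independent constants, the cooling time is $t = (2L_\mu + 8\ln(1/\varepsilon))/p_{\min} = O(N_o + \ln(1/\varepsilon))$. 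By the definition \cref{eq:tmix}, this bounds $t_{\mathrm{mix}}$ up to the additive failure $\eta$ from the untrusted tail mass, which is carried into the global failure budget.

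The main obstacle is not the algebra but the uniformity of the constants. We need $p_{\min}$ to be independent of $N_o$ and of the grid point $s_i$ while the window width grows linearly in $N_o$. The proof therefore has to state explicitly that the drift hypothesis is imposed uniformly over the whole $O(N_o)$-wide window at each $s_i$, with a fixed $\Delta E_{\text{layer}}$. If that hypothesis held only with $p_{\min}$ decaying in $N_o$, the bound would degrade to $O(N_o/p_{\min})$, and I would record that dependence rather than hide it.

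A secondary point is that the lemma assumes the layer index cannot increase inside the window. Population that starts outside the window is simply charged to $\eta$ and is never cooled by the argument. I will state explicitly that the per-site guarantee is the tail bound above, with failure probability $\eta+\varepsilon$, rather than a deterministic bound.
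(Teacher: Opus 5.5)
Your proposal is correct and is essentially the paper's own proof. The paper likewise notes that, on the event $\{E_{J_0}-E_0\le c_E N_o\}$, monotonicity of the ceiling gives $L(J_0)\le\lceil c_E N_o/\Delta E_{\mathrm{layer}}\rceil$. It takes this as $L_\mu$ in the downward-drift hitting-time lemma and substitutes it into $t\ge 2L_\mu/p_{\min}+(8/p_{\min})\ln(1/\varepsilon)$, treating $p_{\min}$ and $\Delta E_{\mathrm{layer}}$ as $N_o$-independent constants, as you do. Your added caveats make explicit assumptions the paper leaves implicit: that $p_{\min}$ is uniform over the $O(N_o)$-wide window and over $s_i$, and that the guarantee is a tail bound with failure probability $\eta+\varepsilon$.
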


Combining \cref{lem:dk_constant}, which sets the required grid size, with \cref{lem:linear_cooling_informal}, we obtain an upper bound on the total number of cooling steps required along the path:
\[
    \sum_{i=1}^{N_H} t_{i} = \widetilde{O} \left( \frac{C_{\mathrm{DK}}^2 N_o^2}{\epsilon_E} \right)
    = \widetilde{O} \left( \frac{C_{\mathrm{DK}}^2 N_o}{\epsilon_{\rm int}} \right),
\]
where $t_i$ denotes the number of ideal kernel applications used at grid point $s_i$ and $\epsilon_{\rm int}=\epsilon_E/N_o$. This is the central mixing-time quantity for dissipative continuation. In~\cref{sec:resource-estimates}, we provide a concrete quantum procedure that implements a cooling step realizing such a Markov kernel. If $C_{\mathcal L}$ denotes the cost of constructing or block-encoding the filtered jump operators and $C_S(t,\epsilon)$ denotes the cost of simulating the Lindbladian for time $t$ to precision $\epsilon$, then the end-to-end logical cost has the schematic form
\[
    G_{\rm total}
    =
    \sum_{i=0}^{N_H-1} t_i\, C_{\mathcal L}(s_i)\, C_S(t_i,\epsilon_i).
\]
Using nearly linear-time Lindblad simulation, $C_S(t,\epsilon)=\widetilde{O}\!\left(t\|\mathcal L\|_{\rm be}\ln(t\|\mathcal L\|_{\rm be}/\epsilon)\right)$~\cite{li2023_lineartime}. Thus the displayed result is a primitive-count or mixing-time bound; the physical gate complexity acquires the additional block-encoding and Lindbladian-simulation factors, which are polynomial in $N_o$ for the implementations considered below.

\subsection{Continuation Path Optimization}
\label{sec:path_optimization}

\begin{figure*}[th]
\includegraphics[width=0.7\linewidth]{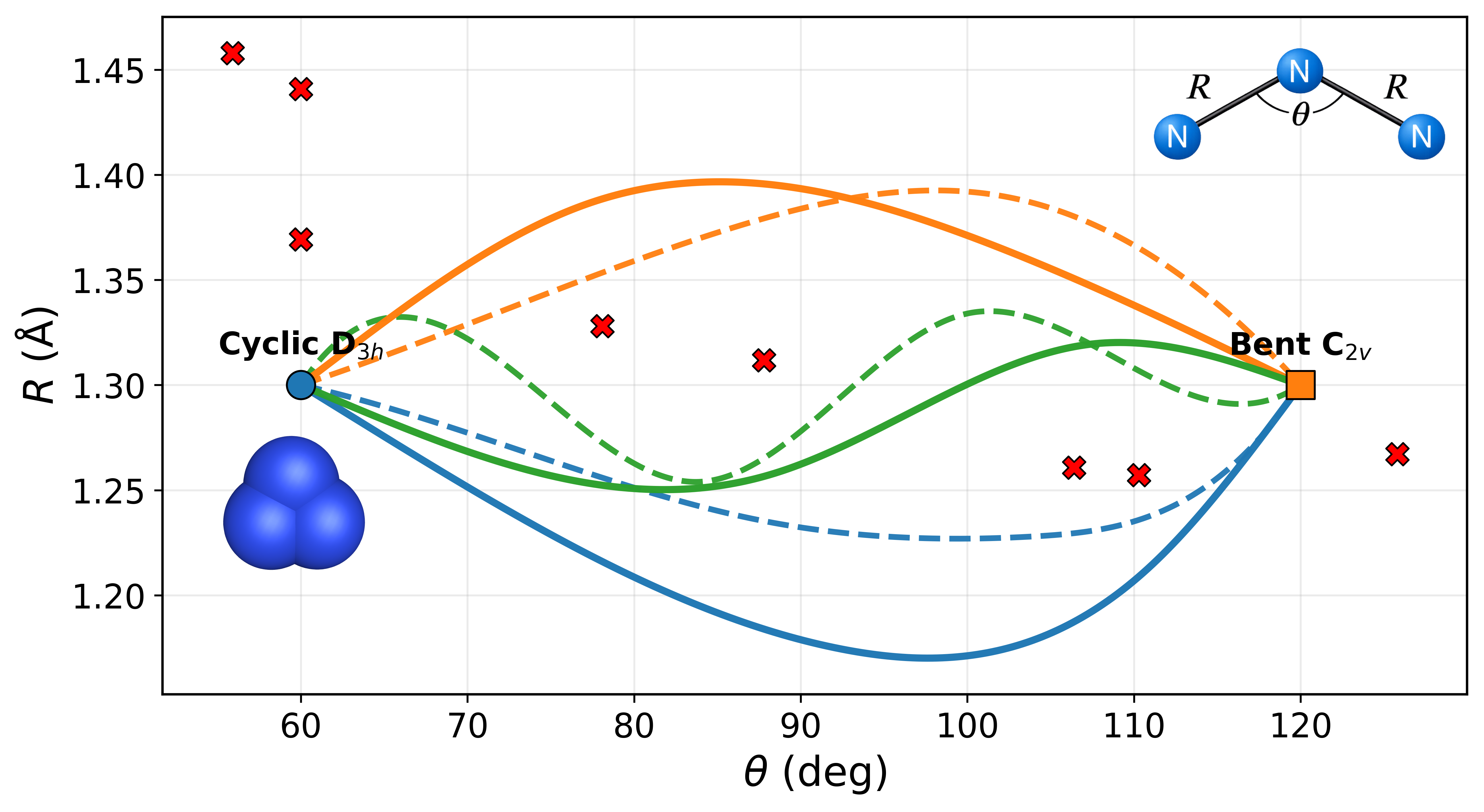}
   \caption{ Continuation-path optimization for N$_3$ in the \((\theta,R)\) internal-coordinate plane, demonstrating that our dissipative ground-state transport need not follow a minimum-energy path. Dashed curves show initial guess paths and solid curves show the optimized trajectories obtained by minimizing the path functional \(J(\mathbf R)\), which trades off geometric length, smoothness, and avoidance of conical-intersection (CI) neighborhoods. All paths connect the cyclic \(D_{3h}\) endpoint (\(\theta \approx 60^\circ\), \(R \approx 1.30~\text{\AA}\)) to the bent \(C_{2v}\) endpoint (\(\theta \approx 120^\circ\), \(R \approx 1.30~\text{\AA}\)). Red \(\times\) markers indicate surface-crossing / CI-type structures reported by \citet{Kerkines2009}, which the conditioned continuation paths detour around to improve clearance from small-gap regions.
}
    \label{fig:n3_landscape}
\end{figure*}

In quantum chemistry, the term \textit{reaction path} most commonly refers to a minimum-energy path (MEP) on the potential-energy surface, typically computed by standard path-finding routines such as the nudged elastic band method, string methods, synchronous-transit methods, or constrained optimization~\cite{Williams2016,VaucherReiher2018MEP}. Our use of a \textit{continuation path} is different. The continuation path in dissipative continuation is not proposed as a standalone method for discovering the chemically correct TS or MEP. Rather, the TS candidate and reaction channel are inputs supplied by the surrounding classical workflow, and the role of path optimization is to condition the quantum state-preparation problem.

More precisely, once a chemically validated reaction channel or TS-containing path segment has been specified, one may locally deform, smooth, or reparameterize the continuation path to improve the conditioning of the quantum state-preparation problem. This optimization need not minimize the physical potential energy at every point along the route. Instead, it targets quantities that influence continuation: geometric length, mesh smoothness, and clearance from known or diagnosed conical-intersection or small-gap neighborhoods.

This motivates the smoothness-regularized continuation-path functional
\begin{equation}
    \begin{split}
        J(\mathbf{R})
        \coloneqq&\,
        \lambda_{\mathrm{len}}\int_0^1 \|\mathbf{R}'(s)\|ds \\
        +& \,
        \lambda_{\mathrm{curv}}\int_0^1 \|\mathbf{R}''(s)\|^2ds \\
        +& \,
        \lambda_{\mathrm{CI}}\int_0^1 \Phi_\mathrm{CI}(\mathbf{R}(s))\,ds, 
    \end{split}
\end{equation}
Here the length term controls the coarse Davis--Kahan surrogate in~\cref{cor:cdk_uniform_surrogate}, the curvature term suppresses kinks and mesh artifacts, and the CI/small-gap term is an optional diagnostic penalty based on available chemical information. Importantly, minimizing the unweighted geometric length is only guaranteed to reduce a uniform upper bound on $C_{\rm DK}$, obtained by replacing the local weight $L_H(\mathbf R(s))/\Delta(s)$ by the global factor $L_{\max}/\Delta_{\min}$. It is not, by itself, guaranteed to reduce the true Davis--Kahan constant. A more targeted objective would replace the first term by a weighted length $\int_0^1 [L_H(\mathbf R(s))/\Delta(s)]\|\mathbf R'(s)\|ds$, but evaluating that weight accurately may require expensive electronic-structure calculations. We therefore use the length and curvature terms as inexpensive geometric surrogates, supplemented when possible by diagnostics for small-gap regions. We do not assume an oracle for locating conical intersections. In practice, $\Phi_{\rm CI}$ may be constructed from prior multireference studies, small-active-space diagnostics, approximate excited-state calculations, or spin-flip TDDFT-type methods~\cite{Matsika2021ConicalIntersections,WinslowCrossRobinson2020SpinFlipCI}. If such diagnostics are unavailable or unreliable, the penalty should be omitted or treated as a heuristic, and the theorem applies only to the resulting validated continuation path.

The N$_3$ molecule provides a concrete example where this continuation path freedom matters. As shown by the high-level multireference study of \citet{Kerkines2009}, the low-lying doublet manifold of N$_3$ contains many surface crossings and conical-intersection-type features as the NNN angle bends from the linear region toward cyclic geometries. This implies that a naive path (e.g., a minimum-energy path or a simple linear interpolation in internal coordinates) can easily encounter geometries where the electronic gap to a nearby excited state becomes small, precisely the regime that worsens adiabatic sensitivity and inflates the Davis--Kahan constant \cite{Kerkines2009}.

Operationally, we use the catalog of crossing structures and the qualitative location of intersection regions reported for N$_3$ as targets to avoid during optimization. This yields a simple, reproducible workflow: compute or import a coarse map of ``problematic'' regions (small-gap or intersection neighborhoods), set $\lambda_{\mathrm{CI}}$ to enforce clearance from those regions, and tune $\lambda_{\mathrm{len}},\lambda_{\mathrm{curv}}$ to maintain a short, smooth path.
The cost and reliability of constructing such a diagnostic map are not included in the state-preparation theorem; they belong to the surrounding classical preprocessing workflow, just like TS localization and active-space selection.

In \cref{fig:n3_landscape}, we visualize the effect of optimizing the path functional $J(\mathbf R)$ on N$_3$.
All three candidate paths connect the cyclic $D_{3h}$ region ($\theta \approx 60^\circ$, $R \approx 1.30$~\AA) to the bent $C_{2v}$ region ($\theta \approx 120^\circ$, $R \approx 1.30$~\AA), but the optimized trajectories (solid curves) systematically deform away from the intersection ``hot spots.''
Compared to their initial guesses (dashed curves), the optimized paths remain smooth and avoid sharp turns while making deliberate detours that increase clearance from the labeled crossing regions (notably the cluster near $\theta \sim 106$--$112^\circ$ and the interior crossings around $\theta \sim 80$--$90^\circ$). This behavior is consistent with the intended role of the CI penalty term in $J(\mathbf R)$, trading a modest increase in geometric distance for a more favorable gap profile along the route, thereby targeting a more favorable gap profile and a smaller surrogate for the Davis--Kahan cost of subsequent ground-state transport.
The N$_3$ example is used to illustrate the geometric role of continuation-path conditioning; it is not presented as a full resource estimate or end-to-end quantum simulation of the N$_3$ reaction dynamics.

\section{Implementation and Resource analysis}
\label{sec:resource-estimates}

In this section, we present a practical quantum-circuit that implements the cooling primitive from the dissipative continuation algorithm. We begin by analyzing the complexity associated with a single application of the cooling primitive for a Hamiltonian at a fixed point along the continuation path $H(s_i)$. We then show a numerical demonstration of dissipative continuation looking at a chemical reaction in H$_4$, for increasing applications of the cooling primitive, providing some evidence that the algorithm is effective (in terms of preparing a state with high overlap with the ground state), even for a low number of applications or the cooling primitive. Finally, we perform a cost analysis of the full algorithm, separating the ideal cooling-step count from the logical cost of implementing each dissipative primitive for a number of chemical systems of interest.

\begin{figure*}[t]
    \centering
    \includegraphics[width=.9\linewidth]{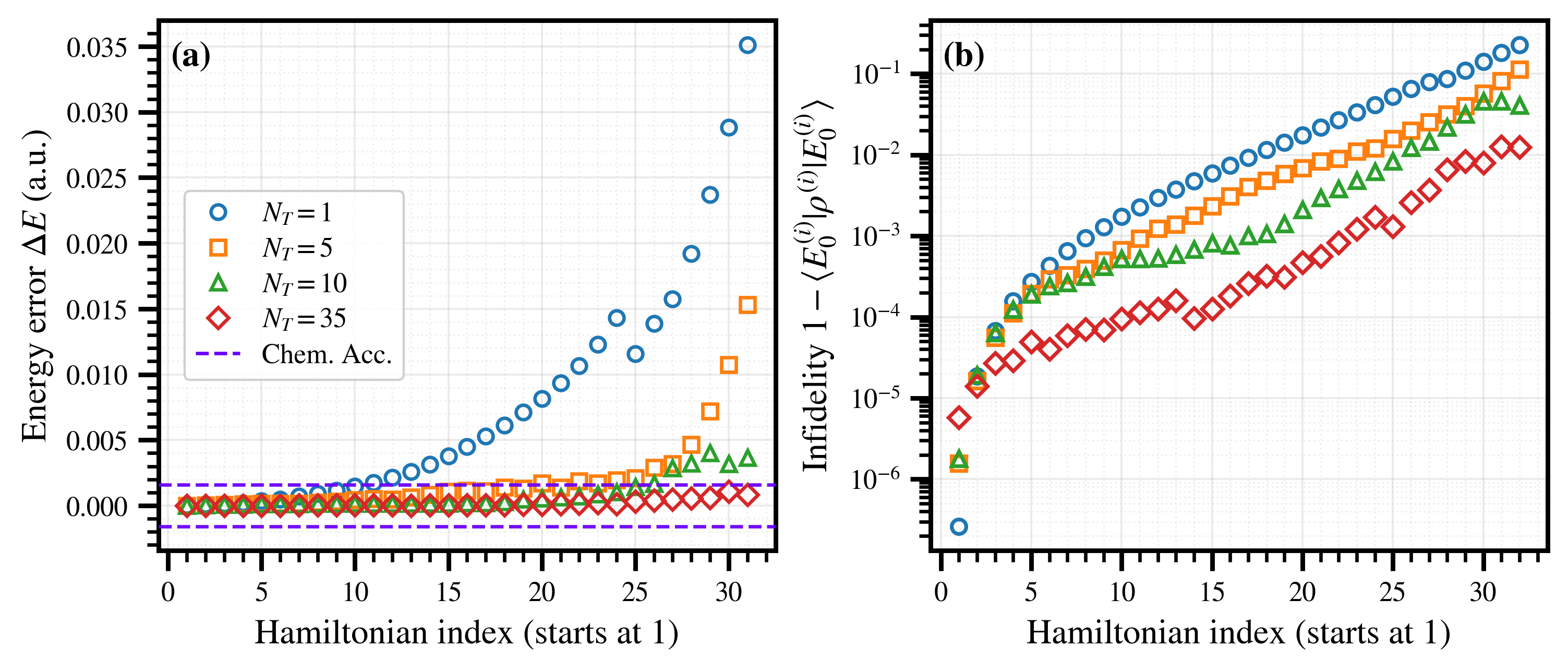}
    \caption{\textbf{(a)} Energy error $\Delta E(N_T) = \left|\mathrm{tr}\left(H (e^{\tau \mathcal{L}})^{N_T}[\rho^{(0)}]\right)-E_0\right|$ (a.u.) after $N_T$ applications of the dissipative time step at each successive Hamiltonian along a discretized path, beginning from the ground state of $H(0)$, $\rho^{(0)} = |E_0^{(0)}\rangle\langle E_0^{(0)}|$, on the H$_4$ potential energy surface (STO-3G) where we take $\tau=0.01$ in these numerical simulations; the final chemical geometry is the transition state (square H$_4$ molecule) corresponding to $s=0.5$ and the 32nd Hamiltonian along the continuation path. Each marker shows the final energy estimate after $N_T$ dissipative time steps per Hamiltonian along the continuation path, with $N_T\in\{1,5,10,35\}$ and Hamiltonian indices starting at $H(s_1)$.  Chem. Acc. indicates where the threshold for \textit{chemical accuracy} is: $1.6$ mHa. \textbf{(b)} Corresponding infidelity $1-\langle E^{(i)}_0|\rho^{(i)}|E^{(i)}_0\rangle$ on a base-10 logarithmic scale for the same runs, showing that larger $N_T$ improves preparation fidelity across the path where $|E_0^{(i)}\rangle$ is the ground state of the $i$-th continuation-path Hamiltonian $H(s_i)$ and $\rho^{(i)}$ is the approximate ground state prepared via dissipative continuation.}

   \label{fig:h4_numerical}
\end{figure*}

\begin{figure}
    \centering
    \includegraphics[width=\linewidth]{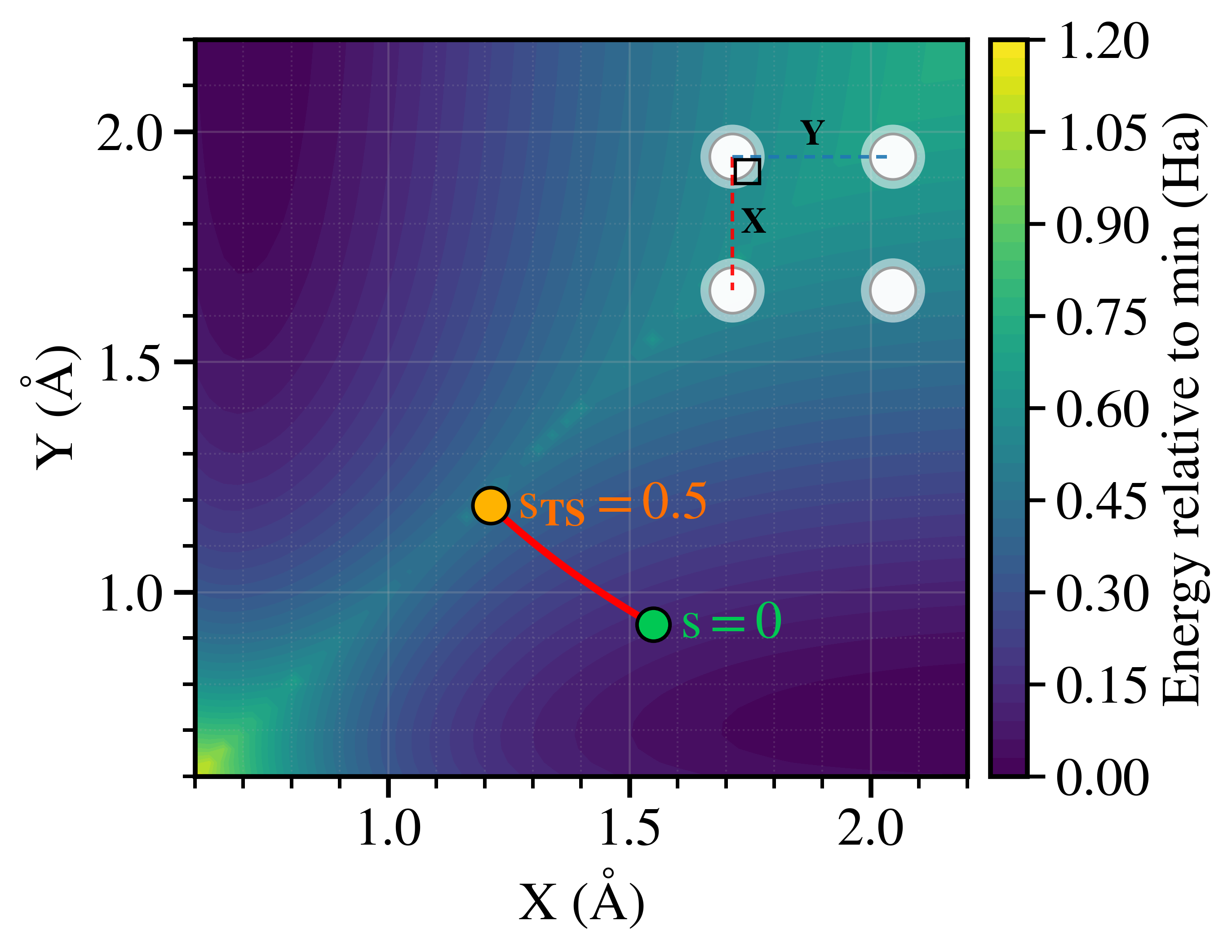}
    \caption{ H$_4$ potential energy surface (STO-3G) over rectangular geometries parameterized by $(X,Y)$, with the continuation path segment connecting $s=0$ weakly correlated starting point to the target transition state (TS) point $s=0.5$. The path follows a rectangle distortion of H$_4$ at fixed area $XY=a^2$, with atoms at $(\pm X/2,\pm Y/2,0)$ and aspect ratio swept as $\lambda(s)=\lambda_{\min}^{1-s}\lambda_{\max}^{s}$ with $\lambda_{\min}=0.6$, $\lambda_{\max}=1.6$, and $X=a/\sqrt{\lambda}$, $Y=a\sqrt{\lambda}$ for $s\in[0,0.5]$; the square geometry with side length $a=1.2$~\AA\ occurs at $s=0.5$ and exhibits strong multi-reference character.}

   \label{fig:h4_pes}
\end{figure}

A standard mathematical starting point for \textit{engineering} Lindbladian dynamics that drive a system into low-energy states is the Davies weak-coupling construction~\cite{Davies1974Markovian,Davies1976MarkovianII}. In this framework one obtains a GKLS generator whose stationary state is the Gibbs state of the target Hamiltonian $H$. Modern dissipative ground-state preparation generalizes this approach to tailor jump operators such that the resulting Lindbladian has the ground state as its steady state. A particularly efficient method for doing so was introduced in Ref.~\cite{ding2024single} which uses \textit{filtered} jump operators to construct such a Lindbladian.

Formally, we define the purely dissipative Lindbladian
 \begin{equation}
        \mathcal{L}[\rho]
        =
        \sum_{a \in \mathcal{A}}
        \left(
        K_a \rho K_a^\dagger
        - \frac{1}{2} \{ K_a^\dagger K_a, \rho \}
        \right),
        \label{Eq:Lindbladian}
    \end{equation}
over a finite set of filtered jump operators indexed by $a \in \mathcal{A}$. In the chemical setting, there are several choices for the unfiltered operators $A_a$ that lead to efficient mixing~\cite{Li_2025}, e.g., ${\mathcal{A} = \{ a_i^\dagger a_j \}_{i,j=1}^{N_o}}$, where $a_i^\dagger$ ($a_j$) are the second quantized fermion creation (annihilation) operators.

Once a suitable set of jump operators is selected,Once a suitable set of jump operators is selected, we obtain the filtered jump operators $K_a$ by integrating out the Bohr frequencies ($\omega\in \{E_i - E_j : E_i, E_j \in \mathrm{Spec}(H)\}$) that correspond to heating the system, leaving only cooling transitions.
\begin{definition}
\label{def:filtered_jumps}
    For each \textit{unfiltered} jump operator $A_a \in \mathcal{A}$, define the corresponding filtered jump operator $K_a$ as
    \begin{equation}
        K_a \coloneqq \int_{\mathbb{R}} f(t) e^{iHt} A_a e^{-iHt} dt,
    \end{equation}
    where the associated $f(t)$ is the time-domain kernel of a filter function $\hat{f}(\omega)$ that filters out positive Bohr frequencies $\hat{f}(\omega)=0$, $\forall \omega >0$.
\end{definition}

In \cref{def:filtered_jumps}, the filtered jump operator $K_a$ is written as a continuous superposition of Heisenberg-evolved operators $e^{iHt}A_a e^{-iHt}$, weighted by a kernel $f(t)$. In practice, however, this integral must be \textit{truncated} to a finite time window and \textit{discretized} to a finite set of times:
\begin{equation}
    K_a \approx \sum_{m=-M}^{M} c_m e^{iHt_m}A_a e^{-iHt_m}, \qquad t_m = m\Delta t,
\end{equation}
where the coefficients $c_m$ arise from the choice of the filter function and the associated numerical integration scheme, e.g., quadrature. The largest queried evolution time $T\coloneqq M\Delta t$ controls how sharply the frequency-domain filter $\hat f(\omega)$ can approximate the ideal projector onto negative Bohr frequencies: increasing $T$ narrows the effective transition band of the filter and reduces residual ``heating'' leakage, while smaller $T$ yields a broader filter and larger leakage. Motivated by this tradeoff, we define the \textit{time support} $S(\varepsilon_{\mathrm{leak}})$ as the maximum real time magnitude that must be queried by the (truncated and discretized) implementation in order to guarantee the target leakage level $\varepsilon_{\mathrm{leak}}$. Operationally, when sampling on a uniform grid, the number of distinct real-time evolution primitives $e^{iHt}(\cdot)e^{-iHt}$ required is proportional to the number of grid points in $[-S(\varepsilon_{\mathrm{leak}}),S(\varepsilon_{\mathrm{leak}})]$, i.e. $\#\mathrm{calls}=\Theta(S(\varepsilon_{\mathrm{leak}})/\Delta t)$.

\subsection{Time support scaling}
\cref{thm:filter-scaling} identifies and compares two classes of filter functions that each carry their own distinct asymptotic scaling. In both constructions, the filtered jump operator is realized by a time--domain representation (an integral, approximated by a finite sum) in which copies of the bare jump operator are conjugated by real-time evolution under $H$ and weighted by the chosen filter kernel. The decisive implementation parameter is therefore the kernel's \textit{time support} $S(\varepsilon_{\mathrm{leak}})$, i.e., the effective time window over which the kernel has non-negligible weight. This window determines the largest evolution time that must be synthesized and, after choosing a discretization step, the number of distinct time samples appearing in the approximation of $e^{\tau \mathcal{L}}$ for a small time step $\tau > 0$~\cite{ding2024single}.

\begin{theorem}[Time support scaling for Gevrey vs. periodic square-wave filters; informal]
    \label{thm:filter-scaling}
    Let $H$ be a Hamiltonian with nondegenerate ground state $\ket{E_0}$, spectral gap $\Delta > 0$, and let $A$ be a bounded jump operator with $\|A\|\le 1$. For any target leakage $0<\varepsilon_{\mathrm{leak}}<1$ there exist two families of filtered jump operators $K^{\mathrm{G}}_{\varepsilon_{\mathrm{leak}}}$ and $K^{\mathrm{SW}}_{\varepsilon_{\mathrm{leak}}}$ with the following time-support:
    \begin{enumerate}
        \item There exists a Gevrey-class filter with time-domain kernel
        $f^{\mathrm{G}}_{\varepsilon_{\mathrm{leak}}}(t)$ essentially supported on an interval
        of length
        \begin{equation}
            \label{equ:gevrey}
            S_{\mathrm{G}}(\varepsilon_{\mathrm{leak}})
            =
            O\!\left(
            \frac{\|H\|}{\Delta}
            \bigl[\ln(1/\varepsilon_{\mathrm{leak}})\bigr]^{\alpha}
            \right),
        \end{equation}
        for some $\alpha > 1$ depending only on the Gevrey class~\cite{ding2024single}.
    
        \item There exists a truncated periodic square-wave filter with discrete-time kernel $f^{\mathrm{SW}}_{\varepsilon_{\mathrm{leak}}}(t)$ supported on
        \begin{equation}
            S_{\mathrm{SW}}(\varepsilon_{\mathrm{leak}})
            =
            O\!\left(
            \frac{\|H\|}{\Delta}
            \frac{1}{\varepsilon_{\mathrm{leak}}}
            \right),
        \end{equation}
        up to bounded oscillatory factors in the leakage.
        \end{enumerate}
        Up to constants from the discretization step, the number of distinct real-time evolution calls (the number of sample points in the time integral) is proportional to the time support length $\Theta(S(\varepsilon_{\mathrm{leak}}))$.
\end{theorem}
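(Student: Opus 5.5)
We prove the two claims separately. Both rest on one reduction. In the eigenbasis of $H$, the filtered jump operator acts entrywise as $\bra{E_i}K_a\ket{E_j}=\hat f(E_i-E_j)\bra{E_i}A_a\ket{E_j}$, up to a $\sqrt{2\pi}$ convention. The leakage, meaning the weight of heating or ground-state-depleting transitions, is therefore controlled by $\sup_{\omega\ge 0}|\hat f(\omega)|$ together with the truncation error of the time integral.

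First we rescale frequencies by $\|H\|$. All Bohr frequencies lie in $[-2\|H\|,2\|H\|]$, and the filter must pass $\omega\le-\Delta$ while rejecting $\omega\ge 0$. In units $\tilde\omega=\omega/\|H\|$, the transition band has width $\Delta/\|H\|$. By Fourier scaling, a kernel with transition width $\delta$ has time support $\propto 1/\delta$, and this produces the common prefactor $\|H\|/\Delta$. What remains is to fix a reference profile $\hat g$ with transition band $[-1,0]$ and compute how the truncation window $T$ must scale to reach leakage $\varepsilon_{\mathrm{leak}}$. The final claim, that the number of calls is $\Theta(S)$, then follows immediately for fixed step $\Delta t=O(1/\|H\|)$. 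The $\|H\|$ factor is absorbed either in the units or in the constants from the discretization step, with Nyquist sampling valid because $\hat f$ only needs resolving on $[-2\|H\|,2\|H\|]$. We also need $\|K_a\|\le\|f\|_{L^1}$ to keep the error in operator norm.

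\textbf{Gevrey filter.} We take $\hat g=\hat u$, where $\hat u$ is a Gevrey-class-$s$ ($s>1$) smooth step, for instance built from the bump $\exp(-x^{-1/(s-1)})$. It equals $1$ on $(-\infty,-1]$, equals $0$ on $[0,\infty)$, and is multiplied by a compactly supported Gevrey cutoff on $[-3,1]$ so that it is integrable. Gevrey-$s$ regularity with compact support gives the Paley--Wiener-type decay $|g(t)|\le C\exp(-c|t|^{1/s})$. This is the standard Gevrey Fourier estimate; we obtain it by bounding $|t|^k|g(t)|\le\|\partial^k\hat g\|_{L^1}\le C R^k (k!)^s$ and optimizing over $k$. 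Truncating to $|t|\le T$ then costs an $L^1$ tail of at most $\exp(-cT^{1/s})$. Setting this equal to $\varepsilon_{\mathrm{leak}}$ gives $T=O([\ln(1/\varepsilon_{\mathrm{leak}})]^{s})$, so $\alpha=s>1$. We would cite \cite{ding2024single} for the precise constants.

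\textbf{Square-wave filter.} We take $\hat g$ to be the $2\pi$-periodic-in-time square wave with band edge in the gap, that is, an indicator of negative frequencies within one period of length $\gtrsim 4\|H\|$. We truncate its Fourier series to $N$ terms. The coefficients of a square wave decay like $1/n$, so the partial sums are equivalent to convolution with the Dirichlet kernel. Away from the jump by a distance $\delta=\Delta/\|H\|$, the pointwise error is $O(1/(N\delta))$, with a bounded Gibbs overshoot at the edge; these are the "bounded oscillatory factors". Requiring this error to be at most $\varepsilon_{\mathrm{leak}}$ for all Bohr frequencies at distance $\ge\Delta$ from the edge gives $N=O(\|H\|/(\Delta\varepsilon_{\mathrm{leak}}))$. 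The maximal sampled time is $N$ times the step, which yields $S_{\mathrm{SW}}$.

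\textbf{Main obstacle.} We expect the hardest step to be converting the pointwise frequency-domain error into operator-level leakage. For the square wave, the $\ell^1$ norm of the truncated coefficients grows like $\ln N$, and the Gibbs overshoot prevents uniform convergence. We would handle this by assigning the ground-state-adjacent Bohr frequencies, which lie at distance $\ge\Delta$ from the jump, to the regime where the $1/(N\delta)$ bound holds. The $\ln N$ factor would then be placed inside the $O(\cdot)$ and the oscillatory factors.
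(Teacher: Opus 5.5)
Your proposal is correct. For the leakage reduction and the square-wave filter it is essentially the paper's argument. The exact identity $\langle E_j|K_f|E_0\rangle=\hat f(E_j-E_0)A_{j0}$ reduces leakage to the values of $\hat f$ at the ground-state Bohr frequencies $E_j-E_0\ge\Delta$. A $1/(\omega T)$-type envelope controls the truncated periodic filter there, and $N_\omega=\Theta(\Omega T)$ with $\Omega=\Theta(\|H\|)$ gives the count.

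The two square-wave routes differ only slightly. The paper derives the envelope from the closed form $\hat f_T(\omega)=\tfrac12-\tfrac1\pi\mathrm{Si}(\omega T)$ of the rectangularly truncated Hilbert kernel, then transfers it to the Dirichlet partial sum by comparison with the sinc kernel. Your direct Dirichlet-kernel bound $O(1/(N\delta))$ at distance $\delta$ from the jump is the estimate that makes that transfer rigorous.

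The genuine difference is the Gevrey part, which the paper does not prove but imports from \cite{ding2024single}. Your Paley--Wiener-type argument makes it self-contained and identifies $\alpha=s$: bound $|t|^k|g(t)|\le\|\partial^k\hat g\|_{L^1}\le CR^k(k!)^s$, optimize over $k$, and integrate the tail. Two adjustments are needed there.

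\begin{itemize}
\item Do not rescale the outer cutoff together with the edge. Otherwise only $\omega\in[-3\Delta,-\Delta]$ is passed, which is harmless for leakage but is not the ``pass $\omega\le-\Delta$'' filter you set out to build. Keep the cutoff at scale $\|H\|$ below $-2\|H\|$. It then contributes only $\|H\|^{1-k}\le\Delta^{1-k}$ to $\|\partial^k\hat f\|_{L^1}$, so the window remains $O(\Delta^{-1}[\ln(1/\varepsilon_{\mathrm{leak}})]^s)$.
\item Justify the step $\Delta t=\Theta(1/\|H\|)$ by Poisson summation rather than Nyquist. Since $\hat f$ has support of width $O(\|H\|)$, the aliased copies miss the Bohr range once $2\pi/\Delta t$ is a large enough multiple of $\|H\|$.
\end{itemize}

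Your ``main obstacle'' is not a real obstacle. It arises only because you first asked for smallness on all of $\omega\ge0$. The truncated, discretized filter still acts entrywise through a function of the Bohr frequency. Therefore neither $\|f\|_{L^1}$, nor the $\ln N$ growth of $\sum_k|c_k|$, nor the Gibbs overshoot affects the frequencies $E_j-E_0\ge\Delta$. In any case, a $\ln N$ factor could not legitimately be absorbed into the $O(\cdot)$.

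In one respect your design is better than the paper's. Making $\hat f$ vanish at $\omega=0$ also suppresses the diagonal term $\hat f(0)A_{00}|E_0\rangle$. Your Gevrey profile does this, and a square-wave edge placed strictly inside $(-\Delta,0)$ would too. The paper's sum over $j>0$ omits this term, although $\hat f_T(0)=\tfrac12$ for its kernel.
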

A more detailed statement of \cref{thm:filter-scaling} and its proof are provided in the \Cref{sec:thm2} as~\cref{thm:filter-scaling_formal}.

For Gevrey filters, the dominant accuracy parameter is $M_s$ the number of quadrature points~\cite{ding2024single}. For Fourier filters, this parameter is the number of Fourier modes $N_\omega$; see \Cref{sec:thm2}. The values of these parameters are set by the target ground-state leakage $\varepsilon_\mathrm{leak}$. In practice, the conversion between target leakage $\varepsilon_\mathrm{leak}$ and the accuracy parameter value is straightforward in the case of the Fourier filter e.g., $N_\omega \sim 1000$ Fourier modes can tackle spectral gaps in the milliHartrees. In contrast, a Gevrey filter's optimal parameters are more difficult to determine a priori~\cite{ding2024single}.

\subsection{Numerical Demonstrations}



We consider a one-dimensional path on the Born–Oppenheimer potential energy surface (PES) of planar H$_4$. To reduce complexity, we consider a path along which the nuclei remain in a rectangular configuration (i.e., the angle between the axes of the pairs is fixed), reducing the dimension of the energy manifold to a two-dimensional surface, with the aspect ratio between the separation of the pairs varied smoothly as a function of the reaction coordinate $s \in [0,  1]$; see~\cref{fig:h4_pes}. The path is chosen to pass through the configuration in which the nuclei sit in a perfect square (that is, the rectangular configuration has equal length sides). 

This square geometry of the H$_4$ molecule is non-trivial from both chemical and algorithmic perspectives: in the STO-3G basis near a side length of $\sim 1.2$~\AA, the increased symmetry induces near-degeneracies among competing bonding patterns, producing a strongly multi-reference ground state and a compressed low-energy spectrum~\cite{Liu2024-h4}. Thus, the rectangular-to-square deformation provides a minimal proxy for a reaction coordinate that crosses a strongly correlated ``bottleneck'' region, while remaining small enough in FCI dimension to allow us to classically simulate the performance of our algorithm. Additionally, we ran the Procrustes alignment procedure described by~\cref{prop:gs-overlap-principal-angles} to establish a consistent orbital gauge along the path. In this benchmark the alignment is initialized at one endpoint for convenience; in TS applications, the same procedure can instead be initialized from the TS-centered active-space frame and propagated outward toward a tractable geometry. The quantum continuation then uses the tractable endpoint as the warm-start location and transports the state back toward the TS in the same aligned representation.

We employ quadratic, number-conserving jump operators $A_{ij}^{(1)}=c_i^\dagger c_j+c_j^\dagger c_i$ and $A_{ij}^{(2)}=i(c_i^\dagger c_j-c_j^\dagger c_i)$ for $i<j$, allowing for spin mixing between, for example, $\alpha$ and $\beta$ orbitals.
With this selected dissipative ansatz, we observe behavior consistent with the local downward-drift condition in the H$_4$ molecule by visualizing the graph $G=(V,E)$ associated with the Markov process induced by dissipative cooling. This provides a diagnostic example of the regime assumed in the cooling-time theorem, rather than evidence that the condition holds generically for all transition-state Hamiltonians. The vertices of $G$ are the eigenstates in the FCI subspace $V = \{|E_i\rangle\}_{i=0}^{D-1}$ and two eigenstates are connected by an edge $\left( |E_i\rangle, |E_j\rangle\right)\in E$ if the probability of a transition between the two is non-zero $P_{i \to j} >0$. \cref{fig:main_fig}a depicts a visualization of the graph $G$ with the energy-ordered vertices arranged on a circle, where edges are only drawn if their transition probability is bound below by $P_{j \to k} \geq 0.01$. We can observe that H$_4$ is consistent with the uniform drift condition of \cref{eq:uniform-drift-text} at the square multi-reference geometry.

\cref{fig:h4_numerical}a demonstrates that increasing the number of dissipative steps per geometry leads to convergence to the true ground state at all points along the continuation path for this example. Indeed, as shown, $N_T=35$ dissipative steps are sufficient to achieve chemical accuracy, $\epsilon_E \leq 1.6$ mHa, across all $N_H$ geometries. \cref{fig:h4_numerical}b shows the infidelity of the prepared state with the true ground state. As we can see, even one application of the dissipative step per geometry allows for a reasonable overlap with the true ground state at the transition-state geometry. Increasing the number of applications only improves the simulation.
These numerical simulations show rapid convergence, corresponding to short mixing times, at each geometry along the path. One could imagine a scenario where the number of dissipative steps is tuned along the path (increasing with the proximity to the TS) to minimize cost, while maximizing accuracy.

\begin{figure*}[t]
    \centering
    \includegraphics[width=.8\linewidth]{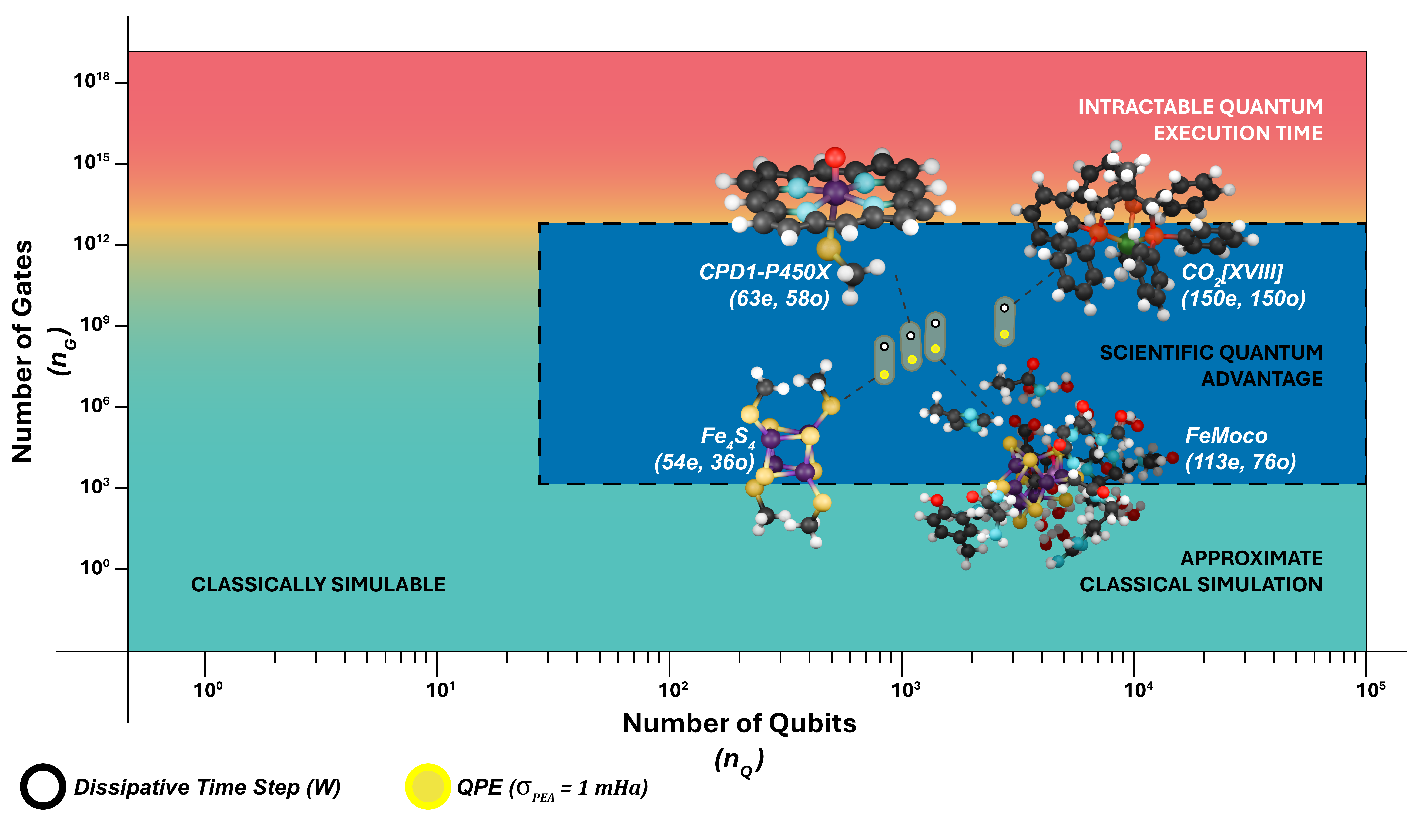}
    \caption{A log–log plot (base 10) plot of the quantum logical resource estimates for a single dissipative time step operator $W(\sqrt{\tau}) \approx e^{\tau \mathcal{L}}$ along with the cost of doing QPE across benchmark chemical systems~\cite{Low_2025}. The resource estimates are calculated from resource estimates provided in~\cite{Low_2025} for QPE and the block encoding $U_H$ using the ``DFTHC-BLISS-SA'' SOTA electronic structure compression method to minimize $T$/Toffoli gate counts. Each point corresponds to a row in \Cref{tab:block-enc-lindblad-qpe}, with Toffoli count on the vertical axis and logical qubits on the horizontal axis. Regions of current and projected hardware capabilities are shaded to illustrate which systems and algorithms may be accessible on near- or medium-term fault-tolerant devices~\cite{camps2025quantumcomputingtechnologyroadmaps}.} 
    \label{fig:joe_plot}
\end{figure*}



\subsection{Cost Analysis}

To assess the practical prospects for efficient dissipative ground-state preparation along continuation paths, we estimate the logical quantum resources required to prepare ground states for strongly correlated chemical systems along such paths, using optimized quantum circuit encodings~\cite{Low_2025}.

Several quantum algorithms can be used to implement time steps that approximate the purely cooling dynamics generated by $\{K_a\}_{a\in\mathcal{A}}$. Broadly, these approaches fall into two classes: collision-based schemes, in which interactions with one or more ancilla qubits induce dissipative kicks that drive the system’s energy downward~\cite{ding2024single, ding2025End}; and quantum operator Fourier transform–based circuits, which realize the desired filter function using coherent ancilla states of the form $\sum_i \sqrt{f(t_i)}\lvert i\rangle$ together with controlled time evolution~\cite{chen2023quantum, Chen2025}.

Dissipative kicks are considered to be the most viable near-term approach for implementing dissipative dynamics on quantum computers~\cite{ding2025End}. For this class of methods, the circuit depth required to implement a short-time dissipative continuation is determined by the leakage properties of the chosen filter function.

For the purpose of resource estimation, we select a periodic square-wave filter consisting of $N_\omega$ Fourier modes and perform the real-time evolutions $\exp(\pm iHt)$ using QSP/QSVT~\cite{gilyen2019quantum,low2017optimal,Low2019hamiltonian,low2016resonant,Haah2019product}. For a single exponential of duration $t$, the QSVT degree scales as
\begin{equation}
    d_{\rm QSVT}(t,\epsilon_{\rm sim})
    =
    \Theta\!\left(\alpha |t|+\ln(1/\epsilon_{\rm sim})\right),
\end{equation}
where $\alpha$ is the block-encoding normalization of $H$ and $\epsilon_{\rm sim}$ is the target polynomial-approximation error. For the square-wave filter used here, the largest evolution time is $t=\pi/2$ after rescaling, giving the estimate
\begin{equation}
    N_{U_H}
    =
    \Theta\!\left(
    2N_\omega\left(\pi\alpha/2+\ln(1/\epsilon_{\rm sim})\right)
    \right)
\end{equation}
block-encoding calls per dissipative time-step operator $W(\sqrt{\tau})$. Thus, ignoring lower-order Clifford control and state-preparation costs, the logical cost of one dissipative step at geometry $s_i$ is
\begin{equation}
    G_W(s_i)
    \approx
    N_{U_H}(s_i)\,\mathrm{Cost}(U_H(s_i))
    +G_{\rm jump}(s_i),
\end{equation}
where $G_{\rm jump}$ denotes the cost of implementing the dilated jump-operator kicks and associated controls. The full continuation cost is then
\begin{equation}
    G_{\rm total}
    =
    \sum_{i=0}^{N_H-1}N_T(s_i)G_W(s_i),
\end{equation}
where $N_T(s_i)$ is the number of dissipative steps used at geometry $s_i$. Under the ideal local-mixing assumptions analyzed above, $N_H=\widetilde O(c_E C_{\rm DK}^2/\epsilon_{\rm int})=\widetilde O(c_E N_o C_{\rm DK}^2/\epsilon_E)$ and $N_T(s_i)=O(N_o+\ln(1/\varepsilon_i))$, but in the resource estimates below we report the dominant local primitive cost near the TS region, where the smallest gaps and slowest mixing are expected.

Several implementation errors are not present in the ideal Markov-chain analysis and must be accounted for in a fault-tolerant implementation. The main contributions are filter leakage, QSVT polynomial error, splitting error in the product formula used to implement the short-time dissipative step, dilation/measurement error in the one-ancilla Lindbladian simulation primitive, and sampling error associated with randomly selecting jump operators. We treat these contributions as producing an effective per-step deviation
\begin{equation}
    \left\|\widetilde{\Phi}_{i}-e^{\tau\mathcal L_{i}}\right\|_\diamond
    \le
    \delta_{\rm step}(s_i),
\end{equation}
where $\widetilde{\Phi}_{i}$ is the implemented channel at the fixed geometry $s_i$. At the population level, the most important effect is an effective leakage probability $q_{\rm eff}$: with small probability the imperfect filter or imperfect implementation permits an upward or insufficiently downward move between energy layers. In the ideal analysis this leakage is zero. In the noisy implementation, the same hitting-time proof can be interpreted as applying to a perturbed Markov chain with a reduced effective downward-drift probability, provided $q_{\rm eff}$ is smaller than the drift margin $p_{\min}$. This increases the required number of cooling rounds $N_T(s_i)$ and contributes an accumulated channel error bounded by $\sum_iN_T(s_i)\delta_{\rm step}(s_i)$.
At the level of this logical-resource estimate one may take, schematically,
\[
\delta_{\rm step}
\lesssim
\delta_{\rm filt}
+
\delta_{\rm QSVT}
+
\delta_{\rm split}
+
\delta_{\rm dil}
+
\delta_{\rm samp},
\]
where the terms denote filter leakage, QSVT polynomial error, product-formula error, dilation/measurement error, and sampling error over jump operators, respectively. We do not attempt to optimize these terms independently here; instead, they are absorbed into the target simulation precision used for the quoted logical estimates.
We treat this effective-leakage picture as implementation bookkeeping rather than as a new asymptotic theorem: the formal runtime theorem is stated for the ideal Lindbladian channel, while the estimates above indicate how finite-precision implementation errors would enter the number of cooling rounds and the accumulated final error.

Reducing the cost of implementing block encodings of electronic structure Hamiltonians $\mathrm{Cost}(U_H)$ has been a pervasive subject of recent quantum algorithms research. There are now many schemes for compressing and encoding electronic structure Hamiltonians into quantum circuits \cite{Motta_2021, Low_2025}. At present, the state-of-the-art is achieved by using a combination of several compression protocols resulting in the ``DFTHC-BLISS-SA" block encoding~\cite{Low_2025}. These optimized block encodings have established the lowest quantum resource cost for investigating chemical reactions, enabling quantum solutions to several application areas that carry significant scientific utility.

\cref{fig:joe_plot} summarizes our logical resource estimates for a single dissipative time step $W(\sqrt{\tau})$ along with the cost for quantum phase estimation (QPE) with chemical accuracy $\sigma_\text{PEA}=1$ mHa for representative systems for each of these application areas. In the strongly correlated regimes in which these systems live, we report resource estimates for $N_\omega = 1000$ Fourier modes for the filter function $f(t)$ given that spectral gaps on the order of 1 mHa are common when studying transition states. The QPE costs presented are single-shot QPE costs and do not include the repetition overhead associated with imperfect initial-state overlap. At strongly correlated TS geometries, no generally efficient classical procedure is known that guarantees inverse-polynomial overlap with the exact ground state. In such overlap-limited cases, direct pointwise QPE incurs an additional expected factor $1/p_0$, which may dominate the apparent single-shot cost advantage.

We observe that the cost of implementing the dissipative cooling operator $W$ is more than an order of magnitude greater than that of QPE. This motivates the exploration of methods to decrease the total number of $W$ operators that need to be applied at once during any single coherent run of a quantum circuit during the execution of our algorithm. We can employ techniques developed in~\cite{ceroni2023generatingapproximategroundstates}. The authors demonstrate how machine learning can be used to learn a function $\nu(\mathbf{R}; \boldsymbol{\gamma})$, where $\boldsymbol{\gamma}$ are the neural network parameters and $\mathbf{R}$ are the chemical geometries along the continuation path. This function determines the parameterization of a variational circuit $U(\nu(\mathbf{R}; \boldsymbol{\gamma}))|\psi_{0}(\mathbf{R})\rangle$, where $|\psi_0(\mathbf{R})\rangle$ is some geometry dependent initial state that gets us into the right region of Hilbert space in order to find a low-depth circuit representation $U$ of the ground state at each point along the continuation path. Using this approach to obtain ground states along a continuation path, we could manage with as few as just a single $W$. Our numerical study of the H$_4$ near its square geometry shows that in regions further from the transition state (emulated by the square geometry of H$_4$), only a single $W$ is required to maintain overlap.

\section{Discussion}
\label{sec:discussion}
A large part of the inspiration for this work comes from how tensor-network solvers such as DMRG are used in practice to map the boundary between classically tractable and classically hard regions of chemical continuation paths: in many systems, tensor-network methods provide chemically accurate reference energies and high-quality guiding states over large portions of the path, yet they can degrade near strongly correlated transition-state geometries where the wavefunction becomes highly entangled. We therefore use tensor-network performance as a practical indicator of where to deploy our dissipative state-preparation subroutine: the overall strategy is to exploit classical structure where available (to obtain warm starts and reliable energies) while reserving the quantum overhead of the engineered cooling primitive $W$ for the narrow multi-reference window near the transition state, where the mixing time can be large and classical ans\"atze lose accuracy.

In this work, we develop an efficient quantum algorithm, \textit{dissipative continuation}, that transports approximate ground states along classically supplied continuation paths from tractable geometries to strongly correlated TS geometries. The protocol can yield a conditional advantage over pointwise state-preparation methods in the regime where direct TS overlap is small, neighboring low-energy subspaces remain smoothly connected, and the engineered dissipative primitive mixes rapidly in the relevant energy window. In this regime, the number of ideal dissipative cooling steps scales as $\widetilde{O}(C_\text{DK}^2 N_o^2/\epsilon_E)$ for total energy error, equivalently $\widetilde{O}(C_\text{DK}^2 N_o/\epsilon_{\rm int})$ for intensive error, while the physical gate complexity is obtained by multiplying by the cost of implementing each Lindbladian step. A continuation path may be conditioned to reduce inexpensive surrogates for the Davis--Kahan cost, although unweighted path-length minimization alone is not guaranteed to reduce the exact $C_\text{DK}$. Within this specified setting, dissipative continuation improves over pointwise overlap-limited preparation and offers a complementary alternative to coherent transport methods such as digital adiabatic evolution and phase randomization~\cite{Berry2025Large, Boixo2009Eigenpath}. 

Our results build on recent progress in quantum algorithms for simulating dissipative dynamics, and we provide a complexity guarantee for the quantum state-preparation component under explicit path-smoothness and local rapid-mixing assumptions.However, the result should be read as a conditional state-preparation theorem for a structured hybrid workflow, not as an unconditional solution to transition-state search, active-space selection, or all classically hard transition-state instances. Strongly correlated, classically hard instances of this type are ubiquitous in nature~\cite{Reiher_2017,Wardzala2024,Rudshteyn2020AFQMC_TMComplexes, otten2024quantumresourcesrequiredbinding}. FeMoco, for example, has long served as a canonical strongly correlated target for quantum chemistry; recent work has demonstrated an equilibrium-geometry classical ansatz with very high overlap for FeMoco, rendering QPE efficient at or near equilibrium~\cite{zhai2026classicalsolutionfemoco} and strengthening the case for concentrating quantum resources on out-of-equilibrium regimes where classical tensor-network methods remain most challenged. In our setting, equilibrium-quality classical solutions provide warm starts near reactant or intermediate geometries, while the quantum dissipative step is used to propagate through the transition-state window where classical methods fail, enabling chemical-accuracy barrier heights relevant for screening and design of FeMoco-inspired synthetic analogues~\cite{Xu2025-femoco}. \footnote{FeMoco is used here as motivation for the kind of strongly correlated, chemically important systems where such a workflow could be valuable, rather than as evidence that the specific smoothness and drift assumptions have already been verified for a FeMoco transition path.}

Looking ahead, several complementary directions could further reduce costs and broaden applicability. First, trajectory-style approaches suggest that even faster mixing may be achievable by interleaving dissipative continuation with QPE-based primitives~\cite{JiangIPAM2026Trajectory}; consistent with this, our logical resource estimates (\cref{fig:joe_plot}) indicate that QPE can be substantially cheaper than a single application of $W$, suggesting that QPE interleaving could reduce the required number of $W$ applications. Second, approximate ground-state generation methods such as that of Ceroni \textit{et al.}~\cite{ceroni2023generatingapproximategroundstates} may further reduce the total $W$ budget over an entire pathway. Third, it is useful to distinguish \textit{basis-set error} from \textit{algorithmic hardness}: O'Gorman \textit{et al.}~\cite{OGorman2022Intractability} prove that approximating the ground-state energy of the electronic-structure Hamiltonian can remain QMA-complete even when restricted to a fixed single-particle basis and fixed particle number, by encoding hardness in the choice of orbitals used to discretize the problem, and they further show NP-completeness of determining the lowest-energy Slater determinant (Hartree--Fock) within such a fixed basis. They also explicitly pose as open directions whether the problem becomes easier when one \textit{promises} a bounded basis-set error (i.e., that the chosen basis contains a state sufficiently close to the true ground state) and, separately, how complexity changes when the basis is drawn from a parameterized family and must be selected as part of the instance. 

This suggests a principled interface between representation and dynamics. Active-space selection and TS validation remain classical electronic-structure tasks, but once a chemically validated active-space envelope or sequence of corresponding spaces has been chosen, the residual orbital gauge and continuation-path parameterization become controllable design degrees of freedom for the quantum state-preparation problem. Our protocol does not assume that basis selection is solved by the quantum routine, but it does exploit representation choices that improve the smoothness of the Hamiltonian path seen by the quantum computer. Finally, since coherent transport, QPE refinement, and dissipative continuation are not mutually exclusive, a unified theory of when coherent transport suffices and when it is optimal to switch to dissipative refinement would be valuable; in this direction, quantum speed-limit results for adiabatic algorithms at zero and finite temperature~\cite{chou2026universalscalingfinitetemperaturequantum} may provide principled criteria for strategy switching, including variants that transport a Gibbs state and cool only near the end, or that learn finite temperature reaction rates and extrapolate toward the near-room-temperature regime rather than strictly preparing the ground state directly.

Another promising direction for future work is to incorporate into our algorithm the \textit{Dissipative Quantum Eigensolver} (DQE) framework of Cubitt~\cite{cubitt2023dissipativegroundstatepreparation}, which formalizes ground-state preparation as an iterated \textit{quantum instrument}: the algorithm repeatedly applies weak, local generalized measurements (realizing an AGSP as one Kraus branch) and uses an explicit \textit{stopping rule} that depends only on the observed measurement record, thereby providing an intrinsic operational ``success certificate'' without requiring coherent phase estimation or deep reversible arithmetic. Notably, Cubitt emphasizes that a large portion of the DQE circuit can be implemented with Clifford operations, with only a single non-Clifford single-qubit rotation applied on a fixed ancilla line in a representative construction---an architectural feature that directly aligns with our emphasis on low-$T$ overhead primitives and mid-circuit measurement as an algorithmic resource. Building on this, Purcell, Rajput, and Cubitt~\cite{purcell2025faultresiliencedissipativeprocessesquantum} prove that for \textit{stabilizer-encoded} (geometrically local) Hamiltonians of the type that naturally arise in fermion-to-qubit encodings, a modified DQE procedure under circuit-level depolarizing noise can suppress the additive error in the final ground-space overlap \textit{exponentially in the code distance}, moving toward fault-tolerant-quality state preparation \textit{without} the full overhead of universal fault tolerance; in contrast, they show that dissipative quantum computation (as a model for general computation) does not enjoy such an intrinsic robustness advantage over the standard circuit model.

\section*{Acknowledgments}
The authors thank Ryan Mann, Gabriel Waite, Sophie Crouch, Mauro E. S. Morales, Lin Lin, Saurabh Totey and Adam Holmes for helpful discussions. S.J.E., T.W.W. and N.N. would also like to thank the attendees of the IPAM: Frontiers for Open Systems Workshop of whom there are too many to name.
M.J.B., S.S., D. C., and S.J.E. acknowledge the support of the ARC Centre of Excellence for Quantum Computation and Communication Technology (CQC2T), project number CE17010001. S. S. acknowledges support from the SQA.
Finally, we owe a debt of gratitude to Itay Hen for organizing a collaboration venue at USC.

\onecolumngrid

\appendix

\pagebreak
\section{Preliminaries}
\label{app:preliminaries}

This appendix collects the mathematical objects, norms, and conventions used throughout the proofs.
Unless stated otherwise, all operators act on a finite-dimensional Hilbert space $\mathcal{H}$ of
dimension $D$ (e.g., the FCI space in a fixed symmetry sector at a fixed geometry).

\subsection*{Quantum channels and continuous-time dynamics}

A quantum channel is a completely positive trace-preserving (CPTP) map
$\mathcal{E}:\mathcal{B}(\mathcal{H})\to\mathcal{B}(\mathcal{H})$. A (time-homogeneous) Markovian open-system
evolution is described by a one-parameter semigroup $\{e^{t\mathcal{L}}\}_{t\ge 0}$ of CPTP maps with generator
$\mathcal{L}$ (the \textit{Lindbladian}).

\begin{definition}[GKLS (Lindblad) generator]
\label{def:gkls}
    A linear map $\mathcal{L}:\mathcal{B}(\mathcal{H})\to\mathcal{B}(\mathcal{H})$ is a Lindbladian (GKLS generator)
    if it can be written as
    \begin{equation}
        \label{eq:gkls}
        \mathcal{L}(\rho)=-i[H,\rho]
        +
        \sum_{a\in\mathcal{A}}
        \Big(
        K_a \rho K_a^\dagger
        -\tfrac12\{K_a^\dagger K_a,\rho\}
        \Big),
    \end{equation}
    where $H=H^\dagger$ is Hermitian and $\{K_a\}_{a\in\mathcal{A}}$ are jump operators.
\end{definition}

In the main text and appendices we often focus on the \textit{purely dissipative} case (drop the Hamiltonian
commutator term) because the cooling analysis concerns population flow to low energy.

\subsection*{Energy basis, dephasing, and induced population dynamics}

Fix a geometry $s$ and let $H(s)=\sum_{j=0}^{D-1}E_j(s)\ket{E_j(s)}\!\bra{E_j(s)}$ be the electronic Hamiltonian (diagonal in its eigenbasis). For a state $\rho$, define the associated \textit{energy populations}
\[
    p_j \ \coloneqq\ \bra{E_j(s)}\rho\ket{E_j(s)},
    \qquad p=(p_0,\dots,p_{D-1})\in\Delta^{D-1}.
\]
We write $\mathcal{D}_s(\rho)\coloneqq\sum_j \bra{E_j(s)}\rho\ket{E_j(s)}\ket{E_j(s)}\!\bra{E_j(s)}$ for the dephasing channel in the energy basis at $s$.

Many dissipative constructions (including filtered-jump implementations used here) admit a convenient \textit{population picture}: after each short-time dissipative step, one may track the evolution of the diagonal of $\rho$ in the energy basis. This motivates the following induced Markov kernel.

\begin{definition}[Markov kernel induced by jump operators]
    \label{def:markov-kernel}
    Fix an orthonormal basis $\{\ket{j}\}_{j=0}^{D-1}$ (we will take $\ket{j}=\ket{E_j(s)}$). Given jump operators $\{K_a\}_{a\in\mathcal{A}}$, define unnormalized transition weights
    \begin{equation}
        \label{eq:gamma_jk_prelim}
        \Gamma_{j\to k}
        \ \coloneqq\
        \sum_{a\in\mathcal{A}}\big|\langle k|K_a|j\rangle\big|^2,
    \end{equation}
    and the associated row-stochastic kernel
    \begin{equation}
        \label{eq:P_prelim}
        P_{j\to k}
        \coloneqq
        \begin{cases}
        \dfrac{\Gamma_{j\to k}}{\sum_{\ell=0}^{D-1}\Gamma_{j\to \ell}},
        & \sum_{\ell=0}^{D-1}\Gamma_{j\to \ell}>0,\\[8pt]
        \delta_{jk},
        & \sum_{\ell=0}^{D-1}\Gamma_{j\to \ell}=0.
        \end{cases}
    \end{equation}
    The second line covers states with no outgoing jump weight. In particular, for an ideal strictly cooling filter this convention makes the ground state absorbing, so $P_{0\to0}=1$ and $P_{0\to k}=0$ for $k\ne0$.
    Then a population vector $p$ updates as $p'_k=\sum_j p_j P_{j\to k}$.
\end{definition}

The eigenstate thermalization hypothesis (ETH) posits that, in generic nonintegrable quantum many-body systems, individual energy eigenstates reproduce thermal expectation values for local observables; in our case, quasi-local filtered jump operators. This property motivates a \textit{uniform drift} condition that formalizes the resulting behavior of the Markov process with kernel $P$ associated with the dissipative dynamics.
\begin{definition}[Uniform monotone downward drift]
    \label{def:uniform_drift}
    The kernel $P$ has \textit{uniform monotone downward drift} on the relevant energy window if the following two conditions hold. First, the layer index is non-increasing under one step of the chain, equivalently $P_{j\to k}=0$ whenever $L(k)>L(j)$. Second, every non-ground state in the window has a uniformly positive probability of decreasing by at least one layer:
    \begin{equation}
        \sum_{k:E_k\le E_j-\Delta E_{\mathrm{layer}}} P_{j\to k}\ \ge\ p_{\min}
        \qquad \text{for some }p_{\min}>0.
        \label{eq:uniform-drift}
    \end{equation}
\end{definition}
See~\cref{sec:eth-shell-mixing} for further details behind the connection between uniform drift and the ETH.

\subsection*{Continuation-path Hamiltonians and discretization}

A continuation path is a smooth map $\mathbf{R}:[0,1]\to\mathbb{R}^{3N}$ and $H(s)\coloneqq H_\mathrm{el}(\mathbf{R}(s))$.
We discretize $[0,1]$ by $0=s_0<s_1<\cdots<s_{N_H}=1$. A central quantity controlling drift of low-energy
projectors is the Davis--Kahan-type integral
\[
    C_{\mathrm{DK}}\coloneqq\int_0^1 \frac{\|\partial_s H(s)\|}{\Delta(s)}ds,
\]
defined in the main text (\cref{def:dk_constant}).

\subsection*{Filtered jump operators and leakage}

Given an unfiltered operator $A_a$ and a (time-domain) kernel $f$, we define the filtered operator
\begin{equation}
    \label{eq:filtered_jump_prelim}
    K_a\ \coloneqq\ \int_{\mathbb{R}} f(t)e^{itH}A e^{-itH}dt,
\end{equation}
with Fourier transform $\widehat f(\omega)=\int_{\mathbb{R}} f(t)e^{it\omega}dt$. In the eigenbasis of $H$, this yields $ \bra{E_j}K_a\ket{E_k}=\widehat f(E_j-E_k)\bra{E_j}A_a\ket{E_k}$. A \textit{cooling} filter is chosen so that $\widehat f(\omega)$ suppresses or removes contributions from heating Bohr frequencies (typically $\omega>0$), leading to small \textit{ground-state leakage} $\|K_f\ket{E_0}\|\le \varepsilon_{\mathrm{leak}}$.

\subsection*{Hitting time, layers, and warm starts}

We will use a layer discretization of energy. Fix $\Delta E_{\mathrm{layer}}>0$ and define the layer index
\begin{equation}
    L(j)\ \coloneqq\ \left\lceil \frac{E_j-E_0}{\Delta E_{\mathrm{layer}}}\right\rceil \in \mathbb{N}_0,
    \label{eq:app_layer}
\end{equation}
so $L(0)=0$. For the Markov chain $(J_t)_{t\ge 0}$ with kernel $P$, define the ground-state hitting time
\[
    \tau_0\ \coloneqq\ \inf\{t\in\mathbb{N}_0:\ J_t=0\}.
\]
A \textit{warm start} is an initial law $\mu$ such that most mass lies in low layers, e.g.,
$\Pr_{J_0\sim\mu}(L(J_0)>L_\mu)\le \eta$ for some $L_\mu$ and small $\eta$.

\section{Efficient ground-state preparation along continuation paths}
\label{sec:section1}

\cref{thm:reaction-path-runtime} establishes sufficient conditions for efficient ground-state preparation along continuation paths, particularly in the strongly correlated regime near transition states. The proof relies on decomposing the problem into three lemmas that collectively demonstrate an ideal cooling-step count linear in the number of orbitals $N_o$ (up to polylogarithmic factors) once the path discretization has been fixed.  In this section, we first formalize the main complexity result in \cref{thm:reaction-path-runtime} then provide proofs for all theorems and the supporting lemmas.

\begin{theorem}[Efficient ground-state preparation along continuation paths]
    \label{thm:reaction-path-runtime}
    Consider a continuation path whose electronic Hamiltonians have been represented in a locally compatible active-space envelope and Procrustes-aligned orbital gauge, so that the resulting embedded Hamiltonian path satisfies the Lipschitz condition in~\cref{def:regular-path}. Let the path be discretized into $N_H$ steps, with the final point corresponding to the target TS geometry. Under the conditions of \cref{lem:eth-informal} and \cref{lem:linear_cooling_informal}, for any target total energy error $\epsilon_E > 0$, the dissipative-continuation algorithm prepares a final state $\rho_{N_H}$ satisfying
    \[
        \bigl|\mathrm{tr}(H(s^\star)\rho_{N_H}) - E_0(s^\star)\bigr| \le \epsilon_E
    \]
    with a total ideal dissipative cooling-step complexity of
    \begin{equation}
        \widetilde{O}\!\left( \frac{C_{\mathrm{DK}}^2 N_o^2}{\epsilon_E} \right).
    \end{equation}
    Equivalently, for the intensive target precision $\epsilon_{\rm int}\coloneqq\epsilon_E/N_o$, the complexity is
    \begin{equation}
        \widetilde{O}\!\left( \frac{C_{\mathrm{DK}}^2 N_o}{\epsilon_{\rm int}} \right).
    \end{equation}
    Here $C_{\mathrm{DK}}$ is the geometric Davis--Kahan constant defined in~\cref{def:dk_constant}. The logical gate count additionally includes the cost of implementing each filtered Lindbladian step.
\end{theorem}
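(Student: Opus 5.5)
The plan follows the proof outline in the main text: combine a discretization bound coming from \cref{lem:dk_constant} with a per-point cooling bound coming from \cref{lem:eth-informal} and \cref{lem:linear_cooling_informal}, then close everything with an error-accumulation argument.

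\emph{Step 1: choosing the grid.} Let $\theta_i\coloneqq\int_{s_{i-1}}^{s_i}\|\partial_sH\|/\Delta\,\mathrm{d}s$. I would place the grid points $s_i$ so that each $\theta_i$ equals a common value $\theta$. The number of points is then
\[
N_H=\lceil C_{\mathrm{DK}}/\theta\rceil .
\]
By \cref{lem:dk_constant}, the exact ground state at $s_{i-1}$ has infidelity at most $\theta^2$ with the ground state at $s_i$.

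\emph{Step 2: excess energy of the transported state.} Suppose the prepared state $\rho_{i-1}$ is close to $\ket{E_0(s_{i-1})}$. I need to bound its excess energy under $H(s_i)$. I would split this into two contributions:
\[
\mathrm{tr}\bigl(H(s_i)\rho_{i-1}\bigr)-E_0(s_i)\le \|H(s_i)-H(s_{i-1})\| + \bigl(\text{energy-error term from } \rho_{i-1}\bigr).
\]
The first term is at most $ML_h|s_i-s_{i-1}|$ by Lipschitz continuity. For the second, I would use the infidelity bound together with $\|H\|=O(N_o)$ on the window, which gives $c_EN_o$ excess energy. This is exactly the hypothesis of \cref{lem:linear_cooling_informal}. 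Taking that lemma's concentration assumption as given with failure probability $\eta_i$, the lemma yields $t_i=O(N_o+\ln(1/\varepsilon_i))$ cooling steps. After those steps the state lies in the ground state with probability at least $1-\eta_i-\varepsilon_i$.

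\emph{Step 3: accumulating errors.} The ideal strictly cooling chain has the ground state as an absorbing state, so failures do not compound multiplicatively. Pushing the per-step deviation through with a union bound, the total failure mass is at most $\sum_i(\eta_i+\varepsilon_i)$. The residual excited population at $s^\star$ contributes energy error at most (failure mass)$\times O(\|H(s^\star)\|)=O(N_o)\cdot(\text{failure mass})$.

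\emph{Step 4: setting parameters and counting.} I would choose $\eta_i,\varepsilon_i=\Theta(\epsilon_E/(N_oN_H))$. The per-point cooling cost then becomes $O(N_o+\log(N_oN_H/\epsilon_E))$, where the logarithm is absorbed into $\widetilde O$. For the discretization, I would choose $\theta^2=\Theta(\epsilon_{\rm int}/C_{\mathrm{DK}})$, following the main-text prescription that each interval contributes $O(\epsilon_{\rm int}/C_{\mathrm{DK}})$. This gives $N_H=\widetilde O(C_{\mathrm{DK}}^2/\epsilon_{\rm int})$. The total count is then
\[
\sum_i t_i=N_H\cdot\widetilde O(N_o)=\widetilde O\bigl(C_{\mathrm{DK}}^2N_o/\epsilon_{\rm int}\bigr)=\widetilde O\bigl(C_{\mathrm{DK}}^2N_o^2/\epsilon_E\bigr).
\]

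\emph{Main obstacle.} The weakest point is justifying that specific $\theta$, rather than the cheaper choice $\theta=O(1)$ that \cref{lem:dk_constant} alone would permit. I would argue that $\theta$ must be this small for two reasons. First, the warm-start concentration in \cref{lem:linear_cooling_informal} must hold with inverse-polynomial $\eta_i$, and Markov's inequality only supports that if the per-step infidelity, and hence the tail mass above $c_EN_o$, is small. Second, the leftover rotation error at the final step must already be within $\epsilon_{\rm int}$ per orbital. Since the paper explicitly treats the window concentration as an assumption, I would state this step conditionally and let the stated $N_H$ scaling follow from the main-text discretization choice.
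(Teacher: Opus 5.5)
Your overall architecture matches the paper's: a DK-adapted mesh, per-point cooling from \cref{lem:eth-informal} and \cref{lem:linear_cooling_informal}, and $\eta_i,\varepsilon_i=\Theta(\epsilon_E/(N_oN_H))$. The gap is that the step producing the $C_{\mathrm{DK}}^2/\epsilon_E$ dependence is missing. In your Step 3 the final error is $O(N_o)\sum_i(\eta_i+\varepsilon_i)$, and no term depends on the mesh parameter $\theta$, because the transport infidelity is absorbed entirely into the assumed concentration hypothesis. Nothing in your argument therefore determines $N_H$. As you note yourself, your ``two reasons'' do not yield a specific $\theta$, so the scaling is imported rather than derived.

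The imported choice is also misapplied. With $\theta^2=\Theta(\epsilon_{\rm int}/C_{\mathrm{DK}})$ you get $N_H=\lceil C_{\mathrm{DK}}/\theta\rceil=\Theta(C_{\mathrm{DK}}^{3/2}\epsilon_{\rm int}^{-1/2})$, not $\widetilde{O}(C_{\mathrm{DK}}^2/\epsilon_{\rm int})$. The accumulated infidelity is then $N_H\theta^2=\Theta(\sqrt{C_{\mathrm{DK}}\epsilon_{\rm int}})$, which is not $O(\epsilon_{\rm int})$. The main-text prescription puts the unsquared per-interval DK integral $\theta$ at $O(\epsilon_{\rm int}/C_{\mathrm{DK}})$, so that $N_H\theta^2=O(\epsilon_{\rm int})$.

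The paper closes this with an explicit excited-mass recurrence in which the transport loss is kept as a separate, accumulating term. The equal-mass mesh $\int_{s_i}^{s_{i+1}}g\,\mathrm{d}s\le C_{\mathrm{DK}}/N_H$ and \cref{lem:davis-kahan} give a per-step loss $\varepsilon_{dR}=(C_{\mathrm{DK}}/N_H)^2$. The proof tracks $\delta_i=1-\langle E_0(s_i)|\rho_i|E_0(s_i)\rangle$ and uses $\delta_{i+1}\le(\eta+\varepsilon_{\rm mix})+\varepsilon_{dR}$. Because the state cooled at $s_i$ is reinterpreted at $s_{i+1}$, the last transport to $s^\star$ also enters the final error. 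Iterating gives $\delta_{N_H}\le N_H(\eta+\varepsilon_{\rm mix})+C_{\mathrm{DK}}^2/N_H$.

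The paper then multiplies by the extensive-window factor $c_EN_o$, rather than by $\|H(s^\star)\|$, which is not among the hypotheses as $O(N_o)$. Balancing forces $N_H=\Theta(c_EN_oC_{\mathrm{DK}}^2/\epsilon_E)$ with $\eta,\varepsilon_{\rm mix}=\Theta(\epsilon_E/(c_EN_oN_H))$. Only then does $\sum_it_i=O\bigl(N_H(\overline{L}_\mu+\ln(N_H/\epsilon_E))/p_{\min}\bigr)$ with $\overline{L}_\mu=O(N_o)$ give the stated $\widetilde{O}(C_{\mathrm{DK}}^2N_o^2/\epsilon_E)$.

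To repair your proof you need this, or some other mechanism, by which $\theta$ enters the final error bound. Otherwise the Davis--Kahan lemma plays no quantitative role and the stated $N_H$ has no justification.
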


\subsection{Formal rank-one Davis--Kahan overlap bound}
\label{sec:dk_appendix}

\begin{lemma}[Formal rank-one Davis--Kahan overlap bound]
    \label{lem:davis-kahan}
    Let $H(s)$ be a Lipschitz-continuous continuation-path Hamiltonian as in \cref{def:regular-path}. Assume that $H(s)$ has a non-degenerate ground state $|E_0(s)\rangle$ and gap $\Delta(s)=E_1(s)-E_0(s)>0$ for all $s$ along the interval under consideration. Then, for any $0\le s_a<s_b\le 1$,
    \begin{equation}
        1-\bigl|\langle E_0(s_a)|E_0(s_b)\rangle\bigr|^2
        \le
        \left(
        \int_{s_a}^{s_b}\frac{\|\partial_s H(s)\|}{\Delta(s)}ds
        \right)^2 .
        \label{eq:rankone_dk_overlap_formal}
    \end{equation}
    Equivalently, the Fubini--Study distance $d_{FS}(\cdot, \cdot)$ between the two ground-state rays is bounded by the same integral without squaring.
\end{lemma}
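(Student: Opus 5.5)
The plan is to follow the sketch given after \cref{lem:dk_constant}: bound the speed of the ground-state ray in projective space, integrate it to get a Fubini--Study distance, and convert that distance to an infidelity. The first issue is regularity. Under \cref{def:regular-path} each fragment $h_\ell(s)$ is piecewise $C^1$. Since the ground state is non-degenerate with $\Delta(s)>0$ on $[s_a,s_b]$, the Riesz projector
\[
P_0(s)=\frac{1}{2\pi i}\oint_\Gamma (z-H(s))^{-1}dz
\]
is then piecewise $C^1$, with the contour $\Gamma$ encircling only $E_0(s)$. I would therefore work on each subinterval where $H$ is $C^1$ and concatenate at the finitely many breakpoints. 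This is legitimate because both the Fubini--Study length and the right-hand integral are additive over subintervals.

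On a $C^1$ subinterval I would fix the parallel-transport gauge $\langle E_0(s)|\partial_s E_0(s)\rangle=0$; this is possible since $P_0$ is $C^1$. Differentiating $H|E_0\rangle=E_0|E_0\rangle$ and applying $Q(s)=1-P_0(s)$ gives
\[
(H-E_0)Q\,\partial_s|E_0\rangle=-Q\,\partial_s H\,|E_0\rangle .
\]
On $\mathrm{ran}\,Q$ the operator $H-E_0$ is invertible with inverse norm at most $1/\Delta(s)$. In the chosen gauge $\partial_s|E_0\rangle=Q\partial_s|E_0\rangle$, so
\[
\|\partial_s|E_0(s)\rangle\|\le \frac{\|\partial_sH(s)\|}{\Delta(s)},
\]
which is \cref{eq:local_rotation_bound}.

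Next I would bound the distance between the endpoint rays by the length of the curve joining them. Let $\theta=\arccos|\langle E_0(s_a)|E_0(s_b)\rangle|\in[0,\pi/2]$ be the Fubini--Study distance. The claim is $\theta\le\int_{s_a}^{s_b}\|\partial_s|E_0\rangle\|ds$. The cleanest route avoids Riemannian machinery by setting $f(s)=|\langle E_0(s_a)|E_0(s)\rangle|$ and $\phi(s)=\arccos f(s)$. With $\psi=|E_0(s)\rangle$, decompose $|E_0(s_a)\rangle=\langle\psi|E_0(s_a)\rangle\psi+r$, where $\|r\|=\sin\phi$ and $r\perp\psi$. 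Because $\dot\psi\perp\psi$ in the parallel gauge, only the component along $r$ contributes to $\frac{d}{ds}f$, so Cauchy--Schwarz gives $|\dot f|\le\sin\phi\,\|\dot\psi\|$. Hence $|\dot\phi|\le\|\dot\psi\|$ wherever $f<1$. The points where $f=1$ need a separate argument, and this is the step I expect to be the main (if minor) obstacle. I would handle it by working with the Lipschitz function $\phi$ on the set where it is positive, or equivalently by replacing $\arccos$ with a smoothed version and passing to a limit. Integrating and summing over the pieces then gives $\theta\le\int_{s_a}^{s_b}\|\partial_sH\|/\Delta\,ds=:I$, which is the Fubini--Study form of the statement.

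Finally, $1-|\langle E_0(s_a)|E_0(s_b)\rangle|^2=\sin^2\theta\le\theta^2\le I^2$, which is \cref{eq:rankone_dk_overlap_formal}. If $I\ge\pi/2$ the bound still holds, since the left side is at most $1$. I would close by remarking that the same argument, with the rank-one projector replaced by a spectral projector and the gap by the separation from the rest of the spectrum, gives the general subspace version alluded to in the main text.
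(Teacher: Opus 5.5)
Your proposal is correct and follows the same route as the paper: fix the parallel-transport gauge, project the differentiated eigenvalue equation onto the excited subspace to get $\|\partial_s|E_0(s)\rangle\|\le\|\partial_sH(s)\|/\Delta(s)$, integrate this to bound the Fubini--Study distance, and finish with $\sin^2\theta\le\theta^2$. The differences are minor: you use the reduced resolvent on $\mathrm{ran}\,Q$ instead of the eigenbasis sum, and you actually prove the step that distance is at most path length (including the piecewise-$C^1$ concatenation and the endpoint issue at $f=1$, which is handled cleanly by integrating from the last zero of $\phi$), whereas the paper simply asserts that step.
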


\begin{proof}
    Choose the phase of $|E_0(s)\rangle$ by parallel transport, so that
    \[
        \langle E_0(s)|\partial_s E_0(s)\rangle=0 .
    \]
    Differentiating the eigenvalue equation
    $H(s)|E_0(s)\rangle=E_0(s)|E_0(s)\rangle$ and projecting onto an excited eigenstate $|E_j(s)\rangle$, $j>0$, gives
    \[
        \langle E_j(s)|\partial_s E_0(s)\rangle
        =
        \frac{\langle E_j(s)|\partial_s H(s)|E_0(s)\rangle}{E_0(s)-E_j(s)} .
    \]
    Hence
    \begin{align}
        \|\partial_s |E_0(s)\rangle\|^2
        &=
        \sum_{j>0}
        \frac{|\langle E_j(s)|\partial_s H(s)|E_0(s)\rangle|^2}{(E_j(s)-E_0(s))^2} \\
        &\le
        \frac{\|\partial_s H(s)|E_0(s)\rangle\|^2}{\Delta(s)^2}
        \le
        \frac{\|\partial_s H(s)\|^2}{\Delta(s)^2} .
        \label{eq:parallel_transport_velocity_bound}
    \end{align}
    Therefore the Fubini--Study path length obeys
    \[
        d_{\rm FS}\bigl(E_0(s_a),E_0(s_b)\bigr)
        \le
        \int_{s_a}^{s_b}\|\partial_s |E_0(s)\rangle\|ds
        \le
        \int_{s_a}^{s_b}\frac{\|\partial_s H(s)\|}{\Delta(s)}ds .
    \]
    For normalized vectors, after optimizing the relative phase,
    \[
        1-|\langle \psi|\phi\rangle|^2
        =\sin^2 d_{\rm FS}(\psi,\phi)
        \le d_{\rm FS}(\psi,\phi)^2 .
    \]
    Combining these inequalities proves~\cref{eq:rankone_dk_overlap_formal}.
\end{proof}

\subsection{Formal rapid cooling bound via ETH-motivated drift}
\label{sec:eth-shell-mixing}

For our dissipative cooling protocol, the relevant mechanism is that the chosen few-body jump operators $\{K_a\}$ induce a transition graph $G=(V,E)$ on the eigenstates $V=\{|E_i\rangle\}$, with an edge $(i,j)\in E$ whenever $\langle E_i|K_a|E_j\rangle\neq 0$ for some $a$ (See~\cref{fig:main_fig}a for an example transition graph visualized). In the multi-reference transition-state (TS) regime, the state support spans a vast subset of this graph. Provided the jump set avoids symmetry-protected bottlenecks (e.g., approximate conserved quantities, weakly coupled fragments, or selection-rule dead ends), the resulting Lindblad dynamics is expected to be \textit{rapidly mixing} within a narrow energy window, while simultaneously opening many \textit{energy-lowering} paths between adjacent shells. This expectation is most justified when the engineered jumps are explicitly energy-biased (e.g., approximately satisfy a detailed-balance/Davies condition, or otherwise preferentially induce downward transitions in energy).

The population update induced by evolving a quantum state according to a Lindblad equation defines a Markov kernel $P$ on energy-eigenstate indices $\{0,1,\dots,D-1\}$ via $p'_k=\sum_j p_j P_{j\to k}$, and we take the ground state to be absorbing. The analysis is phrased in terms of (i) a layer size $\Delta E_{\mathrm{layer}}=O(1)$ and (ii) a one-step downhill drift probability $p_{\min}$, defined below, which lower-bounds the chance of decreasing energy by at least one layer while the chain remains inside a microcanonical window $\mathcal{J}$. Fix a microcanonical window centered at energy $E_\star$ with width $\Delta E$, i.e.,
\[
    \mathcal{J}
    \coloneqq
    \{ j:\ E_j \in [E_\star-\tfrac{\Delta E}{2},E_\star+\tfrac{\Delta E}{2}] \},
    \qquad
    D_{\mathrm{shell}}\coloneqq|\mathcal{J}|,
\]
where for each $j,k\in \mathcal{J}$, write $E\coloneqq\tfrac{1}{2}(E_j+E_k)$ and $\omega\coloneqq E_k-E_j$.

For practical applications, we consider the jump set consisting of fermionic bilinears $A_{ij}^{(1)}=c_i^\dagger c_j+c_j^\dagger c_i$ and $A_{ij}^{(2)}=i(c_i^\dagger c_j-c_j^\dagger c_i)$ for $i<j$. ETH suggests that within such a shell the matrix elements of any few-body operator $A$ (in particular $A\in\{A^{(1)}_{ij},A^{(2)}_{ij}\}$) take the form
\begin{equation}
    \langle E_k|A|E_j\rangle
    = \bar{A}(E)\delta_{jk} + e^{-S(E)/2} f_A(E,\omega) R_{jk},
\end{equation}
where  $E=(E_i+E_j)/2$, $\bar{A}(E)$ is a smooth microcanonical (diagonal) expectation, $S(E)$ is the microcanonical entropy, $f_A(E,\omega)$ is a smooth envelope that varies only on $O(1)$ energy scales, and $R_{jk}$ is an $O(1)$ zero-mean random variable encoding state-to-state fluctuations~\cite{Deutsch_2018}. Since $e^{S(E_\star)}\sim D_{\mathrm{shell}}$ (up to subextensive constants set by $\Delta E$~\cite{DAlessio_2016}), for $j\neq k$ in the shell and for $\omega$ restricted to a bounded interval one has the scaling
\[
    |\langle E_k|A|E_j\rangle|^2 \sim e^{-S(E)} \sim \frac{1}{D_{\mathrm{shell}}}.
\]
Thus, after row-normalization, the induced transition probabilities $P_{j\to k}$ have no parametrically preferred targets within the energetically allowed set: for a fixed source state $j\in\mathcal{J}$, the probability mass is spread over $\Theta(D_{\mathrm{shell}})$ accessible states $k$, up to the smooth dependence encoded by $f_A(E,\omega)$.

Now suppose the filter function $f(t)$ that we employ to construct filtered versions of local jump operators is (approximately) one-sided, so that only downhill energy differences $\omega<0$ contribute appreciably.  After row-normalization, this implies that conditioned on being in an eigenstate $j$ in the shell, the induced transition probabilities $P_{j\to k}$ are roughly comparable across energetically allowed downhill targets $k$ (no strong bottlenecks), so the probability of a ``layer-success'' event
\[
    E_k \le E_j-\Delta E_{\mathrm{layer}}
\]
is controlled primarily by how many downhill targets satisfy this inequality. Equivalently, define the success set $S_j=\{k:\ E_k\le E_j-\Delta E_{\mathrm{layer}}\}$. ETH
suggests the approximation~\cite{DAlessio_2016}
\[
    \sum_{k\in S_j}P_{j\to k}
    \ \approx\
    \frac{\#\{k \text{ accessible}: E_k\le E_j-\Delta E_{\mathrm{layer}}\}}
     {\#\{k \text{ accessible}: E_k < E_j\}}
    \ \approx\
    \frac{\int_{-\infty}^{E_j-\Delta E_{\mathrm{layer}}}\rho(E)dE}
     {\int_{-\infty}^{E_j}\rho(E)dE},
\]
where $\rho(E)$ is the many-body density of states (restricted to the relevant symmetry sector). To estimate this ratio, write $\rho(E)=e^{S(E)}$ with entropy $S(E)$ and microcanonical inverse temperature $\beta(E)=\partial_E S(E)$.  For a fixed $O(1)$ energy decrement $\Delta E\equiv \Delta E_{\mathrm{layer}}$, expand
\[
    S(E_j-\Delta E)=S(E_j)-\beta(E_j)\Delta E+O(\Delta E^2 |S''(E_j)|).
\]
If $\beta(E)$ varies slowly across the microcanonical window (and $\Delta E=O(1)$), the quadratic correction is subleading and one obtains the standard density-of-states estimate
\[
    \frac{\rho(E_j-\Delta E)}{\rho(E_j)} \approx e^{-\beta(E_j)\Delta E}.
\]
Moreover, because the integrals above are dominated by their upper limits when $\rho(E)$ grows rapidly,
the same scaling carries over (up to constant prefactors) to the cumulative ratio:
\[
    \frac{\int_{-\infty}^{E_j-\Delta E}\rho(E)dE}{\int_{-\infty}^{E_j}\rho(E)dE}
    =\Omega\!\big(e^{-\beta(E_j)\Delta E}\big).
\]
Putting these steps together yields the heuristic constant-order uniform drift bound (condition),
\begin{equation}
    \label{equ:uni_drift}
    \sum_{k:E_k\le E_j-\Delta E_{\mathrm{layer}}}P_{j\to k}
    \geq p_{\min} = \Omega\!\big(e^{-\beta\Delta E_{\mathrm{layer}}}\big),
\end{equation}
throughout a microcanonical window in which $\beta(E)$ is approximately constant. The lemmas below take the existence of such a drift parameter $p_{\min}$ as an explicit assumption and convert it into a quantitative cooling-time bound.

\begin{lemma}[\cref{lem:eth-informal} (Formal)]
    \label{lem:warmstart-cooling-eth-perfect}
    Consider a Markov Chain $(J_t)_ {t \geq 0}$ with associated kernel $P$, hitting time $\tau_0$, and initial law $\mu$ that is concentrated in low energy layers for a fixed layer width $\Delta E_\mathrm{layer} >0$. That is, let $E_j$ be an energy assigned to state $j$, with a \textit{unique} ground state $0$ in the sense that $E_j>E_0$ for all $j\neq 0$, and define the layer index as in \cref{eq:app_layer}, so that initial law (warm-start) satisfies the layer-radius condition  $\Pr_{J_0\sim\mu}(L(J_0)>L_\mu)\le \eta$ for some $\eta\in(0,1)$ and $L_\mu\in\mathbb{N}$.
    If $P$ satisfies the local \textit{uniform downward-drift condition}, then, for any $0<\varepsilon<1$, if
    \begin{equation}
        t\ \ge\ \frac{2L_\mu}{p_{\min}}+\frac{8}{p_{\min}}\ln\frac{1}{\varepsilon},
        \label{eq:t-sufficient}
    \end{equation}
    one has $\Pr(\tau_0>t)\ \le\ \eta+\varepsilon.$ More precisely, for all $t\ge 2L_\mu/p_{\min}$,
    \begin{equation}
        \Pr(\tau_0>t)\ \le\ \eta+\exp\Bigl(-\tfrac{p_{\min}}{8}t\Bigr).
    \end{equation}
\end{lemma}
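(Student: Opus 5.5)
We split on the warm-start event and then compare the layer process with a sum of Bernoulli trials. Let $G\coloneqq\{L(J_0)\le L_\mu\}$, so that $\Pr_\mu(G^c)\le\eta$ by the layer-radius hypothesis. Then
\[
\Pr(\tau_0>t)\le \eta+\Pr(\tau_0>t,\,G).
\]
It therefore suffices to show $\Pr(\tau_0>t\mid J_0=j)\le \exp(-p_{\min}t/8)$ for every $j$ with $L(j)\le L_\mu$ and every $t\ge 2L_\mu/p_{\min}$. Since the ground state is unique, $L(k)=0$ if and only if $k=0$. Hence $\tau_0$ is exactly the first time the layer process $L_t\coloneqq L(J_t)$ reaches $0$.

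Next we couple $L_t$ to a sequence of successes. Monotonicity in \cref{def:uniform_drift} gives $L_t\le L_\mu$ for all $t$ on $G$, so the chain never leaves the window where the drift condition holds. A successful step is one with $E_{J_{t+1}}\le E_{J_t}-\Delta E_{\mathrm{layer}}$. Such a step forces $L_{t+1}\le L_t-1$, because the ceiling in \cref{eq:app_layer} drops by at least one. Define $\xi_{t+1}\coloneqq\mathbf 1\{L_{t+1}\le L_t-1\}$ while $L_t\ge1$, and set $\xi_{t+1}=1$ after absorption. Then
\[
\Pr(\xi_{t+1}=1\mid\mathcal F_t)\ge p_{\min}
\]
almost surely. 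Because the layer index never increases, at most $L_\mu$ successes are needed to reach layer $0$. This gives
\[
\{\tau_0>t\}\subseteq\Bigl\{\textstyle\sum_{s=1}^t\xi_s\le L_\mu-1\Bigr\}.
\]

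Then we need a lower-tail bound for $S_t=\sum_{s\le t}\xi_s$. The $\xi_s$ are not independent, so we use a supermartingale argument. For $\lambda>0$, the process $\exp(-\lambda S_t)\,(1-p_{\min}(1-e^{-\lambda}))^{-t}$ is a supermartingale. A Chernoff bound with $\lambda=1$ (or optimized) then gives
\[
\Pr(S_t\le (1-\delta)p_{\min}t)\le \exp(-\delta^2p_{\min}t/2).
\]
Alternatively, one can stochastically dominate $S_t$ from below by a $\mathrm{Bin}(t,p_{\min})$ variable via a standard uniform-variable coupling and apply the usual Chernoff bound. Take $t\ge2L_\mu/p_{\min}$, so that $L_\mu\le p_{\min}t/2$, and set $\delta=1/2$. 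This yields
\[
\Pr(S_t<L_\mu)\le \exp(-p_{\min}t/8),
\]
which proves the precise statement. Finally, if $t\ge 2L_\mu/p_{\min}+(8/p_{\min})\ln(1/\varepsilon)$, then $\exp(-p_{\min}t/8)\le\varepsilon$, which gives \cref{eq:t-sufficient}.

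The main obstacle is the dependence issue in the third step. It must be checked that the conditional success probability is at least $p_{\min}$ at every visited state, and not just on average. This relies on two facts: on $G$ every visited state stays in the drift window, and the absorbing convention of \cref{def:markov-kernel} makes it harmless to set $\xi=1$ after absorption. The constant $8$ comes from the Chernoff exponent $\delta^2/2$ with $\delta=1/2$.
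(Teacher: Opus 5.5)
Your proof is correct and follows the same structure as the paper's: split on the warm-start event, show that $\{\tau_0>t\}$ forces fewer than $L_\mu$ layer-decreasing successes in $t$ steps, and apply the lower-tail Chernoff bound with $\delta=1/2$ to get $\exp(-p_{\min}t/8)$. The paper handles the dependence via the inverse-transform coupling to i.i.d.\ $\mathrm{Bernoulli}(p_{\min})$ trials that you list as your alternative, while your supermartingale bound on $\mathbb{E}[e^{-\lambda S_t}]$ reaches the same binomial moment bound directly; your use of monotonicity to keep the chain inside the drift window also makes explicit a point the paper glosses over.
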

\begin{proof}
    With the layer index defined above, define the one-step success set
    \begin{equation}
        S_j\coloneqq\{k:\ E_k\le E_j-\Delta E_{\mathrm{layer}}\}
    \end{equation}
    for each state $j$, and the corresponding one-step success indicator
    \begin{equation}
        X_s\coloneqq\mathbf{1}\{J_{s+1}\in S_{J_s}\},
        \qquad s\ge 0.
    \end{equation}
    If $J_{s+1}\in S_{J_s}$, then $E_{J_{s+1}}\le E_{J_s}-\Delta E_{\mathrm{layer}}$, hence
    \begin{equation}
        L(J_{s+1})\le L(J_s)-X_s.
        \label{eq:steps}
    \end{equation}
    Under perfectly downhill dynamics, $L(J_s)$ is non-increasing in $s$. Define the one-step success probability
    \begin{equation}
        p(j)\coloneqq\Pr(J_{s+1}\in S_j\mid J_s=j)=\sum_{k\in S_j}P_{j\to k}.
        \label{eq:1step}
    \end{equation}
    By~\cref{eq:uniform-drift}, $p(j)\ge p_{\min}$ for every $j\neq 0$.
    
    Let $(U_s)_{s\ge 0}$ be i.i.d.\ uniform random variables on $[0,1]$.
    By the standard inverse-transform construction, we may realize each transition as a measurable function of $U_s$ as follows: conditional on $J_s=j$, partition $[0,1]$ into intervals of lengths $\{P_{j\to k}\}_k$, placing the outcomes $k\in S_j$ first. Then~\cref{eq:1step} ensures
    \begin{equation*}
        \{U_s\le p(j)\}\ \Longleftrightarrow\ \{J_{s+1}\in S_j\}.
    \end{equation*}
    Define the Bernoulli benchmark
    \begin{equation}
        X'_s\coloneqq\mathbf{1}\{U_s\le p_{\min}\}.
    \end{equation}
    Since $p(J_s)\ge p_{\min}$ for all $J_s\neq 0$, we have the pathwise domination
    \begin{equation}
        X_s\ \ge\ X'_s \qquad \text{on every time } s \text{ for which } J_s\neq 0.
    \end{equation}
    The variables $(X'_s)_{s\ge 0}$ are i.i.d.\ Bernoulli$(p_{\min})$, so $\sum_{s=0}^{t-1}X'_s\sim \mathrm{Binomial}(t,p_{\min})$. In applying this comparison below we always intersect with $\{\tau_0>t\}$; on that event $J_s\neq0$ for all $s<t$, so the domination holds for every summand used in the comparison.
    
    Iterating~\cref{eq:steps} along a sample path gives
    \begin{equation}
        L(J_t)\ \le\ L(J_0)-\sum_{s=0}^{t-1}X_s.
    \end{equation}
    If $\sum_{s=0}^{t-1}X_s\ge L(J_0)$ then $L(J_t)\le 0$. Since $L(\cdot)\ge 0$, it follows that $L(J_t)=0$. By uniqueness of the ground state (i.e.\ $L(j)=0\iff j=0$), this implies $J_t=0$, so the process must have hit the ground state by time $t$, i.e.\ $\tau_0\le t$.
    Equivalently,
    \begin{equation}
        \{\tau_0>t\}\ \subseteq\ \left\{\sum_{s=0}^{t-1}X_s<L(J_0)\right\}.
        \label{eq:tau-subset}
    \end{equation}
    
    By the warm-start layer-radius condition,
    \begin{align}
        \Pr(\tau_0>t)
        &\le
        \Pr\big(L(J_0)>L_\mu\big)
        +
        \Pr\big(\tau_0>t,\ L(J_0)\le L_\mu\big)\nonumber\\
        &\le
        \eta
        +
        \Pr\!\left(
            \tau_0>t,\ 
            \sum_{s=0}^{t-1}X_s<L(J_0),\
            L(J_0)\le L_\mu
        \right)\nonumber\\
        &\le
        \eta
        +
        \Pr\!\left(
            \tau_0>t,\ 
            \sum_{s=0}^{t-1}X'_s<L_\mu
        \right)\nonumber\\
        &\le
        \eta
        +
        \Pr\!\left(\sum_{s=0}^{t-1}X'_s<L_\mu\right)\nonumber\\
        &=
        \eta
        +
        \Pr\!\left(\mathrm{Bin}(t,p_{\min})<L_\mu\right),
        \label{eq:warmstart-split}
    \end{align}
    where the penultimate step uses the preceding domination on the event $\{\tau_0>t\}$. If $t \ge 2 L_\mu / p_{\min}$ then $L_\mu \le \tfrac{1}{2} p_{\min} t$, and by a standard Chernoff bound,
    \begin{equation}
        \Pr\!\left(\mathrm{Bin}(t,p_{\min}) \le \tfrac{1}{2} p_{\min} t\right)
        \ \le\
        \exp\left(-\frac{p_{\min} t}{8}\right).
    \end{equation}
    Thus for all $t\ge 2L_\mu/p_{\min}$,
    \begin{equation}
        \Pr(\tau_0>t)\ \le\ \eta+\exp\Bigl(-\tfrac{p_{\min}}{8}t\Bigr).
    \end{equation}
    Finally, choosing $t \ge \frac{8}{p_{\min}}\ln(1/\varepsilon)$ makes the exponential term at most $\varepsilon$, and together with $t\ge 2L_\mu/p_{\min}$ yields~\cref{eq:t-sufficient} and the claim.
\end{proof}

At step $k$ of the continuation-path discretization, our warm start for $H(s_k)$ is approximately the previous ground state $\ket{E_0(s_{k-1})}$. If the continuation path is Lipschitz continuous in the sense of~\cref{def:regular-path}, then $H(s)$ has $M$ terms
\begin{equation}
    H(s)=\sum_{a=1}^{M} h_a(s),
    \label{eq:local_terms}
\end{equation}
such that each term is uniformly Lipschitz continuous along the path,
\begin{equation}
    \|h_a(s)-h_a(s')\|\le L_h|s-s'|
    \qquad \text{for all } a \text{ and } s,s'\in[0,1].
    \label{eq:lipschitz}
\end{equation}
For chemicals in nature, we observe  $M=O(N_o)$~\cite{Motta_2021}. Therefore, for $\delta_k\coloneqq|s_k-s_{k-1}|$ we have
\[
    \|H(s_k)-H(s_{k-1})\|
    \le\sum_{a=1}^{M}\|h_a(s_k)-h_a(s_{k-1})\|
    \le M L_h\delta_k
    =O(N_o\delta_k).
\]
Consequently, the warm-start excess energy at the new geometry is bounded by
\begin{align}
    0 \le \langle E_0(s_{k-1})|H(s_k)|E_0(s_{k-1})\rangle - E_0(s_k)
    &\le 2\|H(s_k)-H(s_{k-1})\| \nonumber\\
    &\le 2 M L_h\delta_k
    =O(N_o\delta_k).
    \label{eq:warmstart-excess-energy-extensive}
\end{align}
Thus, for an equal-mass (or otherwise bounded-step) mesh with $\delta_k\le \delta_{\max}=O(1)$,
the warm start lies within an extensive energy window:
\[
    \langle H(s_k)\rangle_{E_0(s_{k-1})}-E_0(s_k)\leq c_E N_o,
    \qquad c_E\coloneqq2(\tfrac{M}{N_o})L_h\delta_{\max}.
\]
If $\mu$ is the dephased (energy-eigenbasis) distribution induced by this warm start at $s_k$,
then $\mathbb{E}_\mu[E_{J_0}-E_0]=\langle H(s_k)\rangle_{E_0(s_{k-1})}-E_0(s_k)$ is $O(N_o)$. This motivates \cref{lem:warmstart-cooling-eth-perfect} which shows how we arrive at a mixing time linear in the number of orbitals $N_o$.

\begin{lemma}[\cref{lem:linear_cooling_informal} (Formal)]
    \label{lem:linear-cooling-extensive}
    Adopt the setting of \cref{lem:warmstart-cooling-eth-perfect}. In our regime, the warm start is supported on an extensive energy range: there exists $c_E>0$ such that  $\Pr_{\mu}\!\big(E_{J_0}-E_0 \le c_E N_o\big)\ \ge\ 1-\eta.$
    
    Then the layer radius may be chosen as $L_\mu \ \le\ \left\lceil \frac{c_E N_o}{\Delta E_{\mathrm{layer}}}\right\rceil,$
    and consequently, for any $0<\varepsilon<1$ it suffices to take
    \begin{equation}
        t\ \ge\ \frac{2}{p_{\min}}\left\lceil \frac{c_E N_o}{\Delta E_{\mathrm{layer}}}\right\rceil
        +\frac{8}{p_{\min}}\ln\frac{1}{\varepsilon}.
        \label{eq:t-linear}
    \end{equation}
    In particular, the cooling time
    is linear in $N_o$,
    \begin{equation}
        t = O\!\left(N_o+\ln\frac{1}{\varepsilon}\right).
    \end{equation}
\end{lemma}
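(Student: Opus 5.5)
The plan is to reduce the statement directly to \cref{lem:warmstart-cooling-eth-perfect} by converting the energy-window hypothesis into a layer-radius bound. By definition of the layer index in \cref{eq:app_layer}, $L(j)$ is a nondecreasing function of $E_j-E_0$. Hence the event $\{E_{J_0}-E_0\le c_E N_o\}$ is contained in
\[
\Bigl\{L(J_0)\le \Bigl\lceil \tfrac{c_E N_o}{\Delta E_{\mathrm{layer}}}\Bigr\rceil\Bigr\}.
\]
Setting $L_\mu\coloneqq\lceil c_E N_o/\Delta E_{\mathrm{layer}}\rceil$, the hypothesis $\Pr_\mu(E_{J_0}-E_0\le c_E N_o)\ge 1-\eta$ gives $\Pr_\mu(L(J_0)>L_\mu)\le\eta$. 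This is exactly the warm-start layer-radius condition required by \cref{lem:warmstart-cooling-eth-perfect}. Since $L_\mu\in\mathbb{N}$ whenever $c_E N_o>0$, no further adjustment is needed; one may take $L_\mu\ge 1$ trivially.

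Next I would invoke \cref{lem:warmstart-cooling-eth-perfect} with this $L_\mu$, under the same uniform downward-drift assumption with parameter $p_{\min}$. This yields $\Pr(\tau_0>t)\le\eta+\varepsilon$ whenever $t\ge 2L_\mu/p_{\min}+(8/p_{\min})\ln(1/\varepsilon)$, which is precisely \cref{eq:t-linear} after substituting $L_\mu$. For the asymptotic form, I would use $\lceil x\rceil\le x+1$ to bound the right-hand side by
\[
\frac{2c_E N_o}{p_{\min}\Delta E_{\mathrm{layer}}}+\frac{2}{p_{\min}}+\frac{8}{p_{\min}}\ln\frac1\varepsilon .
\]
Treating $c_E$, $p_{\min}$ and $\Delta E_{\mathrm{layer}}$ as $O(1)$ constants, independent of $N_o$ as in the ETH-motivated regime of \cref{equ:uni_drift}, this gives $t=O(N_o+\ln(1/\varepsilon))$.

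The only genuinely delicate point is not the algebra but the bookkeeping of constants. The drift window in which \cref{eq:uniform-drift} must hold has to contain all layers up to $L_\mu$, which is now extensive in $N_o$. So I would state explicitly that the microcanonical window assumed in the drift condition covers energies up to $E_0+c_E N_o$, and that $p_{\min}$ is uniform over it. With that made explicit, the proof is a direct substitution and requires no new probabilistic estimate.
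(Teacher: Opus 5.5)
Your proposal is correct and matches the paper's proof: the ceiling is monotone, so the energy-window hypothesis gives the layer-radius condition $\Pr_\mu(L(J_0)>L_\mu)\le\eta$ with $L_\mu=\lceil c_E N_o/\Delta E_{\mathrm{layer}}\rceil$, and substituting this $L_\mu$ into \cref{eq:t-sufficient} yields \cref{eq:t-linear}. Your extra remarks spell out what the paper leaves implicit in its $O(\cdot)$ statement: the bound $\lceil x\rceil\le x+1$, the treatment of $c_E$, $p_{\min}$, $\Delta E_{\mathrm{layer}}$ as $N_o$-independent constants, and the requirement that the drift condition hold up to energy $E_0+c_E N_o$.
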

\begin{proof}
    On the event $\{E_{J_0}-E_0 \le c_E N_o\}$ one has
    $L(J_0)=\left\lceil\frac{E_{J_0}-E_0}{\Delta E_{\mathrm{layer}}}\right\rceil
    \le \left\lceil\frac{c_E N_o}{\Delta E_{\mathrm{layer}}}\right\rceil$.
    Thus,
    \begin{equation}
        \Pr_{\mu}\!\big(E_{J_0}-E_0 \le c_E N_o\big)\ \ge\ 1-\eta,
        \label{eq:extensive-support}
    \end{equation}
    implies \cref{lem:warmstart-cooling-eth-perfect} with chosen
    \begin{equation}
        L_\mu \ \le\ \left\lceil \frac{c_E N_o}{\Delta E_{\mathrm{layer}}}\right\rceil,
        \label{eq:Lmu-choice}
    \end{equation}
    Substituting this into the sufficient condition
    \cref{eq:t-sufficient} from \cref{lem:warmstart-cooling-eth-perfect} yields~\cref{eq:t-linear}
    and the stated scaling.
\end{proof}

\subsection*{Proof of Theorems 1 \& 3}
\label{sec:proof_of_thm1}

\begin{proof}[Proof of \cref{thm:reaction-path-runtime}]
    We now combine \cref{lem:davis-kahan} (geometric Davis--Kahan control of ground-state drift) and \Cref{lem:warmstart-cooling-eth-perfect,lem:linear-cooling-extensive}  to bound the total number of dissipative blocks required to follow a continuation path, and thereby the ideal dissipative-step complexity of this algorithm. Choose a discretization $0=s_0<s_1<\cdots<s_{N_H}=1$ of the continuation coordinate. Starting from an initial warm start $\rho_0$ at $s_0$, for each $i=0,\dots,N_H-1$ apply the dissipative channel at geometry $s_i$ for $t_i$ steps to obtain $\rho_{i+1}$, which is then used as a warm start for the next geometry $s_{i+1}$. The goal is to achieve a bounded total energy error $|\mathrm{tr}(H(1)\rho_{N_H})-E_0(1)|\le \epsilon_E$ for some $\epsilon_E > 0$. Define the resolved geometric density
    \[
        g(s)\ \coloneqq\ \frac{\|\partial_s H(s)\|}{\Delta(s)},
        \qquad
        C_\mathrm{DK}\ \coloneqq\ \int_0^1 g(s)ds.
    \]
    Choose an (approximate) equal-mass mesh,
    \begin{equation}
        \int_{s_i}^{s_{i+1}} g(s)ds \ \le\ \frac{C_\mathrm{DK}}{N_H},
        \qquad \forall i,
        \label{eq:equal-mass-mesh-res}
    \end{equation}
    then \cref{lem:davis-kahan} directly gives a uniform overlap bound between adjacent ground states:
    \begin{equation}
        1-|\langle E_0(s_i)|E_0(s_{i+1})\rangle|^2
        \le
        \left(\frac{C_{\mathrm{DK}}}{N_H}\right)^2
        =:
        \varepsilon_{dR}.
        \label{eq:epsdR-runtime}
    \end{equation}
    For each geometry $s_i$ define the excited population (energy-basis tail)
    \[
        \delta_i \coloneqq 1-\langle E_0(s_i)|\rho_i|E_0(s_i)\rangle.
    \]
    This quantity is exactly the total population mass outside the absorbing state in the Markov-chain picture of \cref{lem:warmstart-cooling-eth-perfect} (equivalently, if one measures $\rho_i$ in the energy basis and draws an index $J\sim p$, then $\delta_i=\Pr(J\neq 0)$). It upper-bounds the energy error via
    \begin{equation}
        \mathrm{tr}(H(s_i)\rho_i)-E_0(s_i)
        =\sum_{j>0} (E_j(s_i)-E_0(s_i))p_j
        \le
        \Big(\max_{j>0}\{E_j(s_i)-E_0(s_i):p_j>0\}\Big)\delta_i,
        \label{eq:energy-vs-excited-mass}
    \end{equation}
    where $p_j\coloneqq\langle E_j(s_i)|\rho_i|E_j(s_i)\rangle$. Under the warm-start spectral extensivity assumption $\Pr_{\mu_i}(E_{J_0}(s_i)-E_0(s_i)\le c_E N_o)\ge 1-\eta$
    (cf.\ \cref{lem:linear-cooling-extensive}), we may further bound
    \begin{equation}
        \mathrm{tr}(H(s_i)\rho_i)-E_0(s_i)\ \le\ c_E N_o\delta_i
        \qquad\text{except on an event of probability at most }\eta.
        \label{eq:energy-vs-excited-mass-ext}
    \end{equation}
    The geometric bound~\cref{eq:epsdR-runtime} implies that even if $\rho_i$ were perfectly cooled at $s_i$, when reinterpreted at $s_{i+1}$ it acquires at most
    $\varepsilon_{dR}$ additional excited weight (up to universal constants). Thus, suppressing constant factors, we may propagate the schematic recurrence
    \begin{equation}
        \delta_{i+1}
        \le
        (\text{cooling residual at } s_i)
        +
        \varepsilon_{dR}.
        \label{eq:delta-recurrence}
    \end{equation}
    
    Fix a target per-step cooling error parameter $\varepsilon_{\mathrm{mix}}$. At each geometry $s_i$, \cref{lem:warmstart-cooling-eth-perfect} provides a step count $t_i$
    such that after $t_i$ applications of the dissipative update, the remaining excited population mass is at most $\eta+\varepsilon_{\mathrm{mix}}$, where $\eta$ is the
    warm-start layer overflow probability
    \begin{equation}
        \Pr_{\mu_i}(L(J_0)>L_{\mu_i})\le \eta,
    \end{equation}
    with $L_{\mu_i}$ the $\eta$-quantile layer radius at $s_i$.
    Concretely, choosing
    \begin{equation}
        t_i
        =
        \left\lceil
        \frac{2L_{\mu_i}}{p_{\min}}+\frac{8}{p_{\min}}\ln\frac{1}{\varepsilon_{\mathrm{mix}}}
        \right\rceil
        \label{eq:ti-choice}
    \end{equation}
    ensures
    \begin{equation}
        \delta_{i+1}
        \le
        (\eta+\varepsilon_{\mathrm{mix}}) + \varepsilon_{dR}.
        \label{eq:delta-step-bound}
    \end{equation}
    Here $p_{\min}$ is the one-step drift lower bound from \cref{lem:warmstart-cooling-eth-perfect}.
    
    Iterating~\cref{eq:delta-step-bound} over $N_H$ steps gives
    \[
        \delta_{N_H}
        \le
        N_H(\eta+\varepsilon_{\mathrm{mix}}) + N_H\varepsilon_{dR}
        =
        N_H(\eta+\varepsilon_{\mathrm{mix}}) + \frac{C_{\mathrm{DK}}^2}{N_H}.
    \]
    Combining with~\cref{eq:energy-vs-excited-mass-ext} yields, up to the same exceptional
    event of probability at most $\eta$,
    \[
        |\mathrm{tr}(H(1)\rho_{N_H})-E_0(1)|
        \le
        c_E N_o\left(
        N_H(\eta+\varepsilon_{\mathrm{mix}}) + \frac{C_{\mathrm{DK}}^2}{N_H}
        \right).
    \]
    Accordingly, one convenient choice is
    \begin{equation}
        N_H=\Theta\!\left(\frac{c_E N_o C_{\mathrm{DK}}^2}{\epsilon_E}\right),
        \qquad
        \varepsilon_{\mathrm{mix}}=\Theta\!\left(\frac{\epsilon_E}{c_E N_o N_H}\right),
        \qquad
        \eta=\Theta\!\left(\frac{\epsilon_E}{c_E N_o N_H}\right),
        \label{eq:param-choice-runtime}
    \end{equation}
    Equivalently, for an intensive target error $\epsilon_{\rm int}\coloneqq \epsilon_E/N_o$, this is $N_H=\Theta(c_E C_{\mathrm{DK}}^2/\epsilon_{\rm int})$.
    
    Hence, the total number of dissipative steps (i.e., applications of the dissipative update) is bounded by
    \begin{equation}
        \sum_{i=0}^{N_H-1} t_i
        =
        O\!\left(
        \frac{N_H}{p_{\min}}
        \left(
          \overline{L}_\mu + \ln\frac{N_H}{\epsilon_E}
        \right)
        \right),
        \label{eq:total-steps-runtime}
    \end{equation}
    where $\overline{L}_\mu\coloneqq\max_i L_{\mu_i}$.
    Substituting $N_H=\Theta(c_E N_o C_{\mathrm{DK}}^2/\epsilon_E)$ yields the compact end-to-end scaling
    \begin{equation}
        \sum_{i=0}^{N_H-1} t_i
        =
        O\!\left(
        \frac{C_{\mathrm{DK}}^2 N_o\,\overline{L}_\mu}{p_{\min} \epsilon_E}
        \right).
        \label{eq:runtime-compact}
    \end{equation}
    When the warm-start spectral extensivity condition holds uniformly, $\overline{L}_\mu=O(N_o)$ (\cref{lem:linear-cooling-extensive}) and $p_{\min}=\Theta(1)$. Simulating dissipative dynamics for evolution time $t$ on quantum computers requires $\widetilde{O}(t)$ logical quantum resources~\cite{li2023_lineartime,chen2023quantum}. Therefore, the total ideal cooling-step complexity for total energy error is
    \begin{equation}
        \widetilde{O}\!\left(
        \frac{C_{\mathrm{DK}}^2 N_o^2}{\epsilon_E}
        \right)
        =
        \widetilde{O}\!\left(
        \frac{C_{\mathrm{DK}}^2 N_o}{\epsilon_{\rm int}}
        \right).
    \end{equation}
\end{proof}

For a Hamiltonian along the continuation path as a sum of $M$ terms defined in~\cref{eq:local_terms} with each term varying smoothly along the path defined by the Lipschitz constant $L_h$ in~\cref{eq:lipschitz}, if the derivative exists, then $\|\partial_s h_\ell(s)\|\le L_h$. It follows that $\|\partial_s H(s)\|\leq ML_h$. Now, we have that
\begin{equation}
    \label{equ:cdk_bound}
    C_\mathrm{DK} = \int_{0}^1\frac{||\partial_s H(s)||}{\Delta (s)}ds\leq \int_0^1\frac{ML_h}{\Delta_{\mathrm{min}}}ds=O(N_o/\Delta_{\min}).
\end{equation}
Therefore, the coarse ideal cooling-step count is
\begin{equation}
    \widetilde{O}\left(
    \frac{N_o^4}{\Delta_{\min}^2\epsilon_E}
    \right)
    =
    \widetilde{O}\left(
    \frac{N_o^3}{\Delta_{\min}^2\epsilon_{\rm int}}
    \right),
    \qquad \epsilon_{\rm int}=\epsilon_E/N_o.
\end{equation}
The logical gate complexity additionally includes the cost of implementing each filtered Lindbladian step. In the filter constructions analyzed in~\cref{sec:das_vs_diss,sec:thm2}, this per-step cost can include Hamiltonian-time factors such as $\widetilde{O}(\|H\|/\Delta_{\min})$, which should not be folded into the ideal cooling-step count.

\section{NTK-style justification of downward drift}
\label{app:ntk_drift}

This appendix records a standard ``lazy training'' argument, adapted from neural tangent kernel (NTK) analyses of Ref.~\cite{jacot2018neural}, to justify a uniform downward drift condition in our notation.  The result is conditional: it holds in a regime where (i) the relevant kernel concentrates and (ii) optimization remains in a neighborhood where linearization is accurate. Such conditions are expected to hold when the initial law is highly concentrated in low energies.

Fix a Hamiltonian $H$ (or $H(s)$ at a fixed continuation point) and a differentiable ansatz map
\begin{equation}
    \theta \mapsto \rho_\theta,
\end{equation}
where $\rho_\theta$ is a density operator on the system register. Define the energy objective
\begin{equation}
    \mathcal{L}(\theta) \coloneqq  \Tr\!\big[H\,\rho_\theta\big].
\end{equation}
Let $\theta(t)$ evolve under gradient flow
\begin{equation}
    \dot\theta(t) = -\nabla_\theta \mathcal{L}(\theta(t)).
    \label{eq:gradflow}
\end{equation}

For a finite set of observables (or ``features'') $\{\phi_a\}_{a=1}^m$ on states, define the feature vector
\begin{equation}
    f_\theta \in \mathbb{R}^m,\qquad (f_\theta)_a \coloneqq  \Tr\!\big[\phi_a \rho_\theta\big].
\end{equation}
In the simplest case one can take $m=1$ with $\phi_1 = H$, in which case $f_\theta = \mathcal{L}(\theta)$; we keep general $m$ because it matches the NTK proof structure.

Define the empirical kernel (Jacobian Gram matrix)
\begin{equation}
    K_\theta \in \mathbb{R}^{m\times m},\qquad (K_\theta)_{ab} \coloneqq  \langle \nabla_\theta (f_\theta)_a,\ \nabla_\theta (f_\theta)_b\rangle.
    \label{eq:kernel_def}
\end{equation}
By construction $K_\theta \succeq 0$.

    \begin{lemma}[Kernel-governed dynamics]
    \label{lem:kernel_dynamics}
    Let $r_\theta \coloneqq  f_\theta - f_{\theta^\star}$ denote the feature residual relative to any reference $\theta^\star$. Under \eqref{eq:gradflow},
    \begin{equation}
        \frac{d}{dt} r_{\theta(t)} = - K_{\theta(t)} \, r_{\theta(t)}.
        \label{eq:residual_ode}
    \end{equation}
\end{lemma}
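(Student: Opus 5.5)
The plan is a chain-rule computation, exactly as in the NTK derivation of \cite{jacot2018neural}. Write $J_\theta\in\mathbb{R}^{m\times p}$ for the Jacobian of the feature map, with rows $\nabla_\theta (f_\theta)_a^{\!\top}$, so that $K_\theta=J_\theta J_\theta^{\!\top}$ by \eqref{eq:kernel_def}. The reference $f_{\theta^\star}$ does not depend on $t$, so $\tfrac{d}{dt}r_{\theta(t)}=\tfrac{d}{dt}f_{\theta(t)}$. For each component this gives
\[
\tfrac{d}{dt}(f_{\theta(t)})_a=\langle\nabla_\theta (f_\theta)_a,\dot\theta\rangle=-\langle\nabla_\theta (f_\theta)_a,\nabla_\theta\mathcal{L}(\theta)\rangle ,
\]
using \eqref{eq:gradflow}. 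Differentiability of $\theta\mapsto\rho_\theta$, and hence of each $(f_\theta)_a=\Tr[\phi_a\rho_\theta]$ and of $\mathcal{L}$, justifies these steps.

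It then remains to show $\nabla_\theta\mathcal{L}=J_\theta^{\!\top}r_\theta=\sum_b (r_\theta)_b\nabla_\theta(f_\theta)_b$. Once this holds,
\[
\tfrac{d}{dt}(r_\theta)_a=-\sum_b\langle\nabla_\theta(f_\theta)_a,\nabla_\theta(f_\theta)_b\rangle(r_\theta)_b=-(K_\theta r_\theta)_a ,
\]
which is \eqref{eq:residual_ode}.

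To obtain the gradient identity I would use the NTK convention that the appendix adopts: the objective is measured in feature coordinates relative to the reference. That is, $\mathcal{L}(\theta)=\Tr[H\rho_\theta]$ is identified, up to a $\theta$-independent constant, with $\tfrac12\|f_\theta-f_{\theta^\star}\|^2$, the energy excess above the reference written through the chosen features. In the $m=1$, $\phi_1=H$ case this reads $\mathcal{L}(\theta)-\mathcal{L}(\theta^\star)\simeq\tfrac12 r_\theta^2$ near a minimizer $\theta^\star$. I would establish this near $\theta^\star$ by a second-order expansion. At a minimizer $\nabla\mathcal{L}(\theta^\star)=0$, so in the lazy regime (where $J_\theta$ is approximately constant) the gradient becomes $J^{\!\top}r$.

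I expect this identification to be the main obstacle. For the raw linear objective $\Tr[H\rho_\theta]=\sum_a c_a\Tr[\phi_a\rho_\theta]$, with $H=\sum_a c_a\phi_a$, the chain rule yields $\nabla\mathcal{L}=J^{\!\top}c$ rather than $J^{\!\top}r$. So the proof must make explicit the feature-coordinate/reference convention under which the loss is the quadratic residual. I would state that convention as the interpretation of $\mathcal{L}$ in the lemma and verify it first. The rest of the argument is the two-line chain rule above.
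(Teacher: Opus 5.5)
Your chain-rule skeleton is the paper's proof. Your diagnosis is sharper than the paper's: its second equality silently uses $\dot\theta=-\sum_b (r_\theta)_b\nabla_\theta(f_\theta)_b$. That is gradient flow on $\tfrac12\|f_\theta-f_{\theta^\star}\|^2$, not on $\mathcal{L}(\theta)=\Tr[H\rho_\theta]$ as \eqref{eq:gradflow} states.

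The gap is that the step you plan to ``verify first'' cannot be verified, because it is false.
\begin{itemize}
\item In your own $m=1$, $\phi_1=H$ example, $r_\theta=\mathcal{L}(\theta)-\mathcal{L}(\theta^\star)$. The claimed relation $\mathcal{L}(\theta)-\mathcal{L}(\theta^\star)\simeq\tfrac12 r_\theta^2$ then reads $r_\theta\simeq\tfrac12 r_\theta^2$, which fails at leading order as $r_\theta\to0$.
\item A second-order expansion at $\theta^\star$ gives $\tfrac12\delta^{\top}\nabla^2\mathcal{L}(\theta^\star)\delta$, with $\delta=\theta-\theta^\star$ and, for $H=\sum_a c_a\phi_a$, $\nabla^2\mathcal{L}=\sum_a c_a\nabla^2(f_\theta)_a$. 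By contrast, $\tfrac12\|r_\theta\|^2$ has Gauss--Newton Hessian $J^{\top}J$, and nothing makes these two equal.
\item Invoking the lazy regime makes things worse. If $J_\theta\equiv J$, then $\nabla\mathcal{L}=J^{\top}c$ is constant. Stationarity at $\theta^\star$ then forces $\nabla\mathcal{L}\equiv0$, so there are no dynamics at all. Meanwhile $J^{\top}r_\theta=J^{\top}J\delta$ is nonzero whenever $r_\theta\neq0$.
\end{itemize}
So no expansion or ``convention'' turns the energy objective into the squared residual. Also, the appendix does not adopt the squared-loss convention you attribute to it; it defines $\mathcal{L}$ as the energy.

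There are two honest fixes.
\begin{itemize}
\item Change the hypothesis: run gradient flow on $\widetilde{\mathcal{L}}(\theta)\coloneqq\tfrac12\|f_\theta-f_{\theta^\star}\|^2$. Then $\nabla_\theta\widetilde{\mathcal{L}}=J_\theta^{\top}r_\theta$ holds exactly for all $\theta$, with no lazy-regime approximation, and your two-line computation (equivalently the paper's) proves \eqref{eq:residual_ode}. This has to be stated as a new assumption, not cited.
\item Keep the energy objective and correct the conclusion to $\tfrac{d}{dt}r_{\theta(t)}=-J_\theta\nabla_\theta\mathcal{L}=-K_{\theta(t)}c$. For example, $\tfrac{d}{dt}r=-K_\theta$ when $m=1$, $\phi_1=H$. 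This still gives monotone energy descent, but it is not the contraction $-K_\theta r_\theta$ that \cref{thm:ntk_descent} needs for its exponential decay.
\end{itemize}
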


\begin{proof}
    Differentiate $(f_{\theta(t)})_a$ and use the chain rule:
    \[
        \frac{d}{dt}(f_{\theta(t)})_a = \langle \nabla_\theta (f_\theta)_a,\ \dot\theta(t)\rangle
        = - \sum_b \langle \nabla_\theta (f_\theta)_a,\ \nabla_\theta (f_\theta)_b\rangle\, (r_{\theta(t)})_b.
    \]
    This is exactly \eqref{eq:residual_ode}.
\end{proof}

Equation \eqref{eq:residual_ode} is the analogue of the NTK evolution equation: if $K_{\theta(t)}$ is approximately constant, the dynamics are linear and contractive. We state two assumptions that are direct analogues of standard NTK conditions.

\begin{assumption}[Kernel concentration at initialization]
    \label{ass:kernel_concentration}
    There exists a deterministic PSD matrix $K_0$ and $\delta_K \ge 0$ such that with high probability over initialization $\theta(0)$,
    \begin{equation}
    \|K_{\theta(0)} - K_0\|_\mathrm{op} \le \delta_K.
    \end{equation}
\end{assumption}

\begin{assumption}[Kernel stability in a neighborhood]
    \label{ass:kernel_stability}
    There exists a radius $R>0$ and a Lipschitz constant $L_K$ such that for all $\theta$ with $\|\theta-\theta(0)\|_2 \le R$,
    \begin{equation}
        \|K_\theta - K_{\theta(0)}\|_\mathrm{op} \le L_K \,\|\theta-\theta(0)\|_2.
    \end{equation}
\end{assumption}

In NTK analyses, overparameterization is used to make $\delta_K$ small and to ensure the trajectory remains inside the radius $R$ for the time needed to reduce the residual.

\begin{theorem}[Contractive dynamics and uniform descent in the NTK regime~\cite{jacot2018neural}]
    \label{thm:ntk_descent}
    Suppose the gradient-flow trajectory satisfies $\|\theta(t)-\theta(0)\|_2 \le R$ for all $t \in [0,T]$, under Assumptions 1 and 2. If
    \begin{equation}
        \lambda_{\min}(K_0) > \delta_K + L_K R,
        \label{eq:pd_margin}
    \end{equation}
    then for all $t \in [0,T]$ the residual decays exponentially:
    \begin{equation}
        \|r_{\theta(t)}\|_2 \le \exp\!\Big(-(\lambda_{\min}(K_0) - \delta_K - L_K R)\,t\Big)\, \|r_{\theta(0)}\|_2.
    \end{equation}
    In particular, the feature error (and hence any linear objective derived from it, including energy) exhibits a monotone ``downward drift'' under gradient flow.
\end{theorem}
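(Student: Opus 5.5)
The plan is to use \cref{lem:kernel_dynamics} to reduce the claim to a linear ODE for the residual with a time-dependent positive semidefinite coefficient. From there, a Grönwall-type energy estimate should give exponential decay. Set $\lambda_\star \coloneqq \lambda_{\min}(K_0)-\delta_K-L_K R$, which is positive by \cref{eq:pd_margin}.

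The first step is a uniform lower bound on the spectrum of $K_{\theta(t)}$ along the trajectory. For $t\in[0,T]$ the hypothesis $\|\theta(t)-\theta(0)\|_2\le R$ puts us inside the neighborhood of \cref{ass:kernel_stability}. Writing
\[
K_{\theta(t)} = K_0 + (K_{\theta(0)}-K_0) + (K_{\theta(t)}-K_{\theta(0)}),
\]
and using Weyl's inequality together with \cref{ass:kernel_concentration} and \cref{ass:kernel_stability}, I expect
\[
\lambda_{\min}(K_{\theta(t)}) \ge \lambda_{\min}(K_0)-\delta_K-L_K\|\theta(t)-\theta(0)\|_2 \ge \lambda_\star .
\]
This holds on the high-probability initialization event where \cref{ass:kernel_concentration} applies. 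The conclusion should then be stated on that same event.

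The second step differentiates the squared residual norm using \cref{eq:residual_ode}:
\[
\tfrac{d}{dt}\|r_{\theta(t)}\|_2^2 = -2\,r^\top K_{\theta(t)} r \le -2\lambda_\star \|r_{\theta(t)}\|_2^2 .
\]
This uses that $K_\theta$ is symmetric PSD, so the Rayleigh quotient is bounded below by its minimal eigenvalue. Grönwall's inequality then gives $\|r_{\theta(t)}\|_2^2\le e^{-2\lambda_\star t}\|r_{\theta(0)}\|_2^2$, and taking square roots yields the stated bound. Differentiability of $t\mapsto\theta(t)$, and hence of $r$, follows from the gradient flow being well defined, which I take as implicit in the hypotheses.

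For the ``in particular'' clause, take $m=1$ and $\phi_1=H$, so that $r_\theta=\mathcal{L}(\theta)-\mathcal{L}(\theta^\star)$ is a scalar. The ODE then reads $\dot r=-K r$ with $K\ge\lambda_\star>0$. It follows that $r$ keeps its sign and $|r|$ is strictly decreasing. Alternatively, one can note directly that $\tfrac{d}{dt}\mathcal{L}=-\|\nabla\mathcal{L}\|^2\le0$, which is a monotone downward drift of the energy. For general linear functionals $c^\top f_\theta$, the change is bounded by $\|c\|\,\|r\|$, so these decay at the same exponential rate.

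The main obstacle is essentially hygienic rather than technical. The trajectory-confinement hypothesis is assumed, not derived, and the kernel-concentration bound only holds with high probability. I would therefore state the result conditionally on both. I would also be careful to use the Weyl perturbation step rather than any claim that $K_{\theta(t)}$ commutes with $K_0$, since no such commutation is available.
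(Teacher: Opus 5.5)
Your argument is correct given \cref{lem:kernel_dynamics} and matches the paper's proof essentially step for step: the triangle-inequality/Weyl bound $\lambda_{\min}(K_{\theta(t)})\ge\lambda_{\min}(K_0)-\delta_K-L_KR$, the differential inequality $\tfrac{d}{dt}\|r\|_2^2\le -2\lambda_{\min}(K_{\theta(t)})\|r\|_2^2$, and Gr\"onwall; the paper's proof does not address the ``in particular'' clause at all. One caution on that clause: your two routes cannot both hold for $m=1$, $\phi_1=H$, because the direct computation $\tfrac{d}{dt}\mathcal{L}=-\|\nabla_\theta\mathcal{L}\|^2=-K_\theta$ disagrees with $\dot r=-K_\theta r$ whenever $K_\theta\neq0$ and $r\neq1$ (the ODE \cref{eq:residual_ode} actually describes gradient flow on $\tfrac12\|f_\theta-f_{\theta^\star}\|_2^2$, not on the linear energy $\mathcal{L}$), so the monotone energy descent should rest on the direct computation rather than on the lemma.
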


\begin{proof}
    By \Cref{lem:kernel_dynamics},
    \[
        \frac{d}{dt}\|r\|_2^2 = -2\, r^\top K_{\theta(t)} r \le -2\,\lambda_{\min}(K_{\theta(t)})\, \|r\|_2^2.
    \]
    It suffices to lower bound $\lambda_{\min}(K_{\theta(t)})$. Under Assumptions 1 and 2 and $\|\theta(t)-\theta(0)\| \le R$,
    \[
        \|K_{\theta(t)} - K_0\|_\mathrm{op}
        \le \|K_{\theta(t)} - K_{\theta(0)}\|_\mathrm{op} + \|K_{\theta(0)} - K_0\|_\mathrm{op}
        \le L_K R + \delta_K.
    \]
    Weyl's inequality then gives
    \[
        \lambda_{\min}(K_{\theta(t)}) \ge \lambda_{\min}(K_0) - (L_K R + \delta_K).
    \]
    Plugging into the differential inequality and applying Gr\"onwall~\cite{Gronwall1919} yields the exponential decay.
\end{proof}

In our setting, the margin condition \eqref{eq:pd_margin} is the precise analogue of ``a good warm-start avoids bad local geometry'': it does not assert the absence of local minima in general, but rather that, within a neighborhood, the linearized dynamics are uniformly contractive. This is exactly the type of condition that implies a uniform downward drift.

\section{Digital Adiabatic Simulation versus Dissipative Continuation}
\label{sec:das_vs_diss}

This section compares (i) large-step digital adiabatic simulation (DAS) along the same nonlinear continuation paths $\{H(s)\}_{s\in[0,1]}$ and (ii) our \textit{dissipative continuation} method that transports a warm start between adjacent geometries and re-cools at each step. The key point is that both approaches exploit the smoothness of the continuation path, but they pay for it in different places: DAS pays heavily for small gaps through adiabatic constraints, whereas dissipative continuation pays through the number of path points $N_H$ (controlled by
$C_{\mathrm{DK}}$) and the per-step mixing time (controlled by the uniform drift parameter $p_{\min}$ and the target failure probability).

\subsection{Setup and shared parameters}
Fix consecutive grid points $s_i<s_{i+1}$ and define the local interval Hamiltonian norm
\[
    \|H\|\ \coloneqq\ \max_{s\in[s_i,s_{i+1}]}\|H(s)\|,
    \qquad
    \Delta_{\min}\ \coloneqq\ \min_{s\in[s_i,s_{i+1}]}\Delta(s),
\]
where $\Delta(s)$ is the ground-state gap at geometry $s$. Throughout, $N_o$ denotes the number of \textit{spatial} orbitals. We compare \textit{per-segment} cost to transport from $s_i$ to $s_{i+1}$ and then multiply by the number of segments $N_H$ required to reach $s=1$.

Independent of the transport mechanism, our path-following regime requires a grid fine enough that successive ground states retain overlap. \cref{lem:davis-kahan} implies that an ``equal-mass'' mesh in the DK density $g(s)\coloneqq\|\partial_s H(s)\|/\Delta(s)$ yields a per-step overlap loss $\varepsilon_{dR} = \Theta\left((C_{\mathrm{DK}}/N_H)^2\right)$, hence choosing
\begin{equation}
    N_H\ =\ \Theta\!\left(\frac{c_E N_o C_{\mathrm{DK}}^2}{\epsilon_E}\right)
    \label{eq:NH_choice_das_diss}
\end{equation}
is sufficient to keep total discretization-induced energy error under control; equivalently, for intensive precision $\epsilon_{\rm int}=\epsilon_E/N_o$, this is $N_H=\Theta(c_E C_{\mathrm{DK}}^2/\epsilon_{\rm int})$. When $\|\partial_s H(s)\|=O(N_o)$ and $\Delta(s)\ge \Delta_{\min}$ along the path, one typically has $C_{\mathrm{DK}}=O(N_o/\Delta_{\min})$, giving
$N_H=\widetilde{O}(c_E N_o^3/(\Delta_{\min}^2\epsilon_E))$ for total energy precision.

\subsection{Per-segment cost of dissipative continuation}

Our algorithm uses a short-time dissipative primitive (a filtered-jump Lindbladian step) at geometry $s_{i+1}$ initialized by a warm start transported from $s_i$ (e.g., via Procrustes-aligned orbital rotations plus state reuse). Let $\varepsilon$ denote the target failure probability for cooling at that segment. Under the conditions of \Cref{lem:warmstart-cooling-eth-perfect,lem:linear-cooling-extensive} provide a per-segment mixing time
\begin{equation}
    t_{\mathrm{mix}}
    \ =\
    O\!\left(N_o+\ln\frac{1}{\varepsilon}\right),
    \label{eq:tmix_segment}
\end{equation}
for fixed layer size $\Delta E_{\mathrm{layer}}$ and drift parameter $p_{\min}$.

Each dissipative step is implemented using filtered jump operators, whose circuit depth is governed by the filter time support needed to achieve leakage $\varepsilon_{\mathrm{leak}}$ (\cref{thm:filter-scaling}). Up to polylogarithmic overheads from Lindbladian simulation, the gate cost per dissipative step scales as the number of distinct real-time evolutions $e^{\pm iHt}$ required by the filter:
\begin{equation}
    \mathrm{Cost}_{\mathrm{step}}
    \ =\ \widetilde{O}\!\Big(S(\varepsilon_{\mathrm{leak}})\cdot \mathrm{Cost}(U_H)\Big),
\end{equation}
where $U_H$ is the chosen block encoding / time-evolution primitive. Using the two filter families:
\begin{align}
    S_{\mathrm{SW}}(\varepsilon_{\mathrm{leak}})
    &=\ O\!\left(\frac{\|H\|}{\Delta_{\min}}\cdot\frac{1}{\varepsilon_{\mathrm{leak}}}\right),
    \\
    S_{\mathrm{G}}(\varepsilon_{\mathrm{leak}})
    &=\ O\!\left(\frac{\|H\|}{\Delta_{\min}}\cdot[\ln(1/\varepsilon_{\mathrm{leak}})]^{\alpha}\right),
    \qquad \alpha>1.
\end{align}
Thus, the \textit{per-segment} runtime complexity of dissipative continuation is
\begin{equation}
    G_{\mathrm{diss}}
    \ =\
    \widetilde{O}\!\left(
    \frac{\|H\|}{\Delta_{\min}}\cdot \mathsf{F}(\varepsilon_{\mathrm{leak}})
    \cdot \Big(N_o+\ln\frac{1}{\varepsilon}\Big)
    \right),
    \label{eq:diss_segment_cost}
\end{equation}
where $\mathsf{F}(\varepsilon_{\mathrm{leak}})=1/\varepsilon_{\mathrm{leak}}$ for the square-wave (Fourier) filter and $\mathsf{F}(\varepsilon_{\mathrm{leak}})=[\ln(1/\varepsilon_{\mathrm{leak}})]^{\alpha}$ for the Gevrey
filter.

Dissipative continuation separates two ``hardness knobs'': (i) spectral hardness enters \textit{linearly} as $\|H\|/\Delta_{\min}$ through filtering, while (ii) system-size hardness enters through the \textit{mixing time} $O(N_o)$ (from extensivity) and through the \textit{grid size} $N_H$ (from $C_{\mathrm{DK}}$). This is qualitatively different from DAS, where $\Delta_{\min}$ controls the required adiabatic runtime more directly.

\subsection{Per-segment cost of large-step digital adiabatic simulation}

Large-step DAS aims to prepare $\ket{E_0(s_{i+1})}$ by simulating an adiabatic interpolation on $s\in[s_i,s_{i+1}]$ with a small number of discrete steps, using a large step size $h=\Theta(1/(\|H(s_i)\|+\|H(s_{i+1})\|))$ and an improved analysis of adiabatic error~\cite{Berry2025Large}. Abstracting away polylog factors and implementation details, the dominant dependence is that adiabatic state-preparation cost grows rapidly as the gap decreases. In the scaling form used in the main text, the \textit{per-segment} DAS runtime complexity is
\begin{equation}
    G_{\mathrm{DAS}}
    \ =\
    \widetilde{O}\!\left(
    \frac{\|H\|^{3}}{\Delta_{\min}^{3}}\cdot\frac{1}{\varepsilon}\cdot \mathrm{poly}(N_o)
    \right),
    \label{eq:das_segment_cost}
\end{equation}
where $\varepsilon$ is the target adiabatic error/failure parameter, and $\mathrm{poly}(N_o)$ accounts for the cost of simulating $e^{-ihH(s)}$ and implementing the schedule. (More refined versions would track the dependence on $\|\partial_s H\|$ and higher derivatives, but~\cref{eq:das_segment_cost} captures the key spectral-gap sensitivity relevant to transition-state regimes.) The estimate in~\cref{eq:das_segment_cost} is used here as a conservative segment-wise comparison. It applies the large-step DAS scaling to each mesh interval and then multiplies by the number of continuation points. This is sufficient for a worst-case comparison of sensitivities to $\Delta_{\min}$, but it should not be read as the sharpest possible end-to-end DAS bound for a globally specified smooth path.

DAS pays a \textit{polynomially higher} price in $1/\Delta_{\min}$ per segment than dissipative continuation. This distinction is most pronounced near transition states, where small gaps are common and where our method is designed to reuse a warm start so that the dissipative dynamics rapidly returns to the instantaneous ground state.

\subsection{End-to-end costs along the full continuation path}

Multiplying by the number of segments $N_H$ in~\cref{eq:NH_choice_das_diss}, we obtain the end-to-end scaling. For DAS:
\begin{equation}
    N_H\cdot G_{\mathrm{DAS}}
    \ =\
    \widetilde{O}\!\left(
    \frac{c_E N_o C_{\mathrm{DK}}^2}{\epsilon_E}\cdot
    \frac{\|H\|^{3}}{\Delta_{\min}^{3}}\cdot\frac{1}{\varepsilon}\cdot \mathrm{poly}(N_o)
    \right).
    \label{eq:das_total_cost}
\end{equation}
For dissipative continuation:
\begin{equation}
    N_H\cdot G_{\mathrm{diss}}
    \ =\
    \widetilde{O}\!\left(
    \frac{c_E N_o C_{\mathrm{DK}}^2}{\epsilon_E}\cdot
    \frac{\|H\|}{\Delta_{\min}}\cdot \mathsf{F}(\varepsilon_{\mathrm{leak}})
    \cdot \Big(N_o+\ln\frac{1}{\varepsilon}\Big)
    \right). 
    \label{eq:diss_total_cost}
\end{equation}

If one further uses the coarse bound $C_{\mathrm{DK}}=O(N_o/\Delta_{\min})$, then
\begin{align}
    N_H\cdot G_{\mathrm{DAS}}
    &=\widetilde{O}\!\left(
    \frac{\|H\|^{3}}{\Delta_{\min}^{5}}\cdot
    \frac{1}{\epsilon_E}\cdot\frac{1}{\varepsilon}\cdot \mathrm{poly}(N_o)\cdot N_o^3
    \right),
    \\
    N_H\cdot G_{\mathrm{diss}}
    &=\widetilde{O}\!\left(
    \frac{\|H\|}{\Delta_{\min}^{3}}\cdot
    \frac{1}{\epsilon_E}\cdot \mathsf{F}(\varepsilon_{\mathrm{leak}})
    \cdot \Big(N_o+\ln\frac{1}{\varepsilon}\Big)\cdot N_o^3
    \right),
\end{align}
which is the total-energy version of the appendix accounting. The intensive-error formulas are recovered by substituting $\epsilon_E=N_o\epsilon_{\rm int}$; this is the convention used when comparing to the older heuristic summaries.

\section{Phase Randomization and Dynamic Cooling versus Dissipative Continuation}
\label{sec:pr_dc_vs_dissip}

Phase randomization (PR)~\cite{Boixo2009Eigenpath} and dynamic cooling (DC)~\cite{motlagh2024_cooling} provide unitary-centric alternatives to using an explicit Lindbladian cooling primitive. Both are useful comparators in the continuation path setting because they exploit the same structural feature as dissipative continuation: a warm start at $s_i$ is already concentrated in the low-energy subspace of $H(s_{i+1})$ when the path is discretized finely enough. The purpose of this section is to justify the gate-complexity scalings used in \Cref{tab:comp_tab}, highlighting how the dependence on $\|H\|$, $\Delta_{\min}$, and $N_o$ arises from (i) Hamiltonian simulation costs, (ii) the time scale required to resolve gaps, and (iii) the number of randomization / cooling iterations required at each path point.

\subsection{Shared cost model and path discretization}

We adopt the standard Hamiltonian-simulation cost model used throughout the paper: implementing time evolution for time $t > 0$ under an interval Hamiltonian $H(s)$ has gate complexity
\begin{equation}
    \label{eq:hsim_cost_model_prdc}
    \mathrm{Cost}\!\left(e^{-iH(s)t}\right)
    =
    \widetilde{O}\!\left(t \|H\|\mathrm{poly}(N_o)\right),
\end{equation}
where $\|H\| \coloneqq \max_{s\in[s_i,s_{i+1}]}\|H(s)\|$ and $\mathrm{poly}(N_o)$ absorbs block-encoding, term-selection, basis-change, and other method-dependent overheads.

To transport the state along the continuation path, we discretize $s\in[0,1]$ into $N_H$ geometries so that the instantaneous ground-state projector does not rotate too quickly between adjacent points. As in \Cref{sec:das_vs_diss}, we use a Davis--Kahan (DK) adapted mesh controlled by the DK length $C_{\mathrm{DK}}$, which yields
\begin{equation}
    \label{eq:NH_choice_prdc_matched}
    N_H
    =
    \Theta\!\left(\frac{c_E N_o C_{\mathrm{DK}}^2}{\epsilon_E}\right)
    =\Theta\!\left(\frac{c_E C_{\mathrm{DK}}^2}{\epsilon_{\rm int}}\right),
    \qquad \epsilon_{\rm int}\coloneqq\epsilon_E/N_o .
\end{equation}
In chemistry regimes where $\|\partial_s H(s)\| = O(N_o)$ and $\Delta(s)\ge \Delta_{\min}$ along the relevant portion of the path, we use the coarse estimate
\begin{equation}
    \label{eq:CDK_coarse_prdc}
    C_{\mathrm{DK}}
    =
    O\!\left(\frac{N_o}{\Delta_{\min}}\right),
\end{equation}
which is the substitution used in the caption of \Cref{tab:comp_tab}. Thus $N_H=\widetilde{O}(c_E N_o^3/(\Delta_{\min}^2\epsilon_E))$ for total energy precision.

\subsection{Phase randomization}

At a fixed geometry $s$, PR applies $H(s)$ for a random time and averages over that randomness, yielding a dephasing channel in the eigenbasis of $H(s)$:
\begin{equation}
    \label{eq:pr_channel_prdc}
    \mathcal{R}_{s}(\rho)
    =
    \mathbb{E}_{t\sim\mu}\!\left[e^{-iH(s)t}\rho e^{+iH(s)t}
    \right].
\end{equation}
In the eigenbasis $\{\ket{E_k(s)}\}$, off-diagonal terms obey
$\rho_{jk}\mapsto \rho_{jk}\mu(E_j-E_k)$, hence choosing a distribution $\mu$ with
$|\mu(\omega)|\ll 1$ for $|\omega|\gtrsim \Delta_{\min}$ suppresses coherences between eigenstates separated by at least the minimum gap.

The PR traversal algorithm~\cite{Boixo2009Eigenpath} stabilizes transport along the discretized path by using this dephasing as an approximate projective step. In~\cite{Boixo2009Eigenpath}, the discretized path is set by a total eigenstate-path length
\begin{equation}
    L = \int_{0}^1 \| \partial_s |E_0(s) \rangle \| \; ds.
\end{equation}
The number of points in the discretized path is then taken to be
\begin{equation}
    q = O \left(\frac{L^2}{1-p} \right),
\end{equation}
where the ground-state fidelity of the final state $|E_0(1)\rangle$ is at least $p > 0$. Observe that $L$ is bounded above by the Davis-Kahan constant,
\begin{equation}
    L\leq \int_{0}^1 \frac{\| \partial_s H(s)\|}{\Delta(s)} \; ds = C_{DK}.
\end{equation}
Furthermore, energy error is bounded by spectral width, and by extension, the operator norm as
\begin{equation}
    \epsilon_E = |\Tr(H \rho)-E_0| \leq (E_{\max} - E_0) (1-p) \leq 2\|H\| (1-p),
    \label{eq:pr-energy-width-bound}
\end{equation}
where $E_0$ is the ground state energy and $E_{\max}$ is the maximum eigenvalue of $H(1)$. Therefore, the number of discretized path points required by PR is
\begin{equation}
    q = O\left(\frac{C_\mathrm{DK}^2 \|H\|}{\epsilon_E} \right),
\end{equation}
resulting in a finer path discretization under this fidelity-to-energy conversion. This is a deliberately conservative comparison: a more refined PR analysis that exploits the same extensive low-energy window used in dissipative continuation could change the explicit norm and system-size factors, so the table should be read as a comparison of commonly stated sufficient bounds rather than as a tight separation theorem.

The total expected Hamiltonian-evolution time is then
\begin{equation}
    T_\mathrm{PR} = \mathbb{E}\left[ \sum_{i=1}^q |t_i| \right],
\end{equation}
where the gap-resolution requirement sets the evolution-time scale. Any distribution $\mu$ that significantly damps frequencies $|\omega|\gtrsim \Delta_{\min}$ must have time spread on the order of
\begin{equation}
    \label{eq:pr_time_scale}
   \mathbb{E}[|t_i|]
    =
    \Theta\!\left(\frac{1}{\Delta_{\min}}\right)
\end{equation}
(up to polylogarithmic factors depending on the chosen $\mu$). Therefore,
\begin{equation}
    T_\mathrm{PR} = O\left(\frac{C_\mathrm{DK}^2 \|H\|}{\Delta_{\min} \epsilon_E} \right) =O\left( \frac{N_o^2 \|H\|}{\Delta_{\min}^3 \epsilon_E} \right).
\end{equation}
where additional factors of $N_o$ and $\Delta_{\min}$ come from the coarse bound $C_{\mathrm{DK}}=O(N_o/\Delta_{\min})$.

Substituting $t=T_{\mathrm{PR}}$ into the Hamiltonian-simulation model in \cref{eq:hsim_cost_model_prdc} gives an overall algorithmic gate complexity of
\begin{equation}
\label{eq:pr_total_cost_with_dk}
    n_G^\mathrm{PR} = \widetilde{O}\left(
        T_\mathrm{PR} \|H\|\cdot \mathrm{poly}(N_o)
    \right)= \widetilde{O}\left(
        \frac{\|H\|^2}{\Delta_{\min}^3}\cdot \frac{1}{\epsilon_E}\cdot \mathrm{poly}(N_o)
    \right).
\end{equation}
which matches the PR row of \Cref{tab:comp_tab}.

\subsection{Dynamic cooling}

Dynamic cooling alternates a dephasing/phase-mixing step with an explicitly energy-selective unitary procedure, implemented using QSP/QSVT transformations interleaved with a weak perturbation that induces transitions biased toward lower energies~\cite{motlagh2024_cooling}. The scaling used in \Cref{tab:comp_tab} can be recovered by combining (i) the same Hamiltonian-simulation cost model of \cref{eq:hsim_cost_model_prdc}, (ii) the same DK-based discretization $N_H=\Theta(c_E C_{\mathrm{DK}}^2/\epsilon_{\rm int})=\Theta(c_E N_o C_{\mathrm{DK}}^2/\epsilon_E)$, and (iii) the DC iteration complexity at each geometry. In the DC analysis, the number of calls to the underlying $e^{-iHt}$ primitive at a fixed geometry is governed by an effective dimension parameter $d$ that controls the schedule or energy-resolution regime of the procedure. For active-space electronic-structure instances along a continuation path, the relevant extensive energy window is proportional to the number of active spatial orbitals, so this parameter is taken to scale as
\begin{equation}
    \label{eq:d_equals_No_prdc}
    d = O(N_o),
\end{equation}
and the number of Hamiltonian-evolution calls per geometry is
\begin{equation}
    \label{eq:dc_num_calls_prdc}
    N_{\mathrm{calls}}^{\mathrm{DC}}
    =
    \widetilde{O}\!\left(d^{3/2}\right)
    =
    \widetilde{O}\!\left(N_o^{3/2}\right),
\end{equation}
which is the convention used when translating the dynamic-cooling iteration count to the chemistry setting considered here. Each call is an evolution for a time on the order of $1/\Delta_{\min}$ to resolve the relevant spectral features (analogous to \cref{eq:pr_time_scale}), so the
per-geometry DC gate complexity is
\begin{equation}
    \label{eq:dc_cost_per_geometry}
    \widetilde{O}\!\left(
        N_{\mathrm{calls}}^{\mathrm{DC}}\cdot
        \frac{\|H\|}{\Delta_{\min}}\cdot
        \mathrm{poly}(N_o)
    \right)
    =
    \widetilde{O}\!\left(
        \frac{\|H\|}{\Delta_{\min}}\cdot
        N_o^{3/2}\cdot
        \mathrm{poly}(N_o)
    \right).
\end{equation}

Transporting along the path multiplies this cost by $N_H$. Using \cref{eq:NH_choice_prdc_matched} and the coarse estimate \cref{eq:CDK_coarse_prdc} gives $N_H=\widetilde{O}(c_EN_o^3/(\Delta_{\min}^2\epsilon_E))$ for total energy precision, hence
\begin{equation}
\label{eq:dc_total_cost_with_dk}
    n_G^{\mathrm{DC}}
    =
    \widetilde{O}\!\left(
        \frac{\|H\|}{\Delta_{\min}}\cdot
        \frac{N_o^3}{\Delta_{\min}^2}\cdot
        \frac{1}{\epsilon_E}\cdot
        N_o^{3/2}\cdot
        \mathrm{poly}(N_o)
    \right)
    =
    \widetilde{O}\!\left(
        \frac{\|H\|}{\Delta_{\min}^{3}}\cdot
        \frac{1}{\epsilon_E}\cdot
        \mathrm{poly}(N_o)
    \right),
\end{equation}
where the $N_o^{3/2}$ factor is absorbed into $\mathrm{poly}(N_o)$ in the table presentation. This recovers the DC row in \Cref{tab:comp_tab} while making explicit the internal accounting: the dominant system-size penalty in DC arises because the primitive $e^{\pm iH}$ is called $\widetilde{O}(N_o^{3/2})$ times per geometry, in addition to the $N_H$ geometries required by the DK discretization.

\section{Ground-state overlap from occupied-space principal angles}
\label{sec:prop1}
Along a nuclear continuation path, one typically solves an independent Hartree-Fock, CASSCF, DMRG-SCF, or related electronic-structure problem at each geometry. This procedure optimizes the electronic energy locally at each point $\mathbf{R}(s_i)$, but it does not enforce continuity of the orbital representation between successive geometries. In particular, Hartree-Fock determines only the occupied subspace at a given geometry, not a unique choice of orbitals spanning that subspace: any unitary rotation among occupied orbitals leaves the Slater determinant (and the Hartree-Fock energy) unchanged, so the numerical orbitals are defined only up to an arbitrary gauge within the occupied space. In order to obtain the maximum overlap between mean-field states between successive geometries $\mathbf{R}(s_k)$ and $\mathbf{R}(s_{k+1})$, we therefore solve for the unitary rotation that maximizes orbital overlap. This motivates our Procrustes alignment procedure, which fixes a consistent gauge along the continuation path. After alignment, the resulting family of Hamiltonians and ground states exhibits the smooth drift behavior required by \cref{thm:reaction-path-runtime_informal}.

We emphasize that Procrustes alignment is not an active-space selection procedure. It does not assert that a fixed reactant-side active space remains chemically valid along the full continuation path. Rather, active-space selection is part of the surrounding classical electronic-structure workflow, and may use automated or corresponding-active-space methods to define compatible active orbital spaces along the path~\cite{SteinReiher2016AutoCAS,BensbergReiher2023CorrespondingActiveSpaces}. Procrustes alignment then fixes only the residual orbital gauge within, or after embedding into, a common active-space representation. The theorem applies to this selected embedded representation. If no such locally compatible representation exists for the chemical channel of interest, the theorem does not claim efficient preparation.

The Procrustes alignment algorithm utilizes \Cref{prop:gs-overlap-principal-angles} to optimize orbital gauges by computing the singular value decomposition (SVD) of $M_{\rm occ}^{(k)} = U_{\rm occ}^{(k)} \Sigma_{\rm occ}^{(k)} V_{\rm occ}^{(k)\dagger}$. From the singular values $\{\sigma_i^{(k)}\}$, we can directly measure the overlaps in~\cref{eq:HF-overlap-product-compact}. The alignment can be initialized from any reference geometry whose active-space frame has been chemically validated. In the TS applications targeted here, it is natural to initialize from the TS-centered active-space frame and propagate the orbital gauge outward along the selected continuation path toward a tractable warm-start geometry. The quantum algorithm then transports the prepared state forward along the same aligned representation. At each neighboring pair of geometries, we solve the Procrustes problem
\begin{equation}
    R^{(k)} \coloneqq \arg\max_R
    \mathrm{Re}\{ \mathrm{Tr}\big(R^\dagger M_{\rm occ}^{(k)}\big)\} = U_{\rm occ}^{(k+1)} (V_{\rm occ}^{(k+1)})^\dagger.
    \label{eq:procrustes}
\end{equation}
Then, the occupied molecular orbital coefficient matrix for geometry $\mathbf{R}(s_k)$ is rotated by $R^{(k)}$, $C^\text{occ}_{k} \gets C^\text{occ}_{k} R^{(k)}$. The matrix $M_{\rm occ}^{(k)}$ is given by
\begin{equation}
    M_{\rm occ}^{(k)}
    =
    \big(C_{k+1}^{\rm occ}\big)^\dagger S_{k,k+1} C_k^{\rm occ}.
\end{equation}
where the \textit{AO cross-overlap matrix} $S_{k,k+1}\in\mathbb{C}^{N_k\times N_{k+1}}$ is defined entry-wise by
\begin{equation}
    \bigl(S_{k,k+1}\bigr)_{\mu\nu}
    \coloneqq 
    \langle \chi^{(k)}_\mu | \chi^{(k+1)}_\nu \rangle
    =
    \int_{\mathbb{R}^3} d\mathbf{r}
    \chi^{(k)}_\mu(\mathbf{r})^{*}\chi^{(k+1)}_\nu(\mathbf{r}),
\end{equation}
where $\{\chi^{(k)}_\mu(\mathbf{r})\}_{\mu=1}^{N_k}$ and $\{\chi^{(k+1)}_\nu(\mathbf{r})\}_{\nu=1}^{N_{k+1}}$ denote the (possibly geometry-dependent) atomic-orbital (AO) basis functions associated with $\mathbf{R}(s_k)$ and $\mathbf{R}(s_{k+1})$, respectively.

\begin{proposition}[Ground-state overlap from occupied-space principal angles]
    \label{prop:gs-overlap-principal-angles}
    Consider $\mathbf{R}(s_k)$ and $\mathbf{R}(s_{k+1})$, two consecutive geometries along the continuation path. Define the atomic orbital (AO) cross-overlap matrix $S_{k,k+1}$, and let $\{\chi^{(k)}_\mu(\mathbf{r})\}_{\mu=1}^{N_k}$ and $\{\chi^{(k+1)}_\nu(\mathbf{r})\}_{\nu=1}^{N_{k+1}}$ denote the (possibly geometry-dependent) atomic-orbital (AO) basis functions. Let the \textit{AO cross-overlap matrix} be denoted $S_{k,k+1}\in\mathbb{C}^{N_k\times N_{k+1}}$. Let $C_k^{\rm occ},C_{k+1}^{\rm occ}\in\mathbb C^{n_{\rm AO}\times N_{\rm occ}}$ collect orthonormal occupied molecular orbitals at $\mathbf{R}(s_k)$ and $\mathbf{R}(s_{k+1})$, and let
    \begin{equation}
        M_{\rm occ}^{(k)}
        =
        \big(C_{k+1}^{\rm occ}\big)^\dagger S_{k,k+1} C_k^{\rm occ}.
    \end{equation}
    Denote the singular values of $M_{\rm occ}^{(k)}$ by $\sigma_i^{(k)} = \cos\theta_i^{(k)}$ with principal angles $\theta_i^{(k)}\in[0,\pi/2]$, $i=1,\dots,N_{\rm occ}$. Let $|\Phi_0^{(k)}\rangle$ and $|\Phi_0^{(k+1)}\rangle$ be the corresponding closed-shell single-determinant Hartree-Fock states. Then the squared overlap between these two single-determinant states is
    \begin{equation}
        \bigl|\langle \Phi_0^{(k+1)} | \Phi_0^{(k)}\rangle\bigr|^2
        =
        \prod_{i=1}^{N_{\rm occ}} \cos^2\!\theta_i^{(k)}.
        \label{eq:HF-overlap-product-compact}
    \end{equation}
\end{proposition}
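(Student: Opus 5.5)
The plan is to reduce the overlap of the two determinants to a single determinant of the occupied-orbital overlap matrix, and then read that determinant off from the singular values of $M_{\rm occ}^{(k)}$.

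\textbf{Step 1: the L\"owdin formula.} For two Slater determinants built from orthonormal spin-orbitals $\{\phi_i\}$ and $\{\psi_j\}$, the overlap is $\det\bigl[\langle \psi_i|\phi_j\rangle\bigr]$. This follows by expanding both antisymmetrized products and using the Leibniz formula. I will write each molecular orbital as $\phi^{(k)}_i=\sum_\mu (C_k^{\rm occ})_{\mu i}\chi^{(k)}_\mu$, and similarly at $k+1$. The matrix of orbital overlaps is then exactly
\[
    \bigl[\langle \phi^{(k+1)}_i|\phi^{(k)}_j\rangle\bigr]_{ij}
    =
    \bigl[(C_{k+1}^{\rm occ})^\dagger S_{k,k+1} C_k^{\rm occ}\bigr]_{ij}
    =
    \bigl(M_{\rm occ}^{(k)}\bigr)_{ij},
\]
with the cross-overlap $S_{k,k+1}$ handling the fact that the two AO bases need not coincide. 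Hence $\langle\Phi_0^{(k+1)}|\Phi_0^{(k)}\rangle=\det M_{\rm occ}^{(k)}$.

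\textbf{Step 2: singular values.} Write the SVD $M_{\rm occ}^{(k)}=U\Sigma V^\dagger$. Then
\[
    \bigl|\det M_{\rm occ}^{(k)}\bigr| = \prod_i \sigma_i^{(k)},
\]
so squaring gives the claimed product $\prod_i\cos^2\theta_i^{(k)}$.

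\textbf{Step 3: the singular values are cosines.} I will check that $M_{\rm occ}^{(k)}$ is the compression $P_{k+1}Q_k$ between isometries onto the two occupied subspaces of $L^2(\mathbb{R}^3)$, where $P_{k+1}$ and $Q_k$ are the partial isometries defined by the orthonormal columns. Its singular values therefore lie in $[0,1]$ and are, by definition, the cosines of the principal angles between those subspaces. This justifies the parameterization $\sigma_i^{(k)}=\cos\theta_i^{(k)}$ with $\theta_i^{(k)}\in[0,\pi/2]$.

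\textbf{Closed-shell convention.} For the closed-shell case the spin-orbital overlap matrix is block diagonal, $M\oplus M$, which would square the spatial product once more. I will resolve this by taking $N_{\rm occ}$ to index occupied spin-orbitals, or equivalently by absorbing spin into the index $i$. This is a convention to fix in the statement rather than a mathematical gap.

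\textbf{Main obstacle.} The substantive step is Step 1, the L\"owdin determinant formula, and in particular keeping the complex conjugations consistent with the bra--ket ordering $\langle\Phi^{(k+1)}|\Phi^{(k)}\rangle$. The determinant is invariant under the occupied-space gauge rotations only up to a phase, which is harmless once we take the modulus squared. Everything after Step 1 is linear algebra.
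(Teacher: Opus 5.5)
Your proposal is correct and follows essentially the same route as the paper: the L\"owdin/Slater--Condon identity $\langle\Phi_0^{(k+1)}|\Phi_0^{(k)}\rangle=\det M_{\rm occ}^{(k)}$, followed by $|\det M_{\rm occ}^{(k)}|^2=\det(M_{\rm occ}^{(k)\dagger}M_{\rm occ}^{(k)})=\prod_i(\sigma_i^{(k)})^2$ from the SVD. Your closed-shell remark is sharper than the paper, which suppresses spin and says the spin-orbital generalization ``only changes notation''; as you note, if $N_{\rm occ}$ counts doubly occupied spatial orbitals, the overlap becomes $(\det M_{\rm occ}^{(k)})^2$ and the right-hand side becomes $\prod_i\cos^4\theta_i^{(k)}$, so indexing by spin-orbitals is the right fix.
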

\begin{proof}
    For brevity, we suppress spin and treat $\{ \phi_p^{(k)} \}_{p=1}^{N_{\rm occ}}$ and $\{ \phi_p^{(k+1)} \}_{p=1}^{N_{\rm occ}}$ as orthonormal one-particle orbitals 
    spanning the occupied spaces at geometries $\mathbf{R}(s_k)$ and $\mathbf{R}(s_{k+1})$, respectively.  The generalization to closed-shell spin-orbital determinants is straightforward and only changes notation.

    Let $|\Phi_0^{(k)}\rangle$ and $|\Phi_0^{(k+1)}\rangle$ be the $N_{\rm occ}$-electron Slater determinants built from these orbitals. In first-quantized notation they are
    \[
        |\Phi_0^{(k)}\rangle
        =
        \frac{1}{\sqrt{N_{\rm occ}!}}
        \phi_1^{(k)} \wedge \cdots \wedge \phi_{N_{\rm occ}}^{(k)},
    \]
    \[
        |\Phi_0^{(k+1)}\rangle
        =
        \frac{1}{\sqrt{N_{\rm occ}!}}
          \phi_1^{(k+1)} \wedge \cdots \wedge \phi_{N_{\rm occ}}^{(k+1)},
    \]
    where $\wedge$ denotes the antisymmetrized wedge product. A standard identity for wedge products (equivalently, the Slater–Condon rule) states that for two sets of one-particle states $\{\psi_i\}_{i=1}^N$ and  $\{\varphi_j\}_{j=1}^N$,
    \begin{equation}
        \big\langle \psi_1 \wedge \cdots \wedge \psi_N
        \big|
        \varphi_1 \wedge \cdots \wedge \varphi_N
        \big\rangle
        =
        \det G, 
        \qquad
        G_{ij} \coloneqq \langle \psi_i | \varphi_j \rangle.
        \label{eq:wedge-inner-product}
    \end{equation}
    Applying~\cref{eq:wedge-inner-product} with
    $\psi_i = \phi_i^{(k+1)}$ and $\varphi_j = \phi_j^{(k)}$ gives
    \begin{equation}
        \langle \Phi_0^{(k+1)} | \Phi_0^{(k)} \rangle
        =
        \det M_{\rm occ}^{(k)},
        \qquad
        M_{{\rm occ},ij}^{(k)}
        \coloneqq
        \langle \phi_i^{(k+1)} | \phi_j^{(k)} \rangle.
        \label{eq:det-overlap}
    \end{equation}
    In the AO basis, the occupied orbitals are $\phi^{(k)}_j = \sum_\mu (C_k^{\rm occ})_{\mu j}\chi^{(k)}_\mu$, $\phi^{(k+1)}_i = \sum_\nu (C_{k+1}^{\rm occ})_{\nu i}\chi^{(k+1)}_\nu$, and the AO cross overlap is  $S_{k,k+1,\nu\mu} = \langle \chi^{(k+1)}_\nu | \chi^{(k)}_\mu \rangle$. Thus
    \[
        M_{\rm occ}^{(k)}
        =
        \big(C_{k+1}^{\rm occ}\big)^\dagger
        S_{k,k+1}
        C_k^{\rm occ},
    \]
    as stated in the proposition.
    
    Let the singular value decomposition of $M_{\rm occ}^{(k)}$ be
    \begin{equation}
        M_{\rm occ}^{(k)}
        =
        U_{\rm occ}^{(k)}
        \Sigma_{\rm occ}^{(k)}
        V_{\rm occ}^{(k)\dagger},
        \qquad
        \Sigma_{\rm occ}^{(k)}
        = \operatorname{diag}(\sigma_1^{(k)},\dots,\sigma_{N_{\rm occ}}^{(k)}),
    \end{equation}
    with singular values $\sigma_i^{(k)} \in [0,1]$. By definition of principal angles between the occupied subspaces
    $\mathcal{W}_k = \operatorname{span}\{\phi_j^{(k)}\}$ and $\mathcal{W}_{k+1} = \operatorname{span}\{\phi_i^{(k+1)}\}$, we can write
    \[
      \sigma_i^{(k)} = \cos \theta_i^{(k)},
      \qquad
      \theta_i^{(k)} \in [0,\pi/2],\quad i=1,\dots,N_{\rm occ}.
    \]
    From~\cref{eq:det-overlap} we have
    \[
        \bigl|\langle \Phi_0^{(k+1)} | \Phi_0^{(k)} \rangle\bigr|^2
        =
        \bigl|\det M_{\rm occ}^{(k)}\bigr|^2
        =
        \det\!\bigl( M_{\rm occ}^{(k)\dagger} M_{\rm occ}^{(k)} \bigr).
    \]
    But $M_{\rm occ}^{(k)\dagger} M_{\rm occ}^{(k)}$ is unitarily similar to $\Sigma_{\rm occ}^{(k)2}$ and thus has eigenvalues $\{(\sigma_i^{(k)})^2\}_{i=1}^{N_{\rm occ}}$.  Therefore
    \[
        \bigl|\langle \Phi_0^{(k+1)} | \Phi_0^{(k)} \rangle\bigr|^2
        =
        \prod_{i=1}^{N_{\rm occ}} (\sigma_i^{(k)})^2
        =
        \prod_{i=1}^{N_{\rm occ}} \cos^2 \theta_i^{(k)},
    \]
    which is exactly~\cref{eq:HF-overlap-product-compact}.
\end{proof}

\section{Theorem 2 (Time support scaling for Gevrey vs.\ periodic square-wave filters)}
\label{sec:thm2}
\begin{theorem}[Time support scaling for Gevrey vs.\ periodic square-wave filters]
    \label{thm:filter-scaling_formal}
    Let $H=\sum_j E_j\ket{E_j}\!\bra{E_j}$ have a nondegenerate ground state $\ket{E_0}$ and gap $\Delta\coloneqq\min_{j>0}(E_j-E_0)>0$. Let $A$ be bounded with $\|A\|\le 1$. Fix a target leakage $0<\varepsilon_{\mathrm{leak}}<1$.
    
    Then there exist two families of filtered jump operators $K^{\mathrm{G}}_{\varepsilon_{\mathrm{leak}}}$ and $K^{\mathrm{SW}}_{\varepsilon_{\mathrm{leak}}}$
    such that
    \[
        \bigl\|K_{\varepsilon_{\mathrm{leak}}}\ket{E_0}\bigr\|\ \le\ \varepsilon_{\mathrm{leak}}.
    \]
    Moreover, writing $\omega_{\max}\coloneqq\max_j(E_j-E_0)\le 2\|H\|$, the required time support (and hence, up to constant discretization factors, the number of distinct real-time evolution calls) can be chosen as:
    \begin{itemize}
        \item \textbf{Gevrey filter.}
        There exists a Gevrey-class filter with essentially compact time support satisfying
        \[
            S_{\mathrm{G}}(\varepsilon_{\mathrm{leak}})
            =O\!\left(\frac{\|H\|}{\Delta}[\ln(1/\varepsilon_{\mathrm{leak}})]^{\alpha}\right),
            \qquad \alpha > 1,
        \]
        with $\alpha$ depending only on the Gevrey class~\cite{ding2024single}.
        
        \item \textbf{Periodic square-wave Fourier filter.}
        There exists a truncated periodic square-wave (Dirichlet) filter with time support
        \[
            S_{\mathrm{SW}}(\varepsilon_{\mathrm{leak}})
            =O\!\left(\frac{\|H\|}{\Delta}\frac{1}{\varepsilon_{\mathrm{leak}}}\right),
        \]
        up to bounded oscillatory factors in the leakage.
    \end{itemize}
    
    Finally, if the filter is implemented by sampling times on a uniform grid with step size $\Delta t=\Theta(1/\|H\|)$, then the number of distinct calls of the form
    $e^{iHt}(\cdot)e^{-iHt}$ is $\Theta(S(\varepsilon_{\mathrm{leak}}))$.
\end{theorem}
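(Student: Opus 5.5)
The plan is to work in the energy eigenbasis of $H$, where $\langle E_j|K|E_0\rangle=\widehat f(E_j-E_0)\langle E_j|A|E_0\rangle$. The nonzero Bohr frequencies out of the ground state are $\omega_j=E_j-E_0\in[\Delta,\omega_{\max}]$ with $\omega_{\max}\le 2\|H\|$. With the convention that $K$ moves $\ket{E_0}$ to $\ket{E_j}$ through the heating direction, this gives
\[
\bigl\|K\ket{E_0}\bigr\|^2=\sum_{j>0}|\widehat f(\omega_j)|^2\,|\langle E_j|A|E_0\rangle|^2\le \sup_{\omega\in[\Delta,\omega_{\max}]}|\widehat f(\omega)|^2\,\|A\|^2 .
\]
Since $\|A\|\le1$, leakage is controlled entirely by the uniform suppression of $\widehat f$ on the heating band $[\Delta,\omega_{\max}]$. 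The filter must still remain $\Omega(1)$ on the cooling side $[-\omega_{\max},-\Delta]$, so the gap $2\Delta$ between the two sides is the transition region. Both bullets therefore reduce to a one-dimensional approximation question: how much time support a kernel needs for $\widehat f$ to pass from $\approx1$ on $\omega\le-\Delta$ to $\le\varepsilon_{\mathrm{leak}}$ on $\omega\ge\Delta$. The argument is carried out over a frequency window of size $\omega_{\max}$.

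For the Gevrey filter, I would rescale frequencies by $\omega_{\max}$, which maps the spectrum into $[-1,1]$ with transition width $\Delta/\omega_{\max}$. I would then take the construction of \citet{ding2024single}: a Gevrey-class bump $\widehat f$ equal to $0$ on $[\Delta,\infty)$ and $1$ on $(-\infty,-\Delta]$, with transition width $\Delta$. Its inverse Fourier transform decays like $\exp(-c(\Delta|t|)^{1/\alpha})$ for some $\alpha>1$ set by the Gevrey order. Truncating at $|t|\le T$ introduces an error in $\widehat f$ of order that tail, so choosing $T=\Theta(\Delta^{-1}[\ln(1/\varepsilon_{\mathrm{leak}})]^{\alpha})$ makes it at most $\varepsilon_{\mathrm{leak}}$. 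Discretizing with step $\Delta t=\Theta(1/\|H\|)$ avoids aliasing, because $\widehat f$ only needs to be accurate on $|\omega|\le\omega_{\max}$, which is below the Nyquist frequency. Measured in units of this step, the support is $T/\Delta t=O((\|H\|/\Delta)[\ln(1/\varepsilon_{\mathrm{leak}})]^{\alpha})$. This matches the stated $S_{\mathrm G}$ and gives $\Theta(S)$ grid points.

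For the square-wave filter, I would take the $2\pi$-periodic square wave in the rescaled variable $x=\pi\omega/(2\omega_{\max})$: it equals $1$ for $x\in(-\pi,0)$ and $0$ for $x\in(0,\pi)$. Its $N_\omega$-term Fourier partial sum $\widehat f_{N_\omega}$ is the filter, and its time kernel is supported on $N_\omega$ discrete times of spacing $\Theta(1/\omega_{\max})$. The Dirichlet-kernel analysis of the partial sum near the jump at $x=0$ gives
\[
|\widehat f_{N_\omega}(x)|\le \frac{C}{N_\omega |x|}
\]
for $|x|$ bounded away from $\pi$, up to bounded oscillatory (Gibbs) factors. On the heating side $x\ge \pi\Delta/(2\omega_{\max})$, requiring this to be $\le\varepsilon_{\mathrm{leak}}$ gives $N_\omega=\Theta(\omega_{\max}/(\Delta\varepsilon_{\mathrm{leak}}))$, hence $S_{\mathrm{SW}}=O((\|H\|/\Delta)\varepsilon_{\mathrm{leak}}^{-1})$. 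The counting statement then follows directly: on a uniform grid of step $\Theta(1/\|H\|)$, the number of distinct conjugations $e^{iHt}(\cdot)e^{-iHt}$ equals the number of grid points.

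I expect the main obstacle to be making the Gevrey truncation and discretization errors rigorous and uniform. The tail bound for Gevrey kernels must be quoted carefully from \citet{ding2024single}, and the quadrature error must be shown not to reintroduce leakage larger than $\varepsilon_{\mathrm{leak}}$. A related point is that the square-wave bound is only $O(1/\varepsilon_{\mathrm{leak}})$ because of the slow $1/(N|x|)$ decay near the discontinuity, which is the origin of the ``bounded oscillatory factors'' caveat. I would handle it with the explicit Dirichlet-kernel integral bound and treat the endpoints $x\approx\pm\pi$ by choosing the rescaling so that the spectrum stays inside $[-\pi/2,\pi/2]$.
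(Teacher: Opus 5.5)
Your plan follows the paper's architecture. You reduce leakage in the eigenbasis to $\sup|\widehat f|$ over the heating Bohr frequencies, take the Gevrey tail from \cite{ding2024single}, and get an $O(1/(N_\omega\Delta))$-type envelope for the square wave. Where you differ is in how the square-wave envelope is derived. The paper first treats the continuous, rectangularly truncated Hilbert kernel and computes $\widehat f_T(\omega)=\tfrac12-\tfrac1\pi\mathrm{Si}(\omega T)$ in closed form. It bounds this by $2/(\pi\omega T)$ using the tail of $\mathrm{Si}$, and then asserts that the Dirichlet partial sum obtained by periodizing with $\Omega\ge 4\|H\|$ inherits the same envelope. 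You instead bound the partial sum of the periodic square wave directly: a Dirichlet-kernel or Abel-summation estimate gives $|\widehat f_N(x)|\le C/(N\sin x)$ on the side where the square wave vanishes. That bound applies to the object actually implemented. Your rescaling into $[-\pi/2,\pi/2]$ also keeps every Bohr frequency away from the second jump at $x=\pm\pi$. That margin matters, because the paper's guard $\Omega\ge4\|H\|$ allows $\omega_{\max}=\Omega/2$, which is exactly where the partial sum equals $1/2$. The paper's route gives explicit constants for the continuous model; yours is more self-contained for the discrete filter.

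Two points need repair.

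First, your leakage identity, like the paper's, omits the $j=0$ term: $K|E_0\rangle$ also contains $\widehat f(0)\langle E_0|A|E_0\rangle\,|E_0\rangle$. Both of your filters have their transition centered at $\omega=0$. The square-wave partial sum is exactly $1/2$ there, and your Gevrey step runs across $[-\Delta,\Delta]$. Hence $\|K|E_0\rangle\|\ge|\widehat f(0)|\,|\langle E_0|A|E_0\rangle|$, and the claimed bound fails for $A=I$, or for a bilinear $c_i^\dagger c_j+c_j^\dagger c_i$ with nonzero ground-state expectation. The fix is to move the transition to the cooling side. For Gevrey, take $\widehat f=0$ on $[0,\infty)$, as in \cite{ding2024single}; for the square wave, place the jump at $\omega=-\Delta/2$. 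The set to suppress is then $\{0\}\cup[\Delta,\omega_{\max}]$, at distance at least $\Delta/2$ from the transition, and all scalings are unchanged.

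Second, your Nyquist argument for discretizing the Gevrey kernel fails for a $\widehat f$ that equals $1$ on all of $(-\infty,-\Delta]$. Such an $\widehat f$ is not integrable, and its kernel has a $\tfrac12\delta(t)+\tfrac{i}{2\pi}\mathrm{PV}(1/t)$-type singularity. Poisson summation would therefore periodize a half-line step. For example, sampling the unsmoothed kernel on $h\mathbb{Z}$ yields the sawtooth $h\omega/(2\pi)$ on $(0,2\pi/h)$. This is $\Theta(1)$ at $\omega\sim\|H\|$ when $h=\Theta(1/\|H\|)$. So the obstacle is not just a quadrature-error estimate. Take $\widehat f$ compactly supported, e.g.\ also cut off below $-\omega_{\max}-\Delta$, as in the cited construction. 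Then $f$ is an honest Gevrey-decaying function. Sampling with $2\pi/h>2\omega_{\max}+\Delta$ then reproduces $\widehat f$ exactly on all Bohr frequencies, up to the truncation tail you already control.
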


\begin{proof}
    Define, for any integrable (or truncated / discretized) kernel $f$,
    \begin{equation}
        K_f\ \coloneqq\ \int_{\mathbb{R}} f(t)e^{itH}Ae^{-itH}dt.
        \label{eq:Kf-def-tight}
    \end{equation}
    In the eigenbasis of $H$,
    \[
        e^{itH}A e^{-itH}
        =\sum_{j,k} e^{it(E_j-E_k)}A_{jk}\ket{E_j}\!\bra{E_k},
        \qquad A_{jk}\coloneqq\bra{E_j}A\ket{E_k},
    \]
    so
    \begin{equation}
        K_f
        =\sum_{j,k}\widehat f(E_j-E_k)A_{jk}\ket{E_j}\!\bra{E_k},
        \label{eq:Kf-matrix-tight}
    \end{equation}
    where $\widehat f(\omega)=\int_{\mathbb{R}} f(t)e^{it\omega}dt$ (same Fourier convention as in the square-wave derivation). Applying to $\ket{E_0}$ gives
    \[
        K_f\ket{E_0}
        =\sum_{j>0}\widehat f(E_j-E_0)A_{j0}\ket{E_j}.
    \]
    Hence
    \begin{equation}
        \|K_f\ket{E_0}\|
        \le \Big(\sup_{\omega\ge \Delta}|\widehat f(\omega)|\Big)\|A\ket{E_0}\|
        \le \sup_{\omega\ge \Delta}|\widehat f(\omega)|.
        \label{eq:leak-reduction-tight}
    \end{equation}
    Thus it suffices to construct $\widehat f$ with $\sup_{\omega\ge\Delta}|\widehat f(\omega)|
    \le \varepsilon_{\mathrm{leak}}$.
    
    Let $\mathrm{Si}(x)=\int_0^x \frac{\sin u}{u}du$. For $x>0$,
    \[
        \frac{\pi}{2}-\mathrm{Si}(x)=\int_x^\infty \frac{\sin u}{u}du
        =\frac{\cos x}{x}-\int_x^\infty \frac{\cos u}{u^2}du,
    \]
    so
    \begin{equation}
        \Big|\frac{\pi}{2}-\mathrm{Si}(x)\Big|
        \le \frac{1}{x}+\int_x^\infty \frac{1}{u^2}du
        =\frac{2}{x}.
        \label{eq:Si-tail-bound}
    \end{equation}
    
    For the one-sided step $\widehat f(\omega)=\mathbf 1_{\{\omega<0\}}$, with convention $\widehat f(\omega)=\int f(t)e^{i\omega t}\,dt$, its inverse transform is the Hilbert kernel $f(t)=\frac{1}{2}\delta(t)+\frac{i}{2\pi}\mathrm{PV}\left(1/t\right)$, where $\mathrm{PV}$ denotes the Cauchy principal value. Rectangular truncation $f_T(t)=\left(\frac{1}{2}\delta(t)+\frac{i}{2\pi}\mathrm{PV}\left(1/t\right)\right)\mathbf{1}_{\{|t|\le T\}}$ yields the closed form
    \begin{equation}
        \widehat f_T(\omega)=\frac12-\frac{1}{\pi}\mathrm{Si}(\omega T).
        \label{eq:fhatT-Si-tight}
    \end{equation}
    For $\omega>0$ (the upward Bohr-frequency side relevant to leakage from the ground state), combine~\cref{eq:Si-tail-bound} with~\cref{eq:fhatT-Si-tight} to get the uniform bound 
    \begin{equation}
        |\widehat f_T(\omega)|
        =\frac{1}{\pi}\Big|\frac{\pi}{2}-\mathrm{Si}(\omega T)\Big|
        \le \frac{2}{\pi\omega T}.
        \label{eq:rect-uniform}
    \end{equation}
    Therefore, for a gap margin $\Delta>0$,
    \begin{equation}
        \sup_{\omega\ge\Delta}|\widehat f_T(\omega)|
        \le \frac{2}{\pi\Delta T}.
        \label{eq:rect-gap-bound}
    \end{equation}
    (The asymptotic refinement
    $\widehat f_T(\omega)=\frac{\cos(\omega T)}{\pi\omega T}+O((\omega T)^{-2})$ explains the bounded oscillatory factor;~\cref{eq:rect-gap-bound} is the clean envelope.) Choosing $T=\Theta\!\big(1/\Delta\varepsilon_{\mathrm{leak}}\big)$ makes the RHS $\le \varepsilon_{\mathrm{leak}}$, and then~\cref{eq:leak-reduction-tight} gives $\|K_f\ket{E_0}\|\le \varepsilon_{\mathrm{leak}}$.
    
    Now implement the filter by periodizing in frequency with period $\Omega$ and truncating to $N_\omega$ Fourier modes (Dirichlet partial sum). Choose
    \begin{equation}
        \Omega\ \ge\ 4\|H\|\ \ \ \Rightarrow\ \ \ \omega_{\max}\le 2\|H\|<\Omega/2,
        \label{eq:alias-guard}
    \end{equation}
    so that all Bohr frequencies $E_j-E_k$ lie strictly inside the principal interval
    $(-\Omega/2,\Omega/2)$ (no wrap-around/aliasing in the relevant band).
    Let $\delta t\coloneqq2\pi/\Omega$ and
    \[
        T_{N_\omega}\coloneqq\Big(N_\omega+\tfrac12\Big)\delta t.
    \]
    As in the derivation, the Dirichlet kernel rescales to the sinc kernel away from $\omega=0$, so the truncated periodic square-wave filter converges (for fixed $\omega$ and, on any band $|\omega|\ge \Delta$, uniformly up to constants) to the rectangularly truncated filter
    with horizon $T_{N_\omega}$. In particular, the same envelope bound applies:
    \begin{equation}
        \sup_{\omega\ge\Delta}|\widehat f^{(\Omega)}_{N_\omega}(\omega)|
        \ \le\ \frac{1}{\Delta T_{N_\omega}}.
        \label{eq:dirichlet-envelope}
    \end{equation}
    Thus it suffices to choose $T_{N_\omega}=\Theta\!\big(\frac{1}{\Delta\varepsilon_{\mathrm{leak}}}\big)$. Finally, the implementation uses time samples $t_n=n\Delta t$ for $|n|\le N_\omega$, so the
    number of distinct real-time evolution calls is $2N_\omega+1=\Theta(N_\omega)$, and the number of modes is bounded above and below by
    \[
        N_\omega\ = \Theta\left(\frac{T_{N_\omega}}{\Delta t}\right)
        =\Theta(\Omega T_{N_\omega})
        =\Theta\!\left(\frac{\|H\|}{\Delta}\frac{1}{\varepsilon_{\mathrm{leak}}}\right),
    \]
    where we used $\Omega=\Theta(\|H\|)$ as in~\cref{eq:alias-guard}. This proves the stated $S_{\mathrm{SW}}$ scaling.
    
    The Gevrey part is an existence statement:~\cite{ding2024single} constructs Gevrey-class filters that approximate the one-sided step outside a spectral margin while keeping the time-domain kernel essentially supported on a window of length $O\!\big(\frac{\|H\|}{\Delta}[\ln(1/\varepsilon_{\mathrm{leak}})]^\alpha\big)$. Applying the same reduction~\cref{eq:leak-reduction-tight} yields the leakage guarantee.
\end{proof}

\section{Reducing the Cost of Simulating Dissipative Dynamics via qDRIFT-inspired Jump Sampling}
\label{sec:qdrift}

It is costly to implement the total Lindblad generator $\mathcal{L} = \sum_a \mathcal{L}_a$ at once as this requires a cost at least as much as implementing the first order product formula $\prod_a \exp(\tau  \mathcal{L}_a)$. Therefore, we simulate dissipative continuation generated by a sum of Lindblad terms by random sampling one term per short time step. Concretely, we write
\[
    \overline{\mathcal L} = \sum_{a\in\mathcal A} \mu_a\mathcal L_a,
    \qquad \sum_{a\in\mathcal A} \mu_a = 1,
\]
and define the single-step randomized channel
\begin{equation}
    \mathcal E_\tau
    \coloneqq
    \mathbb E_{a\sim \mu}\!\left[e^{\tau \mathcal L_a}\right]
    =
    \sum_{a\in\mathcal A} \mu_a e^{\tau \mathcal L_a}.
\end{equation}

This randomized splitting is the open-system analouge of the qDRIFT protocol for Hamiltonian simulation~\cite{Campbell2019}. Note that by a Taylor expansion,
\[
    \mathcal E_\tau
    =
    \mathbb I + \tau \overline{\mathcal L} + O(\tau^2),
\]
so $\mathcal E_\tau$ matches $e^{\tau\overline{\mathcal L}}$ to first order.  For total evolution time $t$ we take $N_T$ steps with $\tau=t/N_T$ and approximate
\begin{equation}
    e^{t\overline{\mathcal L}}
    \approx
    \bigl(\mathcal E_{t/N_T}\bigr)^{N_T}.
    \label{eq:avg_channel}
\end{equation}

Following~\cite{Campbell2019}, the resulting error can be characterized in the diamond norm and scales as
\[
    \bigl\|(\mathcal E_{t/N_T})^{N_T} - e^{t\overline{\mathcal L}}\bigr\|_\diamond
    =
    O\!\left(\frac{\lambda^2 t^2}{N_T}\right),
    \qquad
    \lambda \coloneqq \sum_{a\in\mathcal A}\|\mathcal L_a\|_\diamond,
\]
up to constants that depend on the chosen generator norm and the splitting. We have the freedom to control the value of $\lambda$ by rescaling the simulation time, $\exp(t \mathcal{L}_a)=\exp((t/c) \times c\mathcal{L}_a)$ for some $c > 0$, then $\lambda =\sum_a c\|\mathcal{L}_a\|_\diamond$.

From an implementation standpoint, each randomized step applies only one elementary dissipative update (one chosen jump channel) rather than deterministically applying all $|\mathcal A|$ terms per Trotter step.  In our construction, this corresponds to repeatedly implementing primitives of the form
$e^{-i \frac{\sqrt{\tau}}{2}\sigma_k\otimes A}$.
Therefore, sampling a single jump operator per step can yield a lower per-step cost than deterministic first- or second-order Trotterization, while exhibiting the same asymptotic $O(\lambda^2 t^2/N_T)$ error scaling in the purely dissipative case. Note that, $\lambda$ scales with the aggregate strength of the jump channels included in the decomposition. Consequently, when our implementation uses only a small number of jump operators, the increase in $\lambda$ is modest and does not change the overall asymptotic scaling of the method.

\section{Logical Quantum Circuit Construction for Dissipative Ground State Preparation}
\label{sec:logical_circuit}

The approach of \Cref{sec:qdrift} reduces the cost of implementing the entire dissipative continuation over all jump operators simultaneously to the cost of a single application of an individual randomly sampled jump operator $\mathrm{Cost}(e^{\tau \mathcal{L}_a})$.
Each generator has Lindblad form
\[
    \mathcal L_a(\rho)  = K_a \rho K_a^\dagger - \tfrac12 \{K_a^\dagger K_a,\rho\},
\]
where $\{K_a\}$ are the filtered jump operators defined in \cref{def:filtered_jumps,eq:filtered_jump_prelim} and we implement a circuit that realizes a CPTP map $  e^{\tau\mathcal L_a} $.

For a total time $t=N_T\tau$, one realization of the randomized evolution is the \textit{expected (average) channel} is the mixture over all realizations $\mathcal{E}_{\tau}$.
Define the averaged Lindbladian
\begin{equation}
    \overline{\mathcal{L}} \coloneqq \mathbb{E}_{a\sim\mu}\mathcal{L}_a
    =  \mathbb{E}_{a}\left[K_a (\cdot) K_a^\dagger - \tfrac12\{K_a^\dagger K_a,\cdot\}\right].
\end{equation}
Then the ideal expected evolution at time $t$ is $e^{t\overline{\mathcal{L}}}$ in~\cref{eq:avg_channel}.

In practice, we can reduce the required quantum resources to implement the dissipative continuation over the entire jump set $\mathcal{A}$ by randomly sampling jump operators for each time step $e^{\tau \mathcal{L}_{a}}$ where $A_a$ are drawn i.i.d.  Ding et al.~\cite{ding2024single} provide an efficient circuit for implementing the dissipative continuation of a single filtered jump operator $K_a$ that only requires a single ancilla in addition to the system register (excluding ancilla for block encodings). The operator $W(\sqrt{\tau})$ acting on $n$ system qubits and one ancilla qubit that approximates a time step of dissipative continuation for the target Lindbladian $\mathcal{L}$ to second order,
\begin{equation}
    \mathcal{E}_\tau(\rho)
    \coloneqq
    \operatorname{tr}_{\rm anc}\!\left[
    W(\sqrt{\tau})
    \left(|0\rangle \langle 0| \otimes \rho\right)
    W(\sqrt{\tau})^\dagger
    \right]
    = e^{\tau \mathcal{L}}[\rho] + O(\tau^2).
\end{equation}
The quantum resources required to implement the quantum channel generated by $W(\sqrt{\tau})$ are determined by the choice of filter function $f(t)$ that removes heating transitions from the unfiltered jump operators $A_a$. \cref{thm:filter-scaling} establishes a link between two different classes of filter functions and the cost of implementing $W(\sqrt{\tau})$.

Our circuit construction is inspired by the Monte Carlo–style quantum algorithm for ground-state preparation in the early fault-tolerant regime~\cite{ding2024single}, which features a Lindbladian whose target state is stationary. In the paper, the authors design a single-ancilla circuit that simulates a Lindblad dynamics step by splitting the evolution into a purely Hamiltonian part and a purely dissipative part with a first-order Trotter method with just one extra qubit. Each block is a short interaction between the ancilla and a simple system operator, with the ancilla axis encoding the complex filter coefficient, so the whole channel is built from a sequence of elementary ancilla rotations and controlled interactions. We adapt and extend this in our own circuit design along the continuation path.

A Lindbladian naturally decomposes as $\mathcal{L} = \mathcal{L}_H + \mathcal{L}_a$, where $\mathcal{L}_H$ denotes the coherent part and $\mathcal{L}_a$ the dissipative part. Since the coherent factor does not contribute to energy decrement, we neglect this direct evolution in the cooling analysis and focus on the evolution of the dissipative part for our cooling analysis.  However, it should be noted that the dissipative cooling primitive is still strongly dependent on the Hamiltonian: the jump operators are filtered through the time evolution of the Hamiltonian (\cref{eq:filtered_jump_prelim}). Discretizing the evolution into time steps of duration $\tau$, the dissipative continuation generated by $\mathcal{L}_a$ can be written as
\[
    e^{t\mathcal{L}_a} \approx \left(e^{\tau \mathcal{L}_a}\right)^{N_T}.
\]
Therefore, we seek an efficient quantum circuit implementation of the dissipative update $e^{\tau \mathcal{L}_a}$ at each time step. We can follow the recent work~\cite{Ding2024simulating} in which the authors show that Stinespring dilation can be used to implement dissipative continuation via Hamiltonian simulation and measurement.

Adopting a periodic square-wave filter offers a concrete implementation advantage for simulating $e^{-iK_f\sqrt{\tau}}$ in our scheme. Periodizing the one-sided step in frequency ensures that the corresponding time-domain kernel is a discrete comb rather than a continuous function, so the filtered jump operator $K_f$ naturally decomposes as 
\begin{equation}
    \label{filteredjump}
    \int_{-\infty}^{\infty} f(s)A(s)ds = K^{\text{SW}} \coloneqq \sum_{k\in\mathbb Z} c_k  A  \big(-kT\big)
\end{equation}
where $A  \big(-kT\big) =  e^{-iH(kT) } A  e^{iH(kT) }$ with equally spaced time shift, and $c_k$ are the Fourier coefficients.

Without any discretization error, the integral $K_f$ can be exactly expressed as $K^{\text{SW}}$ and its truncated form at $|k|\le N_{\omega}$ corresponds to square-wave filter defined as $K^{\text{SW}}_{N_{\omega}}$ which approximates the dilated Hermitian jump operator of $K_f$ using one ancilla qubit as follows

\begin{equation}
    \tilde{K}^{\mathrm{SW}}_{N_\omega} =\left(\begin{array}{cc}
    0 & K_{N_\omega}^{{\mathrm{SW}}^{\dagger}} \\
    K_{N_\omega}^{\mathrm{SW}} & 0
    \end{array}\right)=: \sum_{|k|\le N_\omega} \widetilde{H}_k .
    \label{eq:dilated}
\end{equation}
with target leakage $\varepsilon_{\mathrm{leak}}$, i.e., $||K^{\text{SW}}_{N_{\omega}}|E_0\rangle||\le \varepsilon_{\mathrm{leak}}$. The details of this error bound can be found in the proof of \Cref{thm:filter-scaling} in~\Cref{sec:thm2}.

\Cref{eq:dilated} can further be factorized as
\begin{equation}
    H_k=\left(
    \begin{array}{cc}
        0 & c_k^* A(-kT) \\
        c_k A(-kT) & 0
    \end{array}
    \right)=\sigma_k \otimes A(-kT) .
\end{equation}
For the square wave, the Fourier coefficients $c_k$ have phases restricted to $\{0,\pm \pi/2\}$, implying that the ancilla generators $\sigma_k$ lie on the $\{X,Y\}$ axes of the Bloch sphere. This yields an even simpler and more uniform circuit architecture, each step of the evolution consists of a single ancilla–system \textit{`jump'} at a randomly chosen discrete time.

Let $\widetilde{A}_k(\sqrt{\tau})$ be the $k$-th kick. We implement
\begin{align}
    \widetilde{A}_k(\sqrt{\tau})&\coloneqq e^{-i \frac{\sqrt{\tau}}{2} \sigma_k \otimes A}  \\\nonumber
    &  = 
    \bigl(V_{a i}\otimes I_j\bigr)
    i\mathrm{SWAP}_{ij}(\theta_k)
    \bigl(V_{a i}^\dagger\otimes I_j\bigr).
\end{align}
where $V_{a i} = R_z^{(a)}(\phi)H_a\mathrm{CZ}(a,i)$, where $H_a$ is a Hadamard gate on the ancilla qubit (\cref{fig:A_circuit}). Finally, we realize an effective Hamiltonian evolution for two consecutive right-ordered product yields $A\left(-kT\right) A\left(-(k+1)T\right)= e^{i H kT} A e^{i HT} A e^{-iH(k+1)T}$.

We arrive at a final architecture of a single $W$ operator as implemented by a quantum circuit with a single jump $a$ sampled from the set of the jump operators $\mathcal{A}$ as follows
\begin{align}
    \label{eq:2ndorderformula}
    &\Bigg[
    \prod_{|k|\le N_\omega}^{\rightarrow}
    \bigl(I \otimes e^{i H kT}\bigr)
    \widetilde{A}_k(\sqrt{\tau})
    \bigl(I \otimes e^{-i H kT}\bigr)
    \Bigg]\\\nonumber
    &\qquad\times
    \Bigg[
    \prod_{|k|\le N_\omega}^{\leftarrow}
    \bigl(I \otimes e^{i H kT}\bigr)
    \widetilde{A}_k(\sqrt{\tau})
    \bigl(I \otimes e^{-i H kT}\bigr)
    \Bigg]  \\\nonumber
    &\qquad =
    (I \otimes e^{-i H S_T})
    W(\sqrt{\tau})
    (I \otimes e^{i H S_T}),
\end{align}
where $S_T \coloneqq N_\omega T$ and
\begin{align}
    W(\sqrt{\tau})
    \coloneqq
    &\Bigg[
       \prod_{|k|\le N_\omega}^{\rightarrow}
       \widetilde{A}_k(\sqrt{\tau})
       \bigl(I \otimes e^{i H T}\bigr)
    \Bigg]\\\nonumber
    &\quad\times
    \Bigg[
       \prod_{|k|\le N_\omega}^{\leftarrow}
       \bigl(I \otimes e^{-i H T}\bigr)
       \widetilde{A}_k(\sqrt{\tau})
    \Bigg].
    \label{eq:W_ckt}
\end{align}
The quantum channel that we acquire after the exact cancellation of the term $   \bigl(I \otimes e^{i H kT}\bigr)$ with the next $\bigl(I \otimes e^{-i H kT}\bigr)$ in \cref{eq:2ndorderformula}

\begin{align}
    \mathcal{E}_\tau(\rho)
    \coloneqq
    \operatorname{tr}_{\rm anc}\!\left[
        W(\sqrt{\tau})(|0\rangle\langle0|\otimes\rho)W(\sqrt{\tau})^\dagger
    \right]
    \approx e^{\tau \mathcal{L}_a}[\rho] + O(\tau^2).
\end{align}
Hence, the total quantum channel averaged over the sampled jump operator $a\in\mathcal{A}$ is
\begin{equation}
    \overline{\mathcal{E}}_\tau \approx \mathbb{E}_a(\mathcal{E}^a_\tau).
\end{equation}
Along the continuation path, if the initial state at the current geometry $s_i$ is $\rho_{s_i}$, then applying the simulated Lindbladian dynamics for the next geometry gives
\begin{equation}
    \rho_{s_{i+1}} \approx \overline{\mathcal{E}}_\tau(\rho_{s_i}),
\end{equation}
which is then used as the initial state for the following geometry.

\begin{figure}
    \centering
    \includegraphics[width=0.9\linewidth]{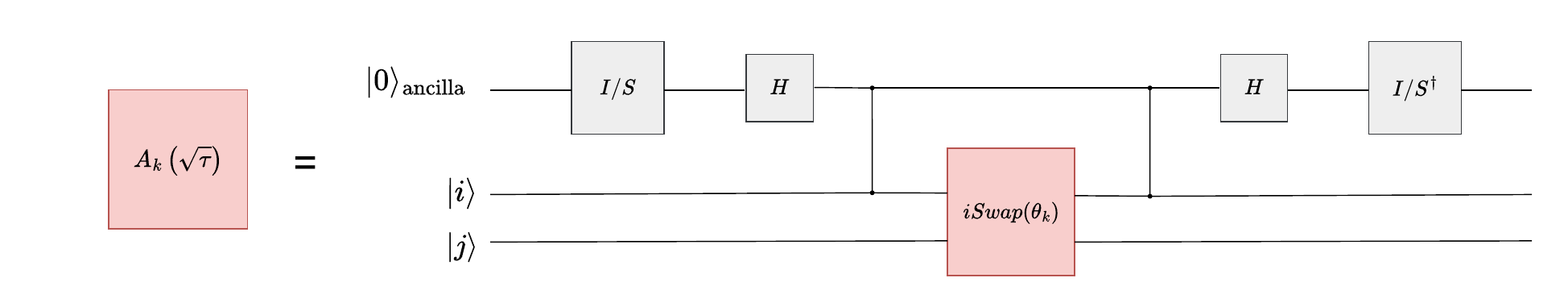}
    \caption{Circuit for realizing $A_k(\sqrt{\tau}) = \exp(- i \sqrt{\tau} \sigma_k \otimes A_a/ 2) = \exp(- i \sqrt{\tau} \mathrm{Im}(c_k)\sigma_y \otimes A_a/ 2)$ for $A_a = (X_iX_j+Y_i Y_j)/2$ using the square wave filter function which only has purely imaginary Fourier coefficients for $\ell > 0$, when $k = 0$, $A_0(\sqrt{\tau})=\exp(- i \sqrt{\tau} \sigma_x \otimes A_a/ 4)$, where $i, j$ are corresponding fermionic modes. The circuit costs enough $T$ gates to synthesize $1$ $R_z$ gates, 2 acting on the ancilla and the gate decomposition of $ i\mathrm{SWAP}(\theta)$ requires only a single $R_z$ leading to a tiny $T$-gate overhead for implementing local jump operators.}
    \label{fig:A_circuit}
\end{figure}

To implement the jump operator which governs the dissipation part of the Lindbladian Master equation, we follow the structure of~\cite{ding2024single} by introducing single ancilla qubit for the environment as below 
\begin{equation}
    \exp \left(-i \frac{\sqrt{\tau}}{2} \sigma_k \otimes A\left(-kT\right)\right)=\left(I \otimes e^{-i H kT}\right) \underbrace{e^{-i \frac{\sqrt{\tau}}{2} \sigma_k \otimes A} }_{=: \widetilde{A}_k(\sqrt{\tau})}\left(I \otimes e^{i H kT}\right) .
    \label{eqH2}
\end{equation}
where,  $\sigma_0=|c_0|\sigma_x$ when $k=0$ and $|c_k|\sigma_y$ otherwise, encoding how strongly the jump operator $A$ is being applied on the system. The operator can also be written as
\begin{equation}
    \sigma_k = |c_k|\left(\cos\phi_k  \sigma_x + \sin\phi_k  \sigma_y \right) 
\end{equation}
where, $\phi_0 = 0$ and $\phi_k = \pi/2$ for $k\neq 0$, with $\phi_k$ being the phase of the bath correlation defined by the Fourier filter $\hat{f}(\omega)=\sum_{k \in \mathbb{Z}} c_k e^{i\left(\frac{2 \pi}{\Omega}\right) k \omega}.
$ Using the Pauli identity, we can rewrite the expression of $\sigma_k$ as a sequence of rotation gates  
\begin{equation}
    \sigma_k = R_z(\phi) X  R_z(-\phi)  
\end{equation}
Then the operation for an arbitrary $\theta$ could be realized as
\begin{align}
    \label{eq:arbitrarytheta}
    e^{-i \theta \sigma_k \otimes A}
    &= (R_z(\phi)\otimes I)
    e^{-i\theta X \otimes A}
    (R_z(-\phi)\otimes I) \\\nonumber
    &= (R_z(\phi)H\otimes I)
    e^{-i\theta Z\otimes A}
    (H R_z(-\phi)\otimes I) .
\end{align}

We now consider the term $e^{-i\theta Z\otimes A}$, where the operator $Z_{\text{anc}}\otimes A$ induces opposite rotations on the system depending on the ancilla’s $Z$-eigenvalue:
\begin{equation}
    e^{-i\theta Z\otimes A}
    = |0\rangle\!\langle 0| \otimes e^{-i\theta A}
    +
    |1\rangle\!\langle 1|\otimes e^{+i\theta A}
    \Rightarrow\quad
    (CZ_{a,i})
    e^{-i\theta A} 
    (CZ_{a,i})
\end{equation}
This holds if and only if every Pauli term in $A$ anticommutes with $Z_i$, i.e.
\begin{equation}
C Z_{a, i}\left(P_i\right) C Z_{a, i}=Z_a^{\operatorname{parity}\left(P_i\right)} P_i
\end{equation}
where
\begin{equation}
    \operatorname{parity}\left(P_i\right)= \begin{cases}1 & \text { if } P_i \in\{X, Y\} \\ 0 & \text { if } P_i \in\{I, Z\}.\end{cases}
\end{equation}
In our case, we take $A_a = \{A_{ax}, A_{ay}\}$ as a set of Hermitian operators $A_{ax} = c_a^{\dagger}c_b+h.c$ and $A_{ay} = i(c_a^{\dagger}c_b-h.c)$. These two types of operators create a complete set of Hermitian operators. The qubit representation of these two types are $\frac{1}{2}(X_aX_b+Y_aY_b)$ and $\frac{1}{2}(Y_aX_b-X_aY_b)$, respectively. Note that, even if $A_{a,b}$ is a non-local operator i.e. $a$ and $b$ are not neighboring qubits in the system, the above condition still holds as  
$A_{i j}=\frac{1}{2}\left(X_i S_{i, j} X_j+Y_i S_{i, j} Y_j\right)$, where,
$S_{i, j}=Z_{i+1} \cdots Z_{j-1}$ after JW mapping.
\begin{align}
\label{eq:CZACZ}
    C Z_{a, i} A C Z_{a, i} & =\frac{1}{2} C Z_{a, i}\left(X_i X_j\right) C Z_{a, i}+\frac{1}{2} C Z_{a, i}\left(Y_i Y_j\right) C Z_{a, i} \\\nonumber
    & =\frac{1}{2}\left(Z_a \otimes X_i X_j\right)+\frac{1}{2}\left(Z_a \otimes Y_i Y_j\right) \\\nonumber
    & =Z_a \otimes \frac{1}{2}\left(X_i X_j+Y_i Y_j\right) \\\nonumber
    & =Z_a \otimes A .
\end{align}

Following the identity $Ue^{A}U^{\dagger} = e^{Ue^{A}U^{\dagger}}$ and \cref{eq:CZACZ}, we can write
\begin{equation}
    C Z_{a, i}e^{-i\theta A_i} C Z_{a, i}=e^{C Z_{a, i}e^{-i\theta A_i} C Z_{a, i}} = e^{-i\theta Z_{a,i}A_i}
    \label{eq:CZeCZ}
\end{equation}
The generalized FSim gate 
\begin{equation}
    \operatorname{FSim}(\theta, \Phi)=\left(
    \begin{array}{cccc}
        1 & 0 & 0 & 0 \\
        0 & \cos \theta & -i \sin \theta & 0 \\
        0 & -i \sin \theta & \cos \theta & 0 \\
        0 & 0 & 0 & e^{-i \Phi}
    \end{array}\right)
\end{equation}
when set to $\Phi=0$ implements the evolution under $H_{\text {fsim }}=X_i X_j+Y_i Y_j$ i.e.
\begin{equation}
    \mathrm{FSWAP}_{ij}(\theta)
    \coloneqq i\mathrm{SWAP}_{ij}(\theta)=
    \exp\Bigl[ -i\tfrac{\theta}{2}(X_iX_j + Y_iY_j) \Bigr].
\end{equation}
From \cref{eq:arbitrarytheta,eq:CZeCZ}, we can implement the circuit as following
\begin{equation}
    R_z^{(a)}(\phi)H_a
    \mathrm{CZ}(a,i)i\mathrm{SWAP}_{ij}(\theta_k)\mathrm{CZ}(a,i)
    H_aR_z^{(a)}(-\phi).
\end{equation}
where, $\theta = \sqrt{\tau} |c_k|/2$ shown in~\Cref{fig:A_circuit}. Note that, to realize $A_y$, we just need to apply local $Z$-rotations conjugating $i\mathrm{SWAP}_{ij}(\theta,0)$ and the above equation looks like

\begin{equation}
    \exp\bigl(-i\kappa\sigma_k \otimes A_a^{(Y)}\bigr)
    =
    \bigl(I \otimes R_z^{(j)}(\tfrac{\pi}{2})\bigr)
    \exp\bigl(-i\kappa\sigma_k \otimes A_a^{(X)}\bigr)
    \bigl(I \otimes R_z^{(j)}(-\tfrac{\pi}{2})\bigr).
\end{equation}
Operationally, this corresponds to inserting $R_z^{(j)}(\pi/2)$ on fermionic
mode $j$ immediately before the $i\mathrm{SWAP}_{ij}(\theta_k)$ block and
$R_z^{(j)}(-\pi/2)$ immediately after, leaving the rest of the circuit unchanged.

Finally, we realize that the Hamiltonian evolution from two consecutive modes $k$ and $k+1$ nearly cancels. To see that, recall the filtered jump operator is 
\begin{equation}
    K_{N_\omega}
    =
    \sum_{|k|\le N_\omega} c_k A(-kT),
    \qquad
    T = \frac{2\pi}{\Omega},
\end{equation}
with Heisenberg evolution
\begin{equation}
    A(-kT)
    =
    e^{i H kT} A e^{-i H kT}.
\end{equation}
From \cref{eqH2}, the $k$-th Fourier mode can be written as
\begin{equation}
    U_k
    \coloneqq
    \exp\bigl(-i\kappa_k \sigma_k \otimes A(-kT)\bigr)
    =
    \bigl(I \otimes e^{i H kT}\bigr)
    \exp\bigl(-i\kappa_k \sigma_k \otimes A\bigr)
    \bigl(I \otimes e^{-i H kT}\bigr),
\end{equation}
for some $\kappa_k \propto \sqrt{\tau}|c_k|$.  Taking the ordered product of two consecutive unitaries associated with modes $k$ and $k+1$, respectively, yields

\begin{align}
    U_k U_{k+1}
    &= \bigl(I \otimes e^{i H kT}\bigr)
    e^{-i\kappa_k \sigma_k \otimes A}
    \bigl(I \otimes e^{-i H kT}\bigr)
    \bigl(I \otimes e^{i H (k+1)T}\bigr)
    e^{-i\kappa_{k+1} \sigma_{k+1} \otimes A}
    \bigl(I \otimes e^{-i H (k+1)T}\bigr)
    \nonumber\\
    &= \bigl(I \otimes e^{i H kT}\bigr)
    e^{-i\kappa_k \sigma_k \otimes A}
    \underbrace{\bigl(I \otimes e^{-i H kT} e^{i H (k+1)T}\bigr)}_{I \otimes e^{i H T}}
    e^{-i\kappa_{k+1} \sigma_{k+1} \otimes A}
    \bigl(I \otimes e^{-i H (k+1)T}\bigr).
\end{align}
Thus, between neighboring Fourier modes there remains an effective evolution step ($I \otimes e^{i H T}$) of duration $T$. In our square-wave construction that becomes $I \otimes e^{i H (\pi/2)}$., in the direct correspondence with the discrete-time structure implied by the periodic filter.

\section{Logical Quantum Resources Estimation  for Dissipative Ground State Preparation}
\label{sec:resource_estimation_method}

We estimate the resources for our algorithm in terms of $T$-gates and logical qubits by explicitly describing the required quantum circuit for a single dissipative time-step operator $W(\sqrt{\tau})\approx e^{\tau\mathcal L}$. In the intended continuation workflow, a warm start is available at a tractable geometry on the aligned path, and the dissipative map is applied only as needed to restore low-energy weight at each subsequent geometry. Thus the total logical cost is obtained by multiplying the per-step cost of $W(\sqrt{\tau})$ by the number of cooling steps $N_T(s_i)$ required at each geometry and summing over path points.

We use a second-order Strang-splitting implementation of $W(\sqrt{\tau})$ with a square-wave filter truncated to $N_\omega$ Fourier modes. The choice of $N_\omega$ depends on the target spectral resolution and on the relevant gap $\Delta(s)$, so the most demanding geometry typically occurs near the TS region.
Thus, for each geometry $s$, the spectral gap of $H(s)$ is different, and therefore we need the most number of Fourier modes $N_{\omega}$ near the transition state. For systems in the milliHartree gap regime $\Delta \sim 0.001$, it suffices to take $N_\omega \geq 1000$ to achieve a sufficiently low leakage bound. Depending on the choice of $N_{\omega}$, we alternatively repeat the implementation of dilated jump operator $A(\sqrt{\tau})$ and the Hamiltonian simulation circuit according to the \cref{eq:W_ckt}. In the $A(\sqrt{\tau})$ block, the $T$-cost arises from the implementation of $i\mathrm{SWAP}(-2\theta/\pi)$. For a precision $\epsilon \approx 10^{-5}$, the rough count of $T$-counts is about 50. To implement each real-time evolution $e^{\pm iHt}$ using quantum singular value transformation (QSVT), which requires $N_{U_H}$ calls  the block-encoding unitary $U_H$ per $W(\sqrt{\tau})$.
\begin{table*}[h]
    \label{fig:resource_estimate_table}
    \centering
    \begin{tabular}{lccc}
        \hline
        System 
        & Qubits 
        & Toffolis for $W(\sqrt{\tau})$
        & Toffolis for QPE \\
        \hline
        Fe$_2$S$_2$ (30e, 20o)           
        & 466  & $2.56 \times 10^8$    & $3.97 \times 10^7$  \\
        Fe$_4$S$_4$ (54e, 36o)           
        & 873  & $1.23 \times 10^9$    & $1.72 \times 10^8$  \\
        FeMoco (54e, 54o)                
        & 1137 & $1.97 \times 10^9$    & $3.41 \times 10^8$  \\
        FeMoco (113e, 76o)               
        & 1459 & $8.41 \times 10^9$    & $9.99 \times 10^8$  \\
        CPD1-P450X (63e, 58o)           
        & 1150 & $3.02 \times 10^9$    & $4.91 \times 10^8$  \\
        CO$_2$[XVIII] (64e, 56o)         
        & 924  & $1.42 \times 10^9$    & $2.05 \times 10^8$  \\
        CO$_2$[XVIII] (100e, 100o)       
        & 1960 & $9.00 \times 10^9$    & $1.06 \times 10^9$  \\
        CO$_2$[XVIII] (150e, 150o)       
        & 2870 & $2.92 \times 10^{10}$ & $2.81 \times 10^9$  \\
        \hline
    \end{tabular}
    \caption{Qubit counts and Toffoli costs per single dissipative time step $e^{\tau \mathcal{L}(s)}$, implemented by a second-order Strang splitting operator $W(\sqrt{\tau})$ with $N_\omega=1000$ Fourier modes for the filter function, together with QPE Toffoli counts (for $\sigma_{\mathrm{PEA}} = 1$~mHa) using DFTHC+BLISS+SA block encodings for systems~\cite{Low_2025}.}
    \label{tab:block-enc-lindblad-qpe}
\end{table*}

\section{Continuation-path optimization}
\label{sec:appendix_path_optimization}

This appendix records a simple but useful fact: the Davis--Kahan integral
\[
    C_{\mathrm{DK}}
    =
    \int_{0}^{1}\frac{\|\partial_s H(s)\|}{\Delta(s)}\,ds
\]
is controlled by a weighted path-length functional of the nuclear trajectory.  This provides a direct route for connecting a continuation optimization objective (such as the one implemented in our N$_3$ example) to reductions in $C_{\mathrm{DK}}$, and therefore to reductions in the discretization cost $N_H$ in \cref{thm:reaction-path-runtime}.

\begin{lemma}[Chain-rule bound for the DK integrand]
    \label{lem:dk_chain_rule}
    Let $\mathbf{R}:[0,1]\to\mathbb{R}^{3N}$ be a $C^1$ continuation path and define $H(s)=H_{\mathrm{el}}(\mathbf{R}(s))$.
    Assume $H_{\mathrm{el}}$ is Fr\'echet differentiable in $\mathbf{R}$ along the image of the path, with derivative
    $D_{\mathbf{R}}H_{\mathrm{el}}(\mathbf{R}) : \mathbb{R}^{3N}\to \mathsf{Herm}$.
    Define the pointwise operator norm of the Fr\'echet derivative
    \[
        L_H(\mathbf{R})
        \coloneqq
        \sup_{\|v\|=1}\bigl\|D_{\mathbf{R}}H_{\mathrm{el}}(\mathbf{R})[v]\bigr\|.
    \]
    Then for all $s\in[0,1]$,
    \begin{equation}
    \label{eq:dk_pointwise_weighted_speed}
        \frac{\|\partial_s H(s)\|}{\Delta(s)}
        \le
        \frac{L_H(\mathbf{R}(s))}{\Delta(s)}\,\|\mathbf{R}'(s)\|.
    \end{equation}
    Consequently,
    \begin{equation}
    \label{eq:cdk_weighted_length2}
        C_{\mathrm{DK}}
        \le
        \int_{0}^{1}\frac{L_H(\mathbf{R}(s))}{\Delta(s)}\,\|\mathbf{R}'(s)\|\,ds.
    \end{equation}
\end{lemma}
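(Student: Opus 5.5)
The proof is a direct chain-rule argument followed by integration; the only work is justifying each step in the norm sense. First I will compute $\partial_s H(s)$ via the chain rule for Fr\'echet derivatives. Since $\mathbf{R}$ is $C^1$ and $H_{\mathrm{el}}$ is Fr\'echet differentiable along the image of the path, the composition $s\mapsto H_{\mathrm{el}}(\mathbf{R}(s))$ is differentiable and
\[
    \partial_s H(s) = D_{\mathbf{R}}H_{\mathrm{el}}(\mathbf{R}(s))\bigl[\mathbf{R}'(s)\bigr].
\]
This holds in operator norm, which is harmless here because all operators act on the finite-dimensional space $\mathcal{H}$ and all norms are equivalent. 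The operator norm of a derivative in direction $\mathbf{R}'(s)$ is what we then need to control.

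Next I will bound this directional derivative using the definition of $L_H$. If $\mathbf{R}'(s)=0$, then $\partial_s H(s)=0$ and both sides of~\cref{eq:dk_pointwise_weighted_speed} vanish. Otherwise, write $\mathbf{R}'(s)=\|\mathbf{R}'(s)\|\,v$ with $\|v\|=1$. Linearity of $D_{\mathbf{R}}H_{\mathrm{el}}(\mathbf{R}(s))$ then gives
\[
    \|\partial_s H(s)\| = \|\mathbf{R}'(s)\|\,\bigl\|D_{\mathbf{R}}H_{\mathrm{el}}(\mathbf{R}(s))[v]\bigr\| \le L_H(\mathbf{R}(s))\,\|\mathbf{R}'(s)\|.
\]
Dividing by $\Delta(s)>0$, which is positive by the standing non-degeneracy assumption, yields~\cref{eq:dk_pointwise_weighted_speed}.

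Finally, I will integrate the pointwise inequality over $s\in[0,1]$ and use \cref{def:dk_constant} to obtain~\cref{eq:cdk_weighted_length2}. The only delicate point is measurability and integrability of the right-hand integrand. Continuity of $\mathbf{R}'$ and continuity of $s\mapsto\Delta(s)$ make $\|\mathbf{R}'(s)\|/\Delta(s)$ continuous. Lower semicontinuity of $L_H$, being a supremum of continuous functions of $\mathbf{R}$ when $D_{\mathbf{R}}H_{\mathrm{el}}$ is continuous, makes the full integrand measurable. If the right-hand integral is infinite, the bound holds trivially, so nothing more needs to be checked.

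I expect the main (mild) obstacle to be only this regularity bookkeeping, in particular the case where the gap is bounded below only almost everywhere. I would handle it by the same monotone-integration remark, since the pointwise inequality already holds wherever both sides are defined.
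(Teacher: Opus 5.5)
Your proposal is correct and follows the paper's proof step for step: the chain rule gives $\partial_s H(s)=D_{\mathbf{R}}H_{\mathrm{el}}(\mathbf{R}(s))[\mathbf{R}'(s)]$, homogeneity together with the definition of $L_H$ gives the pointwise bound, and dividing by $\Delta(s)>0$ and integrating gives the bound on $C_{\mathrm{DK}}$. Your measurability bookkeeping is not in the paper and is not essential to the argument; note that it relies on continuity of $D_{\mathbf{R}}H_{\mathrm{el}}$, which the statement does not assume, and that under that assumption $L_H$ is actually continuous, not merely lower semicontinuous.
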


\begin{proof}
Let $s\in[0,1]$ be fixed.  By the chain rule in Banach spaces,
\[
    \partial_s H(s)
    =
    D_{\mathbf{R}}H_{\mathrm{el}}(\mathbf{R}(s))[\mathbf{R}'(s)].
\]
Using the definition of $L_H(\mathbf{R}(s))$ and homogeneity in the direction argument,
\[
    \|\partial_s H(s)\|
    =
    \bigl\|D_{\mathbf{R}}H_{\mathrm{el}}(\mathbf{R}(s))[\mathbf{R}'(s)]\bigr\|
    \le
    L_H(\mathbf{R}(s))\,\|\mathbf{R}'(s)\|.
\]
Dividing by $\Delta(s)>0$ gives \cref{eq:dk_pointwise_weighted_speed}.  Integrating both sides from $0$ to $1$ yields \cref{eq:cdk_weighted_length2}.
\end{proof}

\begin{corollary}[Uniform bound and a clean geometry surrogate]
\label{cor:cdk_uniform_surrogate}
    Suppose that along the path image one has
    \[
        L_H(\mathbf{R}(s))\le L_{\max} := \sup_{s\in [0,1]} L_H(\mathbf{R}(s))
        \qquad\text{and}\qquad
        \Delta(s)\ge \Delta_{\min}>0
        \qquad \text{for all } s\in[0,1].
    \]
    Then
    \begin{equation}
    \label{eq:cdk_uniform_length_bound}
        C_{\mathrm{DK}}
        \le
        \frac{L_{\max}}{\Delta_{\min}}
        \int_{0}^{1}\|\mathbf{R}'(s)\|\,ds.
    \end{equation}
    In particular, if $\mathbf{R}$ is parameterized by arc-length (so $\|\mathbf{R}'(s)\|$ is constant), reducing the path length
    $\int_0^1\|\mathbf{R}'(s)\|ds$ reduces this uniform upper bound on $C_{\mathrm{DK}}$ by the same factor. The exact $C_{\mathrm{DK}}$ is controlled by the weighted length in~\cref{eq:cdk_weighted_length2}, so unweighted length minimization should be viewed as a coarse surrogate rather than a guarantee.
\end{corollary}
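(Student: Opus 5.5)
The corollary should follow directly from \cref{lem:dk_chain_rule}, so I will start from the weighted-length bound \cref{eq:cdk_weighted_length2},
\[
C_{\mathrm{DK}}\le\int_0^1\frac{L_H(\mathbf R(s))}{\Delta(s)}\,\|\mathbf R'(s)\|\,ds,
\]
and bound the weight pointwise. By hypothesis $L_H(\mathbf R(s))\le L_{\max}$ for every $s$. Also $\Delta(s)\ge\Delta_{\min}>0$, so $1/\Delta(s)\le 1/\Delta_{\min}$.

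Since all three factors in the integrand are nonnegative, this gives
\[
\frac{L_H(\mathbf R(s))}{\Delta(s)}\,\|\mathbf R'(s)\|\le\frac{L_{\max}}{\Delta_{\min}}\,\|\mathbf R'(s)\|
\]
for all $s$. The integral is monotone, and the constant factor can be pulled out, which yields \cref{eq:cdk_uniform_length_bound}. Integrability of $\|\mathbf R'\|$ is automatic because $\mathbf R$ is $C^1$ on the compact interval $[0,1]$.

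For the \emph{in particular} clause, the right-hand side of \cref{eq:cdk_uniform_length_bound} is the fixed constant $L_{\max}/\Delta_{\min}$ times the Euclidean length $\ell(\mathbf R)=\int_0^1\|\mathbf R'(s)\|\,ds$. Two paths with the same $L_{\max}$ and $\Delta_{\min}$ therefore have uniform bounds in the same ratio as their lengths. Under arc-length parameterization $\|\mathbf R'(s)\|\equiv\ell(\mathbf R)$, so rescaling $\ell$ by a factor $c$ rescales the bound by the same $c$. I will note that length is reparameterization-invariant, so the arc-length assumption only makes the statement concrete.

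The only delicate point is stating clearly what is not proved. The constants $L_{\max}$ and $\Delta_{\min}$ are suprema and infima over the path image. A shorter path may pass through regions with larger $L_H$ or smaller $\Delta$, so the true $C_{\mathrm{DK}}$, which is controlled by the weighted integral in \cref{eq:cdk_weighted_length2}, need not decrease. I will record this caveat rather than prove anything further. I do not anticipate any real obstacle.
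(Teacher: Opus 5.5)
Your proposal is correct and follows the paper's proof: start from the weighted-length bound \cref{eq:cdk_weighted_length2}, use $L_H(\mathbf R(s))\le L_{\max}$ and $1/\Delta(s)\le 1/\Delta_{\min}$ pointwise, then integrate. Your handling of the arc-length clause, which holds $L_{\max}$ and $\Delta_{\min}$ fixed, and of the caveat that the exact $C_{\mathrm{DK}}$ is governed by the weighted length both match how the paper frames this as a coarse surrogate.
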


\begin{proof}
Apply \cref{eq:cdk_weighted_length2} and upper bound the numerator and lower bound the denominator uniformly:
\[
    C_{\mathrm{DK}}
    \le
    \int_0^1 \frac{L_H(\mathbf{R}(s))}{\Delta(s)}\|\mathbf{R}'(s)\|ds
    \le
    \frac{L_{\max}}{\Delta_{\min}}\int_0^1\|\mathbf{R}'(s)\|ds.
\]
\end{proof}

\section{Dissipative Continuation Algorithm Pseudocode}
\label{sec:pseudocode}

\begin{algorithm}
\caption{Dissipative continuation along a classically supplied continuation path}
\label{alg:dissipative_continuation}

\KwIn{
    Classically supplied TS candidate $\mathbf R_{\rm TS}$, tractable warm-start geometry $\mathbf R_0$, continuation path $\mathbf R(s)$ connecting $\mathbf R_0$ to $\mathbf R_{\rm TS}$, target energy error $\epsilon_E$, jump set $\mathcal A$, filter function $\hat f(\omega)$
}
\KwOut{
Approximate transition-state ground state $\rho^\star$ and energy estimate $E^\star$
}

Use the classically supplied TS candidate, warm-start geometry, and continuation path as input to the quantum state-preparation routine\;

Optionally smooth, reparameterize, or locally deform the path by minimizing
\[
J(\mathbf R)
=
\lambda_{\mathrm{len}}\!\int_0^1 \|\mathbf R'(s)\|\,ds
+
\lambda_{\mathrm{curv}}\!\int_0^1 \|\mathbf R''(s)\|^2\,ds
+
\lambda_{\mathrm{CI}}\!\int_0^1 \Phi_{\mathrm{CI}}(\mathbf R(s))\,ds
\]
to obtain a continuation path $\mathbf R(s)$ that remains chemically validated while avoiding sharp projector rotation and known small-gap regions\;

Discretize the continuation path into $\{s_i\}_{i=0}^{N_H-1}$ with $s_0$ denoting the tractable warm-start geometry and $s_{N_H-1}=s^\star$ denoting the target TS geometry\;

For each $s_i$, build the electronic Hamiltonian
\[
H_i = H_{\mathrm{el}}(\mathbf R(s_i))
\]
and embed/align neighboring active orbital spaces to maintain a smooth orbital gauge along the continuation path\;

Prepare an approximate ground state $\rho_0$ for $H_0$ using a tractable classical or hybrid method (e.g., DMRG + dissipative cooling) \;

\For{$i=0$ \KwTo $N_H-2$}{
    Construct filtered jump operators $\{K_a^{(i+1)}\}_{a\in\mathcal A}$ for $H_{i+1}$\;

    Starting from $\rho_i$, apply the dissipative primitive for $t_i$ steps to obtain
    \[
    \rho_{i+1}
    \approx
    \left(e^{\tau \mathcal L_{i+1}}\right)^{t_i}(\rho_i),
    \]
    where
    \[
    \mathcal L_{i+1}[\rho]
    =
    \sum_{a\in\mathcal A}
    \left(
    K_a^{(i+1)}\rho K_a^{(i+1)\dagger}
    -
    \frac12
    \left\{
    K_a^{(i+1)\dagger}K_a^{(i+1)},\rho
    \right\}
    \right).
    \]
}

Set $\rho^\star \leftarrow \rho_{N_H-1}$ and estimate
\[
E^\star = \Tr(H_{N_H-1}\rho^\star).
\]

\Return{$(\rho^\star,E^\star)$}\;

\end{algorithm}

\bibliography{references}
\end{document}